\def\?[#1]{\textbf{[#1]}\marginpar{\Large{\textbf{??}}}}
\newtheorem{thm}{Theorem}
\newtheorem{prop}{Proposition}[section]
\newtheorem{defi}[prop]{Definition}
\newtheorem{ex}{Example}
\newtheorem{lemm}[prop]{Lemma}
\newtheorem{coro}[prop]{Corollary}
\numberwithin{equation}{section}
\newcommand{\nwc}{\newcommand}
\nwc{\ep}{\epsilon}
\nwc{\vareps}{\varepsilon}
\nwc{\Oph}{\operatorname{Op}_\hbar}
\nwc{\ra}{\rangle}
\nwc{\mf}{\mathbf} 
\nwc{\blds}{\boldsymbol} 
\nwc{\ml}{\mathcal} 
\nwc{\defeq}{\stackrel{\rm{def}}{=}}
\nwc{\cE}{\ml{E}}
\nwc{\cN}{\ml{N}}
\nwc{\cO}{\ml{O}}
\nwc{\cP}{\ml{P}}
\nwc{\cU}{\ml{U}}
\nwc{\cV}{\ml{V}}
\nwc{\cW}{\ml{W}}
\nwc{\tU}{\widetilde{U}}
\nwc{\IN}{\mathbb{N}}
\nwc{\IR}{\mathbb{R}}
\nwc{\IZ}{\mathbb{Z}}
\nwc{\IC}{\mathbb{C}}
\nwc{\IT}{\mathbb{T}}
\nwc{\IS}{\mathbb{S}}
\nwc{\tP}{\widetilde{P}}
\nwc{\tPi}{\widetilde{\Pi}}
\nwc{\tV}{\widetilde{V}}
\nwc{\rest}{\restriction}
\title{Gaussian free fields and Riemannian rigidity.}
\author{Nguyen Viet Dang}
\date{}
\begin{document}

\maketitle

\begin{abstract}
In the present paper, 
we show that on a compact Riemannian manifold $(M,g)$ 
of dimension $d\leqslant 4$, 
the renormalized partition function
$Z_g(\lambda)$ of a massive Gaussian Free Field 
determines the Laplace spectrum of $(M,g)$
hence imposes some strong geometric constraints on the Riemannian structure of $(M,g)$. 
In any finite dimensional family of Riemannian metrics of negative sectional curvature
bounded from below and above and whose isometry group is trivial, 
there is only a \textbf{finite number of isometry classes} of metrics with given partition function $Z_g(\lambda)$.
When $d<4$, the same result holds true if the random variable
$\int_M:\phi^2:dv$
has given probability distribution and without the lower bound on the sectional curvatures.   
\end{abstract}
%

\section{Introduction.}

In the present paper, we only consider smooth, compact, Riemannian manifolds
$(M,g)$ without boundary. For simplicity we also assume $M$ to be
connected and orientable.
On such manifold,
the Laplace--Beltrami operator $\Delta$ 
admits a discrete spectral resolution~\cite[Lemma 1.6.3 p.~51]{Gilkey} which means there is 
an increasing sequence
of eigenvalues~: $$\sigma (\Delta)=
\{0=\lambda_0< \lambda_1\leqslant \lambda_2 \leqslant \dots \leqslant \lambda_n\rightarrow +\infty \}$$
and corresponding $L^2$-basis of eigenfunctions
$(e_\lambda)_{\lambda\in \sigma(\Delta)}$ so that
$\Delta e_\lambda=\lambda e_\lambda$.

\subsubsection{Gaussian Free Fields and Feynman amplitudes.}

We next briefly recall 
the definition of the Gaussian free field (GFF)
associated to $\Delta$.
Our definition is probabilistic and
represents the Gaussian Free Field $\phi$
as a random distribution on $M$~\cite[Corollary 3.8 p.~21]{kang2017calculus}~\cite[eq (1.7) p.~3]{guillarmoupolyakov}~\cite{dimock2004markov} (see
also~\cite[section 4.2]{dubedatsle} for a related definition in a planar domain $D$).
In the classical physics litterature, this object is called Euclidean bosonic quantum field and can be defined
differently in terms of Gaussian measures on the space of distributions although the two definitions
are equivalent.

\begin{defi}[Gaussian Free Field]\label{dfn:green}
The Gaussian free field $\phi$ associated to
$(M,g)$ is defined as follows~:
denote by $(e_\lambda)_{\lambda\in \sigma(\Delta)}$ 
the spectral resolution
of $\Delta$. 
Consider a sequence $(c_\lambda)_{\lambda\in \sigma(\Delta)}, c_\lambda\in \mathcal{N}(0,1)$ 
of independent, identically distributed, centered Gaussian random variables.
Then we define
the Gaussian Free Field $\phi$ as the random series
\begin{equation}
\phi=\sum_{\lambda\in \sigma(\Delta)\setminus\{0\} } \frac{c_\lambda}{\sqrt{\lambda}}e_\lambda
\end{equation}
where the sum runs over the positive eigenvalues of
$\Delta$ and
the series converges almost surely as distribution in 
$\mathcal{D}^\prime(M)$.

The covariance of the Gaussian free field defined above
is the Green function~:
\begin{eqnarray*}
\boxed{\mathbf{G}(x,y)=\sum_{\lambda\in \sigma(\Delta)\setminus\{0\} } \frac{1}{\lambda}e_\lambda(x)e_\lambda(y)} 
\end{eqnarray*}
where the above series converges in $\mathcal{D}^\prime(M\times M)$.
\end{defi}

Note that in our definition of the Gaussian Free Field, we 
choose the random field $\phi$ to be orthogonal to constant functions so that the covariance of $\phi$ is exactly the Green's function
as defined above. The above means that the Gaussian measure $d\mu$ 
is constructed on the subspace $H^{s}(M)_0$, $\forall s<1-\frac{d}{2}$ 
of Sobolev
distributions 
orthogonal to constants.
This has the consequence that most 
of our arguments deal 
with the restriction of $\Delta$ 
to the orthogonal of constant functions.
Remark that to really construct a measure on $H^s(M)=H^s(M)_0\oplus \mathbb{R}$ would require tensoring $d\mu$ with 
the Lebesgue measure $dc$ on $\mathbb{R}$
which takes care of the zero modes~: $d\mu\otimes dc$~\cite[p.~30]{guillarmoupolyakov}. 
But we will not use this extended measure in the present paper.

We next recall the definition of polygon Feynman amplitudes.
\begin{defi}[Feynman amplitudes]\label{d:feynman}
Let $(M,g)$ be a closed compact Riemannian manifold and $\mathbf{G}$ the Green function of the
Laplace--Beltrami operator $\Delta$. 

For $n\geqslant 2$, define the formal product of Green function~:
\begin{equation}
t_n(x_1,\dots,x_n)=\mathbf{G}(x_1,x_2)\dots \mathbf{G}(x_n,x_1)
\end{equation}
as an element in $C^\infty(M^n\setminus \text{diagonals})$.
\end{defi}
The amplitude $t_n$ is well--defined outside diagonals 
because  the Green function
$\mathbf{G}$ is smooth outside the diagonal 
and develops singularities at coinciding points. 

In our main results, we will use the relation between these Feynman amplitudes and the
probability distribution of the Wick square of the GFF.

\subsubsection{From TQFT to Riemann invariants.}

In topological field theories of Chern--Simons~\cite{axelrodsinger2,kontsevich1994feynman,dijkgraaf1994perturbative} and
of BF type~\cite{baez2000introduction,witten19882+,carlip2003quantum}~\cite[3.4]{mnev2008discrete}, one has
a correspondence~:
\begin{eqnarray}
\boxed{\text{closed manifolds} \longrightarrow \text{partition function }Z\left(M\right)=\sum_{n=0}^\infty h^nF_n(M) }
\end{eqnarray}
where the $F_n\left(M \right)$ are invariants of the $C^\infty$-structure 
and do not depend on the choice of metrics needed to define the propagator of the theory.
For interacting scalar quantum
field on $\mathbb{R}^4$, 
it was proved by Belkale--Brosnan~\cite{belkale2003periods} and 
Bogner--Weinzierl~\cite{Bogner2009periods} that
Feynman amplitudes are special numbers called periods. On general manifolds,
as a consequence of the quantum field theory formalism of Segal~\cite{segal2004definition}, Stolz--Teichner~\cite{stolz2004elliptic,kandel2015functorial,stolz2014lecture}, a QFT is expected to
give a correspondence 
from closed
manifolds endowed with extra structure, for instance Riemannian, complex structures or vector bundles over $M$, to
the complex numbers
$\mathbb{C}$ or the ring of formal power series over $\mathbb{C}$.
On Riemannian manifolds, the numbers of QFT might become sensitive to variations of the metric $g$ 
and have no reasons to be periods anymore. The goal of the present paper is to study the dependence 
of the partition
function $Z_g$ and Feynman amplitudes on the Riemannian metric $g$.

\subsubsection{Probing Riemannian geometries with quantum fields.}

The study of Euclidean quantum fields has a long history in the constructive quantum field theory community with seminal contributions
of Albeverio, Fr\"ohlich, Gallavotti, Glimm, Guerra, Jaffe, Nelson, Seiler,
Spencer, Simon, Symanzik and Wightman just to name a few, see~\cite{Glimm, simonpphi2, simonfunct, FrohlichSokal} and the references inside. Our goal in the present paper, is to relate 
the properties of the quantum field on the manifold with the geometric properties of the underlying manifold itself.
We were inspired in part by the work of Seiler~\cite{Seilergauge} 
who stressed the relation between quantum fields and functional
determinants.

At this point, we should stress that our results 
on rigidity of Riemannian structures come in two flavors~:
\begin{enumerate}
\item the diffeomorphism type of $M$ (topology and $C^\infty$ structure) is fixed and the question is about the metric $g$ up to isometry,
\item the diffeomorphism type of $M$ is not fixed and the question is about the \textbf{pair} $(M,g)$ up to isometry.
\end{enumerate}

In what follows, we denote by $\mathbb{C}[[\lambda]]$ the ring of formal power series in $\lambda$.
We shall study the
renormalized partition function 
\begin{equation}
Z_g\left(\lambda \right)=\mathbb{E}\left(\exp\left( -\frac{\lambda}{2}\int_M :\phi^2(x):dv \right) \right)\in \mathbb{C}[[\lambda]]
\end{equation}
of a free bosonic theory when the dimension of $(M,g)$ equals $2\leqslant d\leqslant 4$ where we need some extra renormalization 
when $d=4$, see Proposition~\ref{t:mainthm2}. 
The partition function $Z_g(\lambda)\in \mathbb{C}[[\lambda]]$
depends only on the isometry class of $(M,g)$ and a natural question would be what informations on $(M,g)$ can be extracted from
$Z_g\left(\lambda \right)$ as formal power series. 

 We can also formulate a related question as follows~: 
let $\varphi:M\mapsto M$ be
a diffeomorphism and let $\mathbf{G}$ be the Green function of $\Delta_g$, where the notation $\Delta_g$ is used
to stress the dependence on the metric $g$.
$\left(M,\varphi^*g\right)$ is \textbf{isometric} to $(M,g)$ and induces a diffeomorphism
$\Phi:M\times M\mapsto  M\times M$ such that  
the pulled--back Green function $\Phi^*\mathbf{G}\in \mathcal{D}^\prime(M\times M)$
is the Green function of the  
Laplace--Beltrami
operator $\Delta_{\Phi^*g}$ of the pulled--back metric.
It follows that integrals of non divergent Feynman amplitudes 
associated to closed graphs are 
\textbf{isometry invariant numbers} and
depend only on 
the Riemannian structure of the pair $(M,g)$.
In the terminology of subsubsection~\ref{ss:modulimetrics}, we will say they 
induce functions
on the \emph{moduli space of metrics}.
What 
informations on the Riemannian structure $(M,g)$ can be recovered
from integrals of Feynman amplitudes over configuration space ?

\section{Main results.}

In what follows, we 
introduce some preliminary definitions on convergence of sequences of Riemannian manifolds and moduli spaces of metrics that we need to
state our two main results on Riemannian rigidity from quantum fields.

\subsubsection{Convergent sequences of Riemannian manifolds in Lipschitz topology.}

In the present paragraph, the diffeomorphism 
type of the manifolds is \textbf{not fixed}.
Let us recall that
the set of $C^\infty$ Riemannian metrics on $M$ with the usual Fr\'echet topology on smooth
$2$-tensors
is denoted by $\textbf{Met}(M)$. 
It is an open convex cone of the space of
symmetric $2$--tensors in the $C^\infty$ topology. 
We have the natural action of $\mathbf{Diff}(M)$, the set of 
diffeomorphisms of $M$ acting by pull--back on 
$\textbf{Met}(M)$. 
Two Riemannian manifolds $(M_1,g_1)$ and $(M_2,g_2)$ are equivalent if
there exists a $C^\infty$ diffeomorphism 
$\varphi:M_1\mapsto M_2$ s.t. $\varphi^*g_2=g_1$.
Then $\mathfrak{Riemm}$ is the set of equivalence classes of 
Riemannian manifolds. 
We insist that elements in $\mathfrak{Riemm}$ can have \textbf{different 
diffeomorphism types}.
In our work, we will only need to define Lipschitz convergence for sequences of smooth Riemannian manifolds. 
Concretely,
a sequence of isometry classes of Riemannian manifolds 
$(M_n,g_n)_{n\in \mathbb{N}}$ converges to $(M,g)$ in $\mathfrak{Riemm}$ 
for the Lipschitz topology if
there exists a sequence $\varphi_n~: M_n\mapsto M$ of 
bilipschitz homeomorphisms
s.t. both $\sup_{x\in M_n} \Vert d\varphi_n\Vert$
and $\sup_{x\in M_1} \Vert d\varphi_n^{-1}\Vert$ 
tend to $1$ when $n\rightarrow +\infty$.
We will need the Lipschitz topology in the formulation of our main Theorem~\ref{t:mainthm2}
and also when we discuss compactness properties of isospectral metrics in subsubsection~\ref{ss:compactness}.

\subsubsection{The moduli space of metrics.}
\label{ss:modulimetrics}

In this paragraph, we fix the smooth manifold $M$ and only the metrics on $M$ will vary.
We define the \textbf{moduli space of Riemannian metrics}
as a quotient space~\cite[p.~381]{BergerEbin}~:
\begin{eqnarray}
\boxed{\mathcal{R}(M)=\textbf{Met}(M)/\mathbf{Diff}(M)}
\end{eqnarray}
endowed with the quotient topology.
In practice, 
a sequence of isometry classes $[g_n]\underset{n \rightarrow +\infty}{\rightarrow} [g]$ if there is a sequence of representatives $g_n$ of $[g_n]$ which converges to $g$ in the $C^\infty$--topology~\cite[p.~602]{sarnak1990determinants}~\cite[p.~233]{BGM} (see also~\cite[p.~175]{berger2012panoramic}). 
For every $0<\varepsilon<1$, we use the notations $\mathcal{R}(M)_{\leqslant-\varepsilon}$
and $\mathcal{R}(M)_{[-\varepsilon^{-1},-\varepsilon]}$ for the 
moduli space of Riemannian metrics with negative
sectional curvatures bounded from above by $-\varepsilon$ and 
whose sectional curvature is contained in 
$[-\varepsilon^{-1},-\varepsilon]$ respectively. 

It is a result of Ebin~\cite{Ebin1, Ebinthesis} that $\mathcal{R}(M)$ endowed with the quotient topology is a  
Hausdorff \textbf{metric space}~\cite[p.~317--319]{Fischer}. In the sequel, we shall summarize the main properties 
of the metric structure on $\mathcal{R}(M)$.

\subsubsection{The moduli space $\mathcal{R}(M)$ as a metric space.}

For every $s>\frac{\dim(M)}{2}$, Ebin considered the Hilbert manifold $\textbf{Met}^s(M)$ of Sobolev metrics of regularity $s$ and
the topological group $\textbf{Diff}^{s+1}(M)$ of bijective maps $f$ s.t.
both $f$ and $f^{-1}$ are Sobolev maps in $H^{s+1}(M,M)$~\cite[2.3 p.~158]{MarsdenEbinFischer} acting on $\textbf{Met}^s(M)$.
He constructed a Riemannian metric $\textbf{g}_s$ on $\textbf{Met}^s(M)$, called Sobolev metric of degree $s$, which is invariant by the action of $\textbf{Diff}^{s+1}(M)$ and is defined as follows. 
The tangent space $T_g\textbf{Met}^s(M)$ at $g\in \textbf{Met}^s(M)$ 
is naturally identified with the Sobolev space $H^s(S^2T^*M)$ of Sobolev sections
of $S^2T^*M$ of regularity $s$. So for every $h\in H^s(S^2T^*M)\simeq T_g\textbf{Met}^s(M)$,
\begin{equation}
\left\langle h,h \right\rangle_{\textbf{g}_s}=\sum_{k=0}^s\int_M \left\langle\nabla_g^kh, \nabla_g^k h\right\rangle_{S^{k+2}T^*M} dv_g 
\end{equation}
where $dv_g$ is the volume form induced by $g$, $\nabla_g$ is the 
covariant derivative defined by $g$ acting on $H^s(S^2T^*M)$ and $ \left\langle .,.\right\rangle_{S^{k+2}T^*M} $ denotes the fiberwise scalar product on the bundle $S^{k+2}T^*M$ induced by $g$. 
The corresponding distance function 
on $\textbf{Met}^s(M)$
is denoted by $\textbf{d}^s$.
Following Fischer~\cite[p.~319]{Fischer}, we may define a distance $\textbf{d}$ on $\textbf{Met}(M)$ as
follows~:
\begin{equation}
\textbf{d}(g_1,g_2)=\sum_{k>\frac{\dim(M)}{2}} \frac{1}{2^k} \frac{\textbf{d}^k(g_1,g_2) }{1+\textbf{d}^k(g_1,g_2)}
\end{equation}
where the distance $\textbf{d}$ is $\mathbf{Diff}(M)$ invariant by construction. 
Hence $\textbf{d}$ induces a distance on the quotient
space $\mathcal{R}(M)$ which
generates the quotient topology.
Now that we recalled the metric space 
structure of $\mathcal{R}(M)$, a natural question is if
$\mathcal{R}(M)$ admits a smooth manifold structure.

\subsubsection{The regular part $\mathcal{G}$ of $\mathcal{R}(M)$.}
Unfortunately, the answer is negative and $\mathcal{R}(M)$ should be understood
as some kind of orbifold. 
The proper setting for the analysis in infinite dimensional space of metrics is that of
inductive limit of Hilbert spaces (ILH)
structures
which are 
specializations of Fr\'echet manifolds defined by 
Omori~\cite[Def II.5 p.~4]{Bourguignon}. 
So all the words \emph{submanifolds} or \emph{diffeomorphisms}
must be understood in the sense of ILH submanifolds and diffeomorphisms.
The set $\mathcal{R}(M)$ does not have a manifold structure but it is a fundamental result of
Ebin~\cite{Ebin1, Ebinthesis} and Palais independently that the action of $\mathbf{Diff}(M) $
on $\textbf{Met}(M)$ admits slices.
Moreover Ebin~\cite{Ebin1, Ebinthesis} proved that for a metric $g$ whose 
isometry group $I_g$ is trivial, $I_g=\{ \varphi\in \mathbf{Diff}(M), \varphi^*g=g  \}=Id$, 
the quotient $\mathcal{R}(M)$ has a manifold structure in some neighborhood of $[g]$. 
Such $[g]$ are called \textbf{regular points} of the moduli space $\mathcal{R}(M)$ and the set of regular points is denoted 
by $\mathcal{G}$.
It is a result of Ebin that $\mathcal{G}\subset \mathcal{R}(M)$ is \textbf{open dense and has
a smooth manifold structure}.
In the sequel, for every $\varepsilon>0$, we shall denote by
$\mathcal{G}_{\geqslant\varepsilon}$ the set of $[g]\in \mathcal{G}$ s.t.
$\textbf{d}([g],\partial\mathcal{G})\geqslant\varepsilon$ where $\partial\mathcal{G}=
\mathcal{R}(M)\setminus\mathcal{G} $. 


\subsubsection{Fluctuations of the integrated Wick square.}

In quantum field theory on curved space times, one is interested
in the behaviour of the stress--energy tensor and its fluctuations
under quantization of the 
fields assuming that the metric stays classical. 
For instance, many works of Moretti~\cite{moretti1999,moretti1999local,moretti2003,moretti2011,hack2012stress} deal
with the renormalization of various quantum field theoretic quantities, for instance the stress--energy tensor, 
using zeta regularization and local point splitting methods. 
In the present paper, 
we
study fluctuations of the integral of the 
Wick square $\int_M :\phi^2(x):dv$ on the manifold $M$ which is a simpler observable and is the integral of the 
\emph{field fluctuations}
in Moretti's work. 
In aQFT, it also appears 
in the work of Sanders~\cite{sanders2017local} and is
interpreted as a local temperature. 

In probability, 
the Wick square is also 
related to 
loop measures associated to some 
random walks 
on graphs~\cite{leJan,lawler2018topics} and it is assumed that the continuous Wick square should be related
to some loop measures.  
On Riemannian manifolds of negative curvature,
there is a strong
relation between Brownian motion on the base manifold $M$, the continuous version of random walks, 
and the geodesic flow on the unitary cosphere bundle $S^*M$ over $M$ which is Anosov. This topic
was studied by
many authors like Ancona,
Arnaudon, Guivarc'h, Kaimanovich, 
Kendall, Kifer, Ledrappier, Le Jan, Pinsky and Thalmaier among many others (see~\cite{arnaudonthalm} and references therein).
Our main results, Proposition~\ref{t:mainthm2} and Theorem~\ref{t:mainthm1},    
give 
an explicit relation
between fluctuations of the Wick squares, the partition function $Z_g(\lambda)$,
periodic geodesics and rigidity on manifolds with negative curvature.

\subsubsection{Periods of the geodesic flow.}

We recall the definition of the periods of
the geodesic flow~\cite[section 10.5]{berger2012panoramic}. 
\begin{defi}[Periods]
\label{d:periods}
Let us consider the moduli space of Riemannian metrics $\mathcal{R}(M)$ on $M$. 
For every element of $\mathcal{R}(M)$, choose a representative $g$. 
We denote by $(\Phi^t)_g:S^*M\mapsto S^*M$ the geodesic flow acting on
the unitary cosphere bundle $S^*M$.
Then for every class $[g]\in \mathcal{R}(M)$, 
we define the \textbf{periods}
$\mathcal{P}([g])$ as the set~:
\begin{equation}
\boxed{\mathcal{P}([g])=\{ T>0 \text{ s.t. } \Phi^T_g(x;\xi)=(x;\xi) \text{ for some }(x;\xi)\in S^*M \}\subset \mathbb{R}_{>0}.}
\end{equation}
\end{defi}
The set $\mathcal{P}(g)$ is called the \textbf{length spectrum} of $(M,g)$.
\subsection{Main results.}
Recall we defined the formal product $t_n$ of Green functions in definition~\ref{d:feynman}.
For a compact operator $A$,
we will denote by $\sigma(A)$ the set of singular values of $A$.
On any oriented smooth manifold $X$, we shall denote by $\vert\Lambda^{top}\vert X$
the bundle of densities on $X$ and by $C^\infty\left(\vert\Lambda^{top}\vert X\right)$ its smooth sections. 
Our first result reads~:
\begin{prop}\label{t:mainthm2}
Given a closed compact Riemannian manifold $(M,g)$ of dimension $2\leqslant d\leqslant 4$, a function $V\in C^\infty(M)$, define  
the sequence of numbers
$$c_n(g,V)=\int_{M^n} t_{n}(x_1,\dots,x_n)V(x_1)\dots V(x_n) dv_n,$$ $n\in \mathbb{N}$
where $dv_n$ is the Riemannian density
in $C^\infty\left(\vert\Lambda^{top}\vert M^n\right)$.
For $\varepsilon\in (0,1]$,
let $\phi_\varepsilon=e^{-\varepsilon\Delta}\phi$ be the
heat regularized GFF, $:\phi^2_\varepsilon(x):=\phi^2_\varepsilon(x)-\mathbb{E}\left(\phi^2_\varepsilon(x) \right)$ and
define the renormalized partition functions~: 
\begin{eqnarray*}
Z_g(\lambda,V)&=&\lim_{\varepsilon\rightarrow 0^+}\mathbb{E}\left(\exp\left(-\frac{\lambda}{2}\int_M V(x) :\phi^2_\varepsilon(x):dv\right) \right),\text{ when }d=(2,3) ,\\   
Z_g(\lambda,V)&=&\lim_{\varepsilon\rightarrow 0^+}\mathbb{E}\left(\exp\left(-\frac{\lambda}{2}\int_M V(x) :\phi^2_\varepsilon(x):dv - \frac{\lambda^2\int_MV^2(x)dv}{64\pi^2}\vert\log(\varepsilon) \vert \right) \right),\text{ when } d=4.
\end{eqnarray*} 
Then the sequence $c_n(g,V)$ is well--defined 
for $n>\frac{d}{2}$ and the partition functions $Z_g$
satisfies the following 
identity for small $\vert\lambda\vert$~:
\begin{equation}
Z_g(\lambda,V)=\exp\left( P(\lambda)+\sum_{n>\frac{d}{2}} \frac{(-1)^{n}c_n(g,V)\lambda^n}{2n}\right)
\end{equation}
where $P=c\lambda^2$ when $d=4$, $P=0$ when $d<4$ and $Z_g^{-2}$ extends as 
an \textbf{entire function} on the complex plane $\mathbb{C}$
whose zeroes lie in $-\sigma(V\Delta^{-1})$.
\end{prop}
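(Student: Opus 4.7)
The plan is to first compute the $\varepsilon$-regularized partition function by a Gaussian identity, then pass to the limit $\varepsilon\to 0^+$ and identify what remains with a Carleman--Fredholm regularized determinant, which will furnish both the announced expansion and the entire extension.

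For $\varepsilon>0$, the heat-smoothed field $\phi_\varepsilon$ is smooth with covariance operator $G_\varepsilon=e^{-2\varepsilon\Delta'}(\Delta')^{-1}$ on the orthogonal of the constants. The standard Gaussian identity $\mathbb{E}(e^{-\frac{\lambda}{2}\langle\phi_\varepsilon,V\phi_\varepsilon\rangle})=\det(I+\lambda VG_\varepsilon)^{-1/2}$, combined with the Wick correction $e^{\frac{\lambda}{2}\Tr(VG_\varepsilon)}$ that cancels exactly the $n=1$ term in the log-expansion, yields
\begin{equation*}
\log\mathbb{E}\!\left(e^{-\frac{\lambda}{2}\int_M V:\phi_\varepsilon^2:\,dv}\right) = \sum_{n\geq 2}\frac{(-1)^n\lambda^n}{2n}\,\Tr\bigl((VG_\varepsilon)^n\bigr),
\end{equation*}
valid for $|\lambda|$ smaller than $\|VG_\varepsilon\|^{-1}$. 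I would then pass to $\varepsilon\to 0^+$ termwise: using the short-diagonal singularity $\mathbf{G}(x,y)\sim c_d\,d(x,y)^{2-d}$ for $d\geq 3$ and $\sim -(2\pi)^{-1}\log d(x,y)$ for $d=2$, together with a simple scaling / wave-front argument on the polygon amplitude $t_n$, I would show that $t_n$ is locally integrable on $M^n$ exactly when $n(d-2)<(n-1)d$, i.e.\ $n>d/2$, and for such $n$ that $\Tr((VG_\varepsilon)^n)\to c_n(g,V)$. The only remaining term is $n=2$ in $d=4$, the logarithmically divergent one, for which a direct heat-kernel computation using $\mathbf{G}(x,y)\sim\frac{1}{4\pi^2\,d(x,y)^2}$ gives
\begin{equation*}
\Tr\bigl((VG_\varepsilon)^2\bigr)=\frac{|\log\varepsilon|}{16\pi^2}\int_M V^2\,dv + O(1),
\end{equation*}
so the counterterm prescribed in the statement cancels the divergence exactly, leaving a finite constant absorbed into $P(\lambda)=c\lambda^2$. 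This establishes the claimed identity on a small disc around $\lambda=0$.

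For the entire extension, Weyl's law on $(M,g)$ gives $\Delta$-eigenvalues $\lambda_k\sim Ck^{2/d}$, so $V\Delta^{-1}$ lies in the Schatten class $\mathcal{S}_p$ for every $p>d/2$: in particular $V\Delta^{-1}\in\mathcal{S}_2$ when $d\leq 3$ and $V\Delta^{-1}\in\mathcal{S}_3$ when $d=4$. The corresponding Carleman--Fredholm regularized determinants $\det_2(I+\lambda V\Delta^{-1})$ and $\det_3(I+\lambda V\Delta^{-1})$ are entire functions of $\lambda$, and their zero set is exactly $\{\lambda\in\mathbb{C}:I+\lambda V\Delta^{-1}\text{ is not invertible}\}$. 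Using the series identity $\log\det_k(I+\lambda A)=\sum_{n\geq k}(-1)^{n+1}\lambda^n\Tr(A^n)/n$ for $A\in\mathcal{S}_k$, I would match $-2\log Z_g(\lambda,V)$ with $\log\det_2(I+\lambda V\Delta^{-1})$ in dimension $d\leq 3$, and with $\log\det_3(I+\lambda V\Delta^{-1})-2c\lambda^2$ in dimension $d=4$, identifying $Z_g^{-2}$ as an entire function of $\lambda$ with the announced zero set.

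The hard part I anticipate is the dimension-four limit: extracting the precise universal constant $1/(64\pi^2)$ in the counterterm and verifying that the subleading short-time heat-kernel terms (those involving the Seeley--DeWitt coefficients and the scalar curvature) contribute only a finite piece absorbable into $c$ requires genuine local analysis on $M$. A secondary concern is justifying the interchange of sum and limit in the log-series for small $|\lambda|$ uniformly in $\varepsilon$, which follows from the Schatten estimate $\|(VG_\varepsilon)^n\|_1\leq \|VG_\varepsilon\|_p^n$ for $n\geq p>d/2$ and a uniform-in-$\varepsilon$ bound on $\|VG_\varepsilon\|_p$.
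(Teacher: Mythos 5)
Your proposal is correct and follows the same skeleton as the paper's proof: the Gaussian quadratic-perturbation identity (the paper quotes Glimm--Jaffe, Prop.~9.3.1) giving $\log\mathbb{E}(e^{-\frac{\lambda}{2}\int V:\phi_\varepsilon^2:dv})=\sum_{n\geqslant 2}\frac{(-1)^n\lambda^n}{2n}\Tr((VG_\varepsilon)^n)$, termwise passage to the limit $\varepsilon\to 0^+$ for $n>d/2$, identification of the surviving series with the Gohberg--Krein determinants $\det_2$ ($d\leqslant 3$) and $\det_3$ ($d=4$), and the entire extension plus zero set from the standard theory of these determinants. The one place where your route genuinely differs is the extraction of the $d=4$ logarithmic divergence in the $n=2$ term: the paper explicitly worries that $(e^{-2\varepsilon\Delta}\Delta^{-1}V)^2\neq e^{-4\varepsilon\Delta}\Delta^{-2}V^2$, disposes of the discrepancy by showing the commutator contribution lies in $\Psi^{-5}(M)$ uniformly in $\varepsilon$ (hence is $\mathcal{O}(1)$), and then reads off the $\frac{|\log\varepsilon|}{16\pi^2}\int_MV^2$ singularity from the on-diagonal heat kernel expansion via the Mellin representation $\Delta^{-2}=\int_0^\infty(e^{-t\Delta}-\Pi)\,t\,dt$. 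You instead compute $\Tr((VG_\varepsilon)^2)=\int_{M^2}V(x)V(y)G_\varepsilon(x,y)^2\,dv\,dv$ directly from the short-distance asymptotics of the Green function; for $n=2$ this is legitimate (the trace of a product of two operators is the integral of the product of kernels, so no commutator issue arises), and it yields the same coefficient $1/(16\pi^2)$, hence $1/(64\pi^2)$ after multiplying by $\lambda^2/4$. The price of your shortcut is that you need uniform-in-$\varepsilon$ control of $G_\varepsilon(x,y)$ near the diagonal (not just of $\mathbf{G}$ itself), which is precisely the input the paper's pseudodifferential/heat-kernel machinery supplies; your closing remark about uniform Schatten bounds on $\|VG_\varepsilon\|_p$ correctly identifies the remaining technical work, and your integrability criterion $n>d/2$ for the polygon amplitude $t_n$ matches the paper's trace-class condition $(\Delta^{-1}V)^n\in\Psi^{-2n}(M)$.
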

Note that $V\Delta^{-1}$ 
is a pseudodifferential operator of negative degree
hence a compact operator 
and $\sigma(\Delta^{-1}V)$ is well--defined.
We observe that for $d=4$, the formal integral $c_2(g,V)$ is ill--defined. If it were well--defined, it would be understood 
as the limit
when $\varepsilon\rightarrow 0^+$: $\lim_{\varepsilon\rightarrow 0^+}\frac{1}{4}\mathbb{E}\left( :\phi^2(V): :\phi^2(V): \right)$ which diverges logarithmically. This divergence 
is subtracted by the counterterm 
$\frac{\int_MV^2(x)dv}{64\pi^2}\vert\log(\varepsilon) \vert$. The resulting finite part \textbf{is hidden in the constant} $c$
in the polynomial term $P(\lambda)$ which \textbf{depends on the metric $g$ and the function $V$}. But in the special case where $V=1$, we will see that $c$ depends on the metric $g$ \textbf{only through the spectrum} of $\Delta$.

At this point, it was pointed out to the author by Claudio Dappiaggi that there
should be some explicit relation between
the renormalization done here
and the methods from the papers~\cite{DappEucl,Ricci,DangHersc}
on Euclidean algebraic Quantum Field Theory which use an Euclidean version 
of Epstein--Glaser renormalization.
From the above, we deduce the following 
corollaries when $V=1\in C^\infty(M)$~:
\begin{coro}\label{c:coromain}
Let $(M_1,g_1),(M_2,g_2)$ be a pair of compact Riemannian manifolds without boundary of dimension $2\leqslant d\leqslant 4$, then 
the following claims are equivalent~: 
\begin{enumerate}
\item $c_n(g_1)=c_n(g_2)$ for all $n>\frac{d}{2}$, 
\item the partition functions coincide $Z_{g_1}=Z_{g_2}$
,
\item $(M_1,g_1),(M_2,g_2)$ are \textbf{isospectral}. 
\end{enumerate}

In particular the Einstein--Hilbert action $S_{EH}$, hence
the Euler characteristic $\chi(M)$ when $M$ is a surface, can be recovered from $Z_g$
by the formula~:
\begin{eqnarray*}
\boxed{ S_{EH}(g)=Res|_{s=\frac{d}{2}-1} \sum_{\lambda, Z_g(\lambda)^{-2}=0} \lambda^{-s} .   }
\end{eqnarray*}
and if $(g_1,g_2)$ are metrics with 
negative sectional curvatures s.t. $Z_{g_1}=Z_{g_2}$, 
then
$\mathcal{P}(g_1)=\mathcal{P}(g_2)$ where the length spectrum from definition~\ref{d:periods}
coincides with the singular support of the distribution~:
\begin{eqnarray}
\boxed{t\mapsto Re\left(\sum_{\lambda, Z_g(\lambda)^{-2}=0} e^{it\sqrt{\lambda}} \right) \in \mathcal{D}^\prime(\mathbb{R}_{>0}).}
\end{eqnarray}
\end{coro}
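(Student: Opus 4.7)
The plan is to establish the equivalences $(3) \Leftrightarrow (1) \Leftrightarrow (2)$ and then derive the two explicit boxed identities.

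First I would give $c_n$ a spectral meaning. Expanding $\mathbf{G}$ in eigenfunctions and applying Fubini, legitimate for $n > d/2$ by Weyl's law, the orthonormality of $(e_\lambda)$ collapses the $n$-fold integral defining $c_n$ to
\[ c_n(g) = \sum_{\lambda \in \sigma(\Delta)\setminus\{0\}} \lambda^{-n}, \qquad n > \tfrac{d}{2}, \]
where eigenvalues are counted with multiplicity. Thus $(3) \Rightarrow (1)$ is immediate. For the converse, the tail series $F_g(t) := \sum_{n > d/2} c_n(g)\, t^n$ converges for $|t| < \lambda_1$ and sums geometrically to a meromorphic function on $\mathbb{C}$ whose simple poles recover the distinct eigenvalues while the residues give their multiplicities, so that $(1) \Rightarrow (3)$.

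Next I would establish $(1) \Leftrightarrow (2)$ by taking the formal logarithm of the identity in Proposition~\ref{t:mainthm2}: the coefficient of $\lambda^n$ in $\log Z_g$, for each $n > d/2$, is exactly $(-1)^n c_n(g)/(2n)$, yielding $(2) \Rightarrow (1)$. In dimensions $d \in \{2,3\}$ the polynomial $P$ vanishes and the reverse implication is as direct; in dimension $d = 4$ one must also match the coefficient $c(g)$ of $\lambda^2$ in $P(\lambda) = c(g)\lambda^2$. The remark following Proposition~\ref{t:mainthm2} asserts that when $V = 1$ this constant depends on $g$ only through $\sigma(\Delta)$, a fact I would verify by expressing $c(g)$ as the finite part as $\varepsilon \to 0^+$ of a regulated spectral sum and identifying it with a universal linear combination of heat invariants $a_k(g)$. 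Combined with $(1) \Rightarrow (3)$, this fixes $c$ and closes the cycle.

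For the Einstein--Hilbert formula, Proposition~\ref{t:mainthm2} places the zeros of $Z_g^{-2}$ in multiplicity-preserving correspondence with $\sigma(\Delta)\setminus\{0\}$, and under this identification the Dirichlet series in the boxed formula becomes, up to normalization, the spectral zeta function $\zeta_\Delta(s) = \sum \lambda^{-s}$. Its meromorphic continuation, obtained via the Mellin transform of the Minakshisundaram--Pleijel expansion $\Tr(e^{-t\Delta}) \sim t^{-d/2} \sum_k a_k(g)\, t^k$, has residue at $s = d/2 - 1$ equal to a universal constant times $a_1(g) = \frac{1}{6(4\pi)^{d/2}} \int_M R \, dv_g$, which is a universal multiple of $S_{EH}(g)$; on a surface, Gauss--Bonnet then identifies $S_{EH}$ with $4\pi \chi(M)$. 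For the length-spectrum formula, under the same identification the distribution $t \mapsto \Re \sum_\lambda e^{it\sqrt{\lambda}}$ is, up to constants, the wave trace $\Tr \cos(t\sqrt{\Delta})$. On a strictly negatively curved manifold the geodesic flow is Anosov, so all closed orbits are isolated and non-degenerate, and the Duistermaat--Guillemin--Chazarain trace formula identifies its singular support with $\mathcal{P}(g) \cup \{0\}$; combined with the main equivalence this gives $Z_{g_1} = Z_{g_2} \Rightarrow \mathcal{P}(g_1) = \mathcal{P}(g_2)$. The main difficulty I anticipate lies in the $d = 4$ case of the first equivalence: spectral invariance of the renormalization constant $c(g)$ requires a delicate heat-kernel bookkeeping of the log-divergent counterterm, which is precisely where restricting to $V = 1$ becomes essential.
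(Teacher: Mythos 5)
Your proposal is correct and follows essentially the same route as the paper: identify $c_n(g)=\Tr(\Delta^{-n})=\sum_{\lambda}\lambda^{-n}$ for $n>\frac{d}{2}$, recover $\sigma(\Delta)$ from these moments by analytic continuation, read the equivalence with $Z_g$ off the formula of Proposition~\ref{t:mainthm2} (checking, as the paper does for $V=1$, that the $d=4$ constant $c$ is the finite part of a purely spectral regulated sum minus a $\Vol_g(M)\,\vert\log\varepsilon\vert$ counterterm, both spectral by Weyl's law), and then invoke the spectral zeta residue and the Duistermaat--Guillemin trace formula for the two boxed identities. The only cosmetic difference is that you extract the spectrum with multiplicities from the poles of the geometric resummation of $\sum_n c_n t^n$, whereas the paper reads it off the zero divisor of the Gohberg--Krein determinant $\det_{[\frac{d}{2}]+1}(I+\lambda\Delta^{-1})$; these are two phrasings of the same analytic-continuation step.
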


Real valued random variables $X$ are entirely characterised by their probability distribution
or equivalently by their generating function which is the formal Fourier--Laplace transform
of the probability distribution. 
In dimension $d=(2,3)$,
the main Theorem 
of our note deals with the 
rigidity of the 
Riemannian structure in negative curvature 
where the fluctuations of the
Wick square are encoded by the probability distribution of
the random variable $ \int_M:\phi^2(x):dv$ or the partition function $Z_g$ which should be understood as some kind of
Fourier--Laplace transform of the probability distribution of $ \int_M:\phi^2(x):dv$.
For $d=4$, $ \int_M:\phi^2(x):dv$ is no longer a random variable. Only the renormalized partition function $Z_g$ is well--defined and we obtain 
a similar rigidity result fixing the renormalized partition function.
Our Theorem is stated 
in terms of finite dimensional submanifolds $N$
of the regular part $\mathcal{G}$ of the 
moduli space of metrics. 
We will explain 
right after the statement of the Theorem
some 
subtle aspects of our assumptions.
\begin{thm}\label{t:mainthm1}
For every compact Riemannian manifold $(M,g)$ of dimension $2\leqslant d\leqslant 4$, $\phi$ is the Gaussian free field with covariance 
$\mathbf{G}$ with corresponding measure $\mu$.
Denote by $\phi_\varepsilon=e^{-\varepsilon\Delta}\phi$ to be the
heat regularized GFF.

If $d=2,3$
then the limit $ \int_M:\phi^2(x):dv=\lim_{\varepsilon\rightarrow 0^+} \int_M \phi_\varepsilon^2(x)dv-\mathbb{E}\left( \int_M \phi_\varepsilon^2(x)dv \right)  $
converges as a random variable in $L^p(\mathcal{D}^\prime(M),\mu), 2\leqslant p <+\infty$ with the following properties~:
\begin{enumerate}
\item Let $N$ be a \textbf{finite dimensional} submanifold of $ \mathcal{G}\subset \mathcal{R}\left( M\right)$
s.t. its boundary $\partial N$ is contained in the boundary $\partial\mathcal{G}$ of the regular part $\mathcal{G}$.
For all $\varepsilon>0$, the set of classes of metrics $[g]\in N\cap \mathcal{R}\left( M\right)_{\leqslant-\varepsilon}\cap \mathcal{G}_{\geqslant\varepsilon}$ such that the random variable $ \int_M:\phi^2(x):dv$
\textbf{has given probability distribution is finite}.

\item When $d=3$, 
for a sequence $(M_i,g_i)_{i\in \mathbb{N}}$ of Riemannian $3$--manifolds of negative curvature such that the random variable $ \int_{M_i}:\phi^2(x):dv_i$
\textbf{has a fixed given probability distribution}, 
the number of diffeomorphisms types represented in the sequence $(M_i)_i$ is \textbf{finite} and
one can extract a subsequence
such that $M_i$ has \textbf{fixed diffeomorphism type} and $g_i\rightarrow g$ for some metric $g$ in the $C^\infty$ 
topology.
\end{enumerate}

If $d=4$ then 
\begin{enumerate}
\item Let $N$ be a \textbf{finite dimensional} submanifold of $ \mathcal{G}\subset \mathcal{R}\left( M\right)$
s.t. its boundary $\partial N$ is contained in the boundary $\partial\mathcal{G}$ of the regular part $\mathcal{G}$.
For all $\varepsilon\in (0,1)$, the set of classes of metrics $[g]\in N\cap \mathcal{R}\left( M\right)_{[-\varepsilon^{-1},-\varepsilon]}\cap \mathcal{G}_{\geqslant\varepsilon}$ 
\textbf{with given partition function $Z_g$ is finite}.

\item 
For a sequence $(M_i,g_i)_{i\in \mathbb{N}}$ of Riemannian $4$--manifolds of negative sectional curvatures bounded 
in some compact interval such that the 
\textbf{partition function $Z_g$ is given}, 
the number of diffeomorphisms types represented in the sequence $(M_i)_i$ is \textbf{finite} and
one can extract a subsequence
such that $M_i$ has \textbf{fixed diffeomorphism type} and $(M_i,g_i)\rightarrow (M,g)$ in the \textbf{Lipschitz topology}.
\end{enumerate}
\end{thm}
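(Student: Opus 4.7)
My plan is to convert the hypotheses into spectral data via Corollary~\ref{c:coromain} and then appeal to compactness and rigidity theorems for isospectral metrics. The argument splits into three steps, with a common last step for dimensions $d=2,3$ and $d=4$.

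\textbf{Step 1 (Pass from the law of $\int_M :\phi^2: dv$ to spectral data).} For $d\in\{2,3\}$ I would first check that the regularized integrated Wick square converges in every $L^p$, $p<\infty$, using a Wiener chaos bound of the form $\|\int_M :\phi_\varepsilon^2: dv\|_{L^2}^2 \lesssim c_2(g,1)$, which is finite since $2>d/2$. Its characteristic function is $Z_g(i\lambda)$, which by Proposition~\ref{t:mainthm2} is real-analytic near $0$ and extends, after squaring and inverting, to an entire function on $\mathbb{C}$; so the probability law determines $Z_g$ uniquely. Corollary~\ref{c:coromain} then identifies this information with the Laplace spectrum $\sigma(\Delta_g)$ and, in negative curvature, with the length spectrum $\mathcal{P}(g)$. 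For $d=4$ I would work directly with the given $Z_g$, which encodes the same spectral data.

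\textbf{Step 2 (Compactness for the sequence statement).} Item~(2) in each case reduces, by Step~1, to a statement about sequences $(M_i,g_i)$ of negatively curved manifolds sharing a Laplace spectrum. I would extract uniform geometric bounds from the spectrum: Weyl's law fixes the volume, Minakshisundaram--Pleijel heat invariants fix the integrals of scalar curvature and higher curvature polynomials, and negative curvature combined with bounded volume bounds the diameter. In dimension $3$, Brooks--Perry--Petersen's compactness theorem yields a $C^\infty$-convergent subsequence, and Cheeger's finiteness theorem bounds the number of diffeomorphism types. In dimension $4$, the two-sided sectional curvature bound is precisely the hypothesis of Anderson's compactness theorem, which delivers a bi-Lipschitz convergent subsequence; Cheeger's finiteness again controls the topology.

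\textbf{Step 3 (Finiteness inside a finite-dimensional family).} For item~(1), let $N\subset\mathcal{G}$ be finite-dimensional with $\partial N\subset\partial\mathcal{G}$. The distance-to-boundary condition $[g]\in\mathcal{G}_{\geqslant\varepsilon}$ together with the curvature bound and Step~2 will confine the isospectral locus $\Sigma_N$ to a set whose closure in $N$ is compact and contained in the regular part. Since each coefficient $c_n(g)$ is an integral of products of Green's functions depending real-analytically on the metric, the map $N\to(c_n(g))_n$ is real-analytic, so $\Sigma_N$ is a compact real-analytic subset of $N$ and has finitely many irreducible components. To exclude positive-dimensional components I would use first-order spectral rigidity on $\mathcal{G}$: along a smooth curve $g_t\in\Sigma_N$ the classical Hadamard-type variational formula expresses $\dot\lambda_k(0)$ as a symmetric bilinear functional of $\dot g_0$ in the eigenfunction $e_k$, and its vanishing for every $k$, combined with Ebin's slice description of $T_g\mathcal{G}$ as the divergence-free part of $T_g\textbf{Met}(M)$, forces $\dot g_0=0$.

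\textbf{Main obstacle.} I expect the hardest point to be the first-order spectral rigidity argument in Step~3: while the Hadamard variational formula is classical, the vanishing argument requires that the family indexed by eigenfunctions be total in the divergence-free part of $T_g\textbf{Met}(M)$, which uses ellipticity of the linearized symbol equation on Ebin's slice and depends crucially on the hypothesis that $[g]\in\mathcal{G}$ has trivial isometry group and strictly negative curvature. A second obstacle, present only in dimension $4$, is that Anderson's compactness delivers only bi-Lipschitz convergence, so to pass the coefficients $c_n$ (and hence $Z_g$) to the limit I would need to upgrade the regularity of the limiting metric in harmonic coordinates using elliptic regularity together with the extra spectral data provided by $Z_g$ beyond the bare eigenvalues.
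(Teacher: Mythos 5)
Your Steps 1 and 2 follow essentially the same route as the paper: the law of $\int_M:\phi^2:dv$ determines the moments, hence $Z_g$, hence $\sigma(\Delta)$ via Proposition~\ref{t:mainthm2}, and the sequence statements then follow from the compactness theorems for isospectral metrics (Osgood--Phillips--Sarnak, Brooks--Perry--Petersen/Anderson, Zhou) combined with Cheeger finiteness; your worry about upgrading bi-Lipschitz convergence in dimension $4$ is resolved by using Zhou's theorem, which already gives $C^\infty$-compactness on each fixed diffeomorphism type under the two-sided sectional curvature bound.

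The genuine gap is in Step 3, which is the heart of the theorem (Proposition~\ref{p:rigidity}). You propose to exclude positive-dimensional components of the isospectral locus $\Sigma_N$ by differentiating the eigenvalues along a curve $g_t\subset\Sigma_N$ and arguing that the vanishing of all Hadamard first variations $\dot\lambda_k(0)$ forces the solenoidal part of $\dot g_0$ to vanish because ``the family indexed by eigenfunctions is total in the divergence-free part of $T_g\textbf{Met}(M)$.'' This is not a known fact and is not how infinitesimal spectral rigidity is proved: the quadratic forms $h\mapsto \dot\lambda_k$ built from individual eigenfunctions are not known to span (even weakly) the solenoidal tensors, eigenvalue crossings make $\dot\lambda_k$ ill-defined along degenerate deformations, and no ellipticity argument on Ebin's slice substitutes for this. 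The actual mechanism --- both in Guillemin--Kazhdan/Croke--Sharafutdinov and in the paper --- passes through the \emph{length spectrum}: isospectrality in negative curvature fixes $\mathcal{P}(g)$ by the Duistermaat--Guillemin trace formula; the derivative of $\ell_\gamma(g)$ in a direction $h$ is $\tfrac12 I_2(h)_{[\gamma]}$, so preservation of the length spectrum kills the X-ray transform of the variation; and injectivity of $I_2$ on solenoidal $2$-tensors (Pestov identity plus Livsic) gives the contradiction. Your Step 3 never invokes $I_2$ or the length spectrum, so the crucial implication ``isospectral deformation $\Rightarrow$ trivial'' is unsupported. Note also that the paper sidesteps your analytic-set formulation entirely: it takes a sequence of distinct isospectral classes converging to $[g]$, applies the slice theorem to make the differences $\varepsilon_n=g_n-g$ solenoidal, uses the length-minimization property of closed geodesics in negative curvature to get the two-sided inequalities $\delta_\gamma(\varepsilon_n)\geqslant 0$ and $\delta_{\gamma_n}(\varepsilon_n)\leqslant 0$, normalizes by $\Vert\varepsilon_n\Vert_\infty$ and uses the finite dimensionality of $N$ to extract a nonzero limit $u$ with $I_2(u)=0$, contradicting injectivity. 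If you want to keep your analytic-geometry framing you would still have to replace the Hadamard-formula step by this X-ray transform argument.
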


Our result gives an example of metric dependent (non topological) 
Quantum Field Theory where the knowledge of 
the partition function gives both some \textbf{topological and metrical} constraints on the
Riemannian manifold $(M,g)$.

Let us comment on the definition of $N$. The set $N$ is a submanifold of the regular part $\mathcal{G}$ since only $\mathcal{G}$ has a manifold structure. The boundary $\partial N$ of $N$ is defined by taking the
closure of $N$ in $\mathcal{R}(M)$ for the topology of $\mathcal{R}(M)$, then 
$\partial N=\overline{N}\setminus \text{Int}\left(N\right)$ is considered as a subset of $\mathcal{R}(M)$. 
A subtle but important observation is that $N$ is not \textbf{necessarily compact}. Let us explain why and then give some example.
Both $\mathcal{R}(M)$ and $\mathcal{G}$, endowed with the 
Ebin metric $\textbf{d}$, have finite diameter. But the point is that bounded subsets for the metric $\textbf{d}$ are not necessarily bounded for the 
induced topology on $\mathcal{R}(M)$ as illustrated in the following~:
\begin{ex}\label{ex:1}
Choose any metric $g\in \textbf{Met}(M)$ on $M$ whose isometry group is reduced to the identity element. Hence the corresponding class
$[g]$ belongs to the regular part $\mathcal{G}$. 
Observe that the subset $\{tg \text{ s.t. } t>0\}\subset \textbf{Met}(M)$ is not bounded in the $C^\infty(M)$ topology~\footnote{since it is not even bounded for the $C^0$ norm.}. The metrics
$t_1g$ and $t_2g$ are not isometric if $t_1\neq t_2$ since they give different volumes for $M$, 
therefore each $tg$ gives a different class $[tg]\in \mathcal{G}$ and by quotient this defines
a non trivial subset $N=\{[tg] \text{ s.t. } t>0\}\subset \mathcal{G}$.
By definition of the quotient topology, the subset $N$ is not bounded in $\mathcal{G}$ for the
topology of $\mathcal{R}(M)$.  
\end{ex}
The next example provides a simple analogy
with the above phenomena.
\begin{ex}
Consider the Fr\'echet space
$C^\infty(\mathbb{S}^1)$ of smooth function on the circle $\mathbb{S}^1$. Then 
the smooth topology of $C^\infty\left(\mathbb{S}^1\right)$ is metrizable. Set $\Vert .\Vert_{H^s}$ to be the Sobolev norm of 
degree $s\in \mathbb{N}$ 
then consider the distance
$\textbf{d}(f,g)=\sum_{s=0}^\infty \frac{1}{2^s} \frac{\Vert f-g \Vert_{H^s}}{1+\Vert f-g \Vert_{H^s}} $  for $(f,g)\in C^\infty(\mathbb{S}^1)^2$.
Then by construction the diameter of $C^\infty(\mathbb{S}^1)$ for the distance $\textbf{d}$ equals $2$ but $C^\infty(\mathbb{S}^1)$ itself is not bounded.
\end{ex}
 
If we denote by $\iota:N\hookrightarrow \mathcal{G}$ the abstract embedding, then
the preimage of a bounded subset for $\textbf{d}$ is not necessarily bounded.
\begin{ex}
Consider again the $1$--dimensional manifold $N$ from example~\ref{ex:1}.
Then this defines an embedding $\iota:\mathbb{R}\simeq N\longmapsto \mathcal{G}$
and the preimage of $\mathcal{G}$ itself which is a bounded subset for 
$\textbf{d}$ is $\mathbb{R}$ which is not bounded. 
\end{ex} 
To overcome these difficulties we will make use of
compactness Theorems for isospectral metrics and also
the finite dimensionality of $N$ will play an important role 
in our proof.
\subsection{Acknowledgements.}

I would like to thank Thibault Lefeuvre, Marco Mazzucchelli and Colin Guillarmou for 
teaching me some methods from inverse problems which are used in the present paper and also
thanks to Claudio Dappiaggi, Michal Wrochna, 
Jan Derezi\'nski, 
Estanislao Herscovich 
and Christian G\'erard 
for keeping my interest and motivation 
for Quantum Field Theory on curved spaces. 
Finally, I would like to thank my wife Tho for creating the great atmosphere
that makes things possible.

\section{Proof of Proposition \ref{t:mainthm2}.}

The results of Proposition \ref{t:mainthm2} are particular cases
of the main results from~\cite{Dangquillen}.
However, since we are in low dimension $d\leqslant 4$ in the present case, we can give a simple, self--contained proof which relies on simple commutator arguments in pseudodifferential calculus and using the asymptotic expansion of the heat kernel.

\subsubsection{Quadratic perturbations of Gaussian measures.}
We recall the content of~\cite[Proposition 9.3.1 p.~211]{Glimm}, slightly adapted to our situation, which yields a relation between 
partition functions of small quadratic perturbations of some Gaussian field and some convergent power series.
We shall denote by $L^2(M)_0$ and $\mathcal{D}^\prime(M)_0$ the respective
closed subspaces of $L^2(M)$ and $\mathcal{D}^\prime(M)$ which are orthogonal to constants and
$\Vert A\Vert_{HS}:=\sqrt{Tr_{L^2}\left(A^*A \right)}$ denotes the Hilbert--Schmidt norm.
For any Hilbert space $H$, we denote by 
$\mathcal{B}\left(H,H\right)$ the algebra of bounded operators on $H$.
\begin{prop}\label{p:glimmjaffe}
Let $C$ be a bounded positive self--adjoint operator on $L^2(M)_0$ and $b$ real, symmetric s.t.
$ 0 < C^{-1}+b$ as quadratic forms. Denote by $d\mu_C$ the Gaussian measure on $\mathcal{D}^\prime(M)_0$
whose covariance is $C$.
Set $:\mathcal{V}:_C=\frac{1}{2} \int_{M\times M} :\phi(x) b(x,y)\phi(y):_C $
where $b(x,y)$ denotes the Schwartz kernel of $b$ and $:\phi(x) b(x,y)\phi(y):_C$ the Wick ordered operator w.r.t. the Gaussian measure $d\mu_C$.
If $\widehat{b}=C^{\frac{1}{2}}bC^{\frac{1}{2}}$ is Hilbert--Schmidt then both
$:\mathcal{V}:_C$ and $e^{-:\mathcal{V}:_C}$ are in $L^p(d\mu_C)$ for all $p<+\infty$ and
$$ \mathbb{E}\left(e^{-:\mathcal{V}:_C} \right)=\exp\left(-\frac{1}{2}Tr_{L^2}\left(\log(I+\widehat{b})-\widehat{b} \right)\right) $$
where the expansion in powers of $:\mathcal{V}:_C$ converges absolutely for $\Vert \widehat{b}\Vert_{HS}<1$.  
\end{prop}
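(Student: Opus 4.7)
The plan is to diagonalize $\widehat{b}=C^{1/2}bC^{1/2}$ using the spectral theorem for self-adjoint Hilbert--Schmidt operators, then exploit the resulting product structure to reduce the calculation to independent one-dimensional Gaussian integrals. Writing $\widehat{b}=\sum_j \lambda_j\, e_j\otimes e_j$ with $\{e_j\}$ an orthonormal basis of $L^2(M)_0$, one has $\sum_j \lambda_j^2=\|\widehat{b}\|_{HS}^2<\infty$. The positivity hypothesis $C^{-1}+b>0$ is equivalent, via congruence by $C^{1/2}$, to $I+\widehat{b}>0$, and since $\widehat{b}$ is compact its eigenvalues accumulate only at $0$; hence $\lambda_*:=\inf_j \lambda_j>-1$.

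Under $d\mu_C$ the field $\phi$ admits the white-noise expansion $\phi=\sum_j \xi_j\, C^{1/2}e_j$ with $\{\xi_j\}$ i.i.d.\ $\mathcal{N}(0,1)$ (the $\xi_j$ being obtained by pairing $\phi$ with the Cameron--Martin orthonormal basis $C^{-1/2}e_j$), so that
\begin{equation*}
\langle\phi,b\phi\rangle=\sum_{j,k}\xi_j\xi_k\langle e_j,\widehat{b}e_k\rangle=\sum_j \lambda_j\xi_j^2,\qquad :\mathcal{V}:_C=\tfrac{1}{2}\sum_j\lambda_j(\xi_j^2-1).
\end{equation*}
By independence of the $\xi_j$ and the elementary identity $\mathbb{E}(e^{-t\xi^2/2})=(1+t)^{-1/2}$ for $t>-1$,
\begin{equation*}
\mathbb{E}\bigl(e^{-:\mathcal{V}:_C}\bigr)=\prod_j\frac{e^{\lambda_j/2}}{\sqrt{1+\lambda_j}}=\exp\!\Bigl(-\tfrac{1}{2}\sum_j\bigl(\log(1+\lambda_j)-\lambda_j\bigr)\Bigr),
\end{equation*}
which is exactly $\exp\bigl(-\tfrac{1}{2}\operatorname{Tr}_{L^2}(\log(I+\widehat{b})-\widehat{b})\bigr)$. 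The trace converges absolutely thanks to the elementary bound $|\log(1+x)-x|\leq C_{\lambda_*}x^2$ uniform on $[\lambda_*,\infty)$ together with $\sum_j \lambda_j^2<\infty$.

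For the integrability claims, $:\mathcal{V}:_C$ lies in the second Wiener chaos, hence in $L^p(d\mu_C)$ for every finite $p$ by Nelson's hypercontractivity, and the absolute convergence of $\sum_n \frac{(-1)^n}{n!}\mathbb{E}(:\mathcal{V}:_C^n)$ for $\|\widehat{b}\|_{HS}<1$ follows by enumerating the Wick pairings on the $n$-cycle diagrams and bounding each cycle contribution by $\operatorname{Tr}(\widehat{b}^n)\leq \|\widehat{b}\|_{HS}^n$. The main obstacle I anticipate is the $L^p$ bound on $e^{-:\mathcal{V}:_C}$ for all finite $p$: the explicit product formula yields $\mathbb{E}(e^{-p:\mathcal{V}:_C})=\prod_j(1+p\lambda_j)^{-1/2}e^{p\lambda_j/2}$, finite exactly when $p\lambda_j>-1$ for every $j$, so a direct estimate only gives $p<1/|\lambda_*|$ when $\lambda_*<0$. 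To reach all $p<\infty$ one must split the spectrum of $\widehat{b}$ into its positive and negative parts, absorb the (finitely many) genuinely troublesome negative eigenvalues into a Gaussian change of measure shifting the covariance, and then apply hypercontractivity to what remains; handling this spectral splitting cleanly is where care is needed.
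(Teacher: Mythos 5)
The paper offers no proof of Proposition~\ref{p:glimmjaffe}: it is quoted from Glimm--Jaffe \cite[Prop.~9.3.1]{Glimm}. Your argument --- diagonalize $\widehat{b}=\sum_j\lambda_j\,e_j\otimes e_j$, expand $\phi=\sum_j\xi_j C^{1/2}e_j$ in the associated Cameron--Martin basis so that $:\mathcal{V}:_C=\tfrac12\sum_j\lambda_j(\xi_j^2-1)$ becomes a sum of independent one--dimensional variables, and multiply the elementary integrals $\mathbb{E}(e^{-t\xi^2/2})=(1+t)^{-1/2}$ --- is precisely the textbook proof being cited, and your treatment of the hypotheses is correct: $0<C^{-1}+b$ does transport to $I+\widehat{b}>0$ under congruence by $C^{1/2}$, compactness gives $\inf_j\lambda_j>-1$, and $|\log(1+x)-x|\lesssim x^2$ on $[\lambda_*,\infty)$ together with $\sum_j\lambda_j^2<\infty$ makes $\Tr(\log(I+\widehat{b})-\widehat{b})$ absolutely convergent. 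The only step you pass over silently is the interchange of $\mathbb{E}$ with the infinite product: one should note that the partial sums $S_N$ converge a.s.\ (an $L^2$-bounded martingale) and that $\sup_N\mathbb{E}(e^{-qS_N})=\sup_N\prod_{j\leqslant N}(1+q\lambda_j)^{-1/2}e^{q\lambda_j/2}<\infty$ for some $q>1$ with $q\lambda_*>-1$, giving uniform integrability. This is a one-line repair.

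Your instinct about the $L^p$ claim is the substantive point, but your proposed remedy cannot work. The product formula you wrote is an \emph{exact} evaluation: $\mathbb{E}(e^{-p:\mathcal{V}:_C})=\prod_j(1+p\lambda_j)^{-1/2}e^{p\lambda_j/2}$, which equals $+\infty$ as soon as $p\lambda_j\leqslant -1$ for a single $j$. No change of measure or hypercontractivity argument can make a divergent integral finite; the assertion ``$e^{-:\mathcal{V}:_C}\in L^p$ for \emph{all} $p<\infty$'' is simply false under the stated hypotheses whenever $\widehat{b}$ has an eigenvalue in $(-1,0)$ (take $\lambda_1=-1/2$: then $e^{-:\mathcal{V}:_C}$ carries a factor $e^{\xi_1^2/4}$, which fails to be in $L^2$). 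The correct statement requires either $b\geqslant 0$ or the hypothesis $0<C^{-1}+pb$ for the exponent $p$ in question, which is how the restriction appears in Glimm--Jaffe. This does not damage the paper: only the $L^1$ identity $\mathbb{E}(e^{-:\mathcal{V}:_C})=\exp(-\tfrac12\Tr(\log(I+\widehat{b})-\widehat{b}))=\det_2(I+\widehat{b})^{-1/2}$ is used downstream, and your proof of that identity, together with the cycle-expansion bound $|\Tr(\widehat{b}^k)|\leqslant\Vert\widehat{b}\Vert_{HS}^k$ giving absolute convergence for $\Vert\widehat{b}\Vert_{HS}<1$, is sound. The honest fix is to weaken the $L^p$ clause in the statement rather than to look for a cleverer proof of it.
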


In the sequel, we denote by $\Delta^{-1}$ the continuous linear map $\mathcal{D}^\prime(M)\mapsto \mathcal{D}^\prime\left(M\right)_0$ 
whose Schwartz kernel is the Green function $\mathbf{G}\in \mathcal{D}^\prime(M\times M)$ defined in definition~\ref{dfn:green}.
For all distribution $u\in \mathcal{D}^\prime(M)$,
$\Delta\left(\Delta^{-1}u\right)=\Delta^{-1}\left(\Delta u\right)=u-\frac{\int_Mu}{Vol(M)}$
which means $\Delta^{-1}$ acts as the inverse 
of $\Delta$ restricted to $\mathcal{D}^\prime(M)_0$.
In general in our paper, all powers $\Delta^{-s}, s\in \mathbb{R}$ of the Laplace operator $\Delta$ are defined using the spectral resolution as~:
$ \forall u\in \mathcal{D}^\prime(M), \Delta^{-s}u=\sum_{\lambda\in \sigma(\Delta)\setminus \{0\}} \lambda^{-s}\left\langle u,e_\lambda \right\rangle e_\lambda   $
where the r.h.s converges in $\mathcal{D}^\prime(M)$.

Set $\widehat{V_\varepsilon}=e^{-\varepsilon\Delta}\Delta^{-\frac{1}{2}}V\Delta^{-\frac{1}{2}}e^{-\varepsilon\Delta}$, then by definition of $\Delta^{-1}: L^2(M)\mapsto L^2(M)_0$, $\ker(\Delta^{-1})$ is reduced to the constant functions. Therefore we find that
for all $k\geqslant 1$,
\begin{equation}\label{e:remarkzeromodes}
Tr_{L^2_0}\left( \widehat{V_\varepsilon}^k \right)=Tr_{L^2}\left( \widehat{V_\varepsilon}^k \right).
\end{equation}
We will use the above identity
to switch between 
the two traces $Tr_{L^2}$ and $Tr_{L^2_0}$ when dealing with analytic functionals
of $\widehat{V_\varepsilon}$.
The above proposition \ref{p:glimmjaffe} applied to the covariance $C_\varepsilon=e^{-2\varepsilon\Delta}\Delta^{-1}$ and the quadratic perturbation $\frac{1}{2}\lambda\int_M V(x)\phi(x)^2dv$ together with identity \ref{e:remarkzeromodes} 
yields for
$ \vert \lambda\vert <\frac{1}{\Vert V \Vert_{L^\infty(M)}\Vert e^{-2\varepsilon\Delta}\Delta^{-1}\Vert_{HS}} $ ~:
\begin{eqnarray*}
\mathbb{E}\left( \exp\left(-\frac{\lambda}{2}\int_M V(x):\phi_\varepsilon^2(x):  dv\right)\right)&=&\exp\left(-\frac{1}{2} Tr_{L^2_0}\left(\log\left(I+\lambda \widehat{V_\varepsilon}  \right) - \lambda \widehat{V_\varepsilon}\right) \right)  \\
&=& \exp\left(-\frac{1}{2} Tr_{L^2}\left(\log\left(I+\lambda \widehat{V_\varepsilon}  \right) - \lambda \widehat{V_\varepsilon}\right) \right)
\end{eqnarray*}
where $\widehat{V_\varepsilon}$ is smoothing hence Hilbert--Schmidt and both series are absolutely convergent in $\lambda$ since  
\begin{eqnarray*}
\Vert e^{-\varepsilon\Delta}\Delta^{-\frac{1}{2}}V\Delta^{-\frac{1}{2}}e^{-\varepsilon\Delta} \Vert_{HS}^2
&=&\underset{\text{ by Lidskii since the operator is smoothing}}{
\underbrace{Tr_{L^2}\left(  e^{-\varepsilon\Delta}\Delta^{-\frac{1}{2}} V \Delta^{-1}e^{-2\varepsilon\Delta} Ve^{-\varepsilon\Delta}\Delta^{-\frac{1}{2}} \right)}}\\  
=\underset{\text{ by cyclicity of }Tr_{L^2}}{ \underbrace{ Tr_{L^2}\left(   V \Delta^{-1}e^{-2\varepsilon\Delta} V\Delta^{-1}e^{-2\varepsilon\Delta} \right)} }
&\leqslant& \underset{\text{ by Cauchy--Schwartz for }\Vert .\Vert_{HS}}{\underbrace{Tr_{L^2}\left( \Delta^{-1}e^{-2\varepsilon\Delta} 
 V^2 \Delta^{-1}e^{-2\varepsilon\Delta}  \right)}}\\
\leqslant \underset{\text{cyclicity again}}{\underbrace{Tr_{L^2}\left(  V^2 \Delta^{-2}e^{-4\varepsilon\Delta}  \right)}}
&\leqslant
&\underset{\text{H\"older}}{\underbrace{\Vert V\Vert_{\mathcal{B}(L^2,L^2)}^2Tr_{L^2}\left( \Delta^{-2}e^{-4\varepsilon\Delta}  \right)}}
\leqslant 
\Vert V\Vert_{L^\infty(M)}^2Tr_{L^2}\left( \Delta^{-2}e^{-4\varepsilon\Delta}  \right)
\end{eqnarray*}
where in the last inequality, we used the fact that $\Vert V\Vert_{L^\infty(M)}=\Vert V\Vert_{\mathcal{B}(L^2,L^2)}$. 

\subsubsection{Relating to functional determinants.}

Now we observe that expanding the $\log$ as a power series and
$Tr_{L^2}\left(\left(e^{-\varepsilon\Delta}\Delta^{-\frac{1}{2}}V\Delta^{-\frac{1}{2}}
e^{-\varepsilon\Delta}\right)^k \right) = Tr_{L^2}\left(\left(e^{-2\varepsilon\Delta}\Delta^{-1}V\right)^k \right)$
by cyclicity of the $L^2$--trace  
yields~:
\begin{eqnarray*}
&&\exp\left(-\frac{1}{2} Tr_{L^2}\left(\log\left(I+\lambda e^{-\varepsilon\Delta}\Delta^{-\frac{1}{2}}V\Delta^{-\frac{1}{2}}
e^{-\varepsilon\Delta} \right) - \lambda e^{-\varepsilon\Delta}\Delta^{-\frac{1}{2}}V\Delta^{-\frac{1}{2}}
e^{-\varepsilon\Delta}\right) \right) \\
& =& \exp\left(\frac{1}{2}\sum_{k=2}^\infty \frac{(-1)^k\lambda^k}{k}Tr_{L^2}\left(\left(e^{-\varepsilon\Delta}\Delta^{-\frac{1}{2}}V\Delta^{-\frac{1}{2}}
e^{-\varepsilon\Delta}\right)^k \right) \right)=
\exp\left(\frac{1}{2}\sum_{k=2}^\infty \frac{(-1)^k\lambda^k}{k}Tr_{L^2}\left(\left(e^{-2\varepsilon\Delta}\Delta^{-1}V\right)^k \right) \right)\\
&=&  \exp\left(-\frac{1}{2} Tr_{L^2}\left(\log\left(I+\lambda e^{-2\varepsilon\Delta}\Delta^{-1}V  \right) - \lambda e^{-2\varepsilon\Delta}\Delta^{-1}V\right) \right).
\end{eqnarray*}
Then by Lemma~\ref{l:detregtraces}, there is an explicit relation
connecting Fredholm determinants $\det_F$, Gohberg--Krein's determinants
$\det_2$ and functional traces (see also~\cite[p.~212]{Glimm}), 
this relation reads~:
\begin{eqnarray*}
&&\exp\left(-\frac{1}{2} Tr_{L^2}\left(\log\left(I+\lambda e^{-2\varepsilon\Delta}\Delta^{-1}V  \right) - \lambda e^{-2\varepsilon\Delta}\Delta^{-1}V\right) \right)\\
&=&\text{det}_F\left(I+\lambda e^{-2\varepsilon\Delta}\Delta^{-1}V \right)^{-\frac{1}{2}}\exp(\frac{\lambda}{2}Tr_{L^2}(e^{-2\varepsilon\Delta}\Delta^{-1}V))
=\text{det}_2\left(I+\lambda e^{-2\varepsilon\Delta}\Delta^{-1}V  \right)^{-\frac{1}{2}}
\end{eqnarray*}  
which follows immediately
from the properties of
Gohberg--Krein's determinants $\det_2$.

For the moment, for every $\varepsilon>0$ and $ \vert \lambda\vert <\frac{1}{\Vert V \Vert_{L^\infty(M)}\Vert e^{-2\varepsilon\Delta}\Delta^{-1}\Vert_{HS}} $, 
we obtained the relation
\begin{eqnarray}\label{e:partitiondeterminant}
\boxed{ \mathbb{E}\left( \exp\left(-\frac{\lambda}{2}\int_M V(x):\phi_\varepsilon^2(x):  dv\right)\right)=\text{det}_2\left(I+\lambda e^{-2\varepsilon\Delta}\Delta^{-1}V  \right)^{-\frac{1}{2}}}
\end{eqnarray}
relating the  
partition function of the regularized Wick square and the Gohberg--Krein determinant
for some regularized operator $ I+\lambda e^{-2\varepsilon\Delta}\Delta^{-1}V $ and where both sides can be expanded as convergent power series in $\lambda$ provided $ \vert \lambda\vert <\frac{1}{\Vert V \Vert_{L^\infty(M)}\Vert e^{-2\varepsilon\Delta}\Delta^{-1}\Vert_{HS}} $. 
For fixed $\varepsilon>0$, by analytic continuation property of Gohberg--Krein's determinant, both sides 
of equation~\ref{e:partitiondeterminant} extend as entire functions of $\lambda\in \mathbb{C}$. 

\subsubsection{The limit $\varepsilon\rightarrow 0^+$.}
The goal of this short paragraph is to study the
limit of the Fredholm operator $ I+\lambda e^{-2\varepsilon\Delta}\Delta^{-1}V $ when $\varepsilon\rightarrow 0^+$.
We will say that a pseudodifferential $A$ belongs to $\Psi^{+0}_{1,0}(M)$ if $A\in \Psi^s_{1,0}(M)$ 
for all $s>0$. 
\begin{lemm}[Microlocal convergence of heat operator]
Let $e^{-t\Delta}$ be the heat operator. Then we have the convergence
$ e^{-t\Delta}\underset{t\rightarrow 0^+}{\rightarrow} Id\text{ in }\Psi^{+0}_{1,0}(M)$. 
\end{lemm}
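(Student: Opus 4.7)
The statement means that for every $s > 0$ the operator $e^{-t\Delta} - \mathrm{Id}$ belongs to $\Psi^s_{1,0}(M)$ and converges to $0$ in the Fr\'echet topology of $\Psi^s_{1,0}(M)$ as $t \to 0^+$. The plan is to localize with a partition of unity, then construct the full symbol $a_t(x,\xi)$ of $e^{-t\Delta}$ from Seeley's heat parametrix (equivalently, the Helffer--Sj\"ostrand/Cauchy integral formula)
\[
a_t(x,\xi) \;=\; \frac{1}{2\pi i} \oint_{\Gamma} e^{-t\lambda}\, \sigma\bigl((\Delta - \lambda)^{-1}\bigr)(x,\xi)\, d\lambda,
\]
expand the resolvent symbol in decreasing orders, and integrate term by term to obtain an asymptotic expansion
\[
a_t(x,\xi) \;\sim\; e^{-t|\xi|_g^2}\,\Bigl(1 + \sum_{k \geq 1} t^k b_k(x,\xi) \Bigr),
\]
with $b_k$ polynomial in $\xi$ of bounded order, modulo a smoothing remainder continuous in $t \in [0,1]$ and vanishing at $t = 0$. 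Estimating $a_t - 1$ seminorm by seminorm in $S^s_{1,0}$ will then yield the claim.

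The heart of the argument is the scalar estimate for the leading term $e^{-t|\xi|_g^2} - 1$. The elementary inequality $|1 - e^{-r}| \leq \min(1, r) \leq r^{s/2}$, valid for all $r \geq 0$ and $s \in [0,2]$, applied at $r = t|\xi|_g^2$ gives
\[
\langle \xi \rangle^{-s}\,\bigl|1 - e^{-t|\xi|_g^2}\bigr| \;\leq\; C_s\, t^{s/2} \qquad (s \in (0, 2]),
\]
while $\langle \xi \rangle^{-s}\,\bigl|1 - e^{-t|\xi|_g^2}\bigr| \leq C_s\, t$ for $s \geq 2$. Any derivative $\partial_x^{\alpha} \partial_\xi^{\beta}$ applied to $e^{-t|\xi|_g^2} - 1$ produces a polynomial in $(t^{1/2}\xi, t)$ times the Gaussian $e^{-t|\xi|_g^2}$, and since $(t^{1/2}|\xi|)^j e^{-t|\xi|_g^2/2}$ is uniformly bounded in $\xi$ for every $j$, such derivatives satisfy analogous bounds of the form $C_{\alpha,\beta,s}\, t^{\delta(s,\alpha,\beta)}\, \langle \xi \rangle^{s - |\beta|}$ for some $\delta > 0$. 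This furnishes the full $S^s_{1,0}$ seminorm control of the leading contribution to $a_t - 1$, with vanishing rate as $t \to 0^+$ for every $s > 0$.

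The subleading terms $t^k e^{-t|\xi|_g^2} b_k(x,\xi)$ carry an explicit factor $t^k$ and are handled identically with an additional gain in $t$; since each $b_k$ has bounded order in $\xi$ they cause no obstruction to the $\Psi^s_{1,0}$ bounds. The smoothing remainder of the parametrix is bounded in the Fr\'echet topology of smoothing operators uniformly in $t \in [0,1]$ and vanishes at $t = 0$, hence contributes negligibly. Reassembling via the partition of unity yields the lemma. The main obstacle is the careful justification that Seeley's parametrix for $e^{-t\Delta}$ is genuinely smooth in $t$ down to $t = 0^+$ with the claimed uniform symbol and remainder estimates: this is a standard but delicate feature of the parabolic pseudodifferential calculus (Greiner, Seeley, Beals--Greiner) that one can either invoke as a black box or reprove directly by taking the Cauchy contour $\Gamma$ to stay uniformly away from $\Spec(\Delta) \cup \{0\}$ and tracking the exponential decay in $\lambda$ of the resolvent symbols.
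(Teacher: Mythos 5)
Your argument is correct in outline, but it takes a genuinely different and considerably heavier route than the paper. The paper does not construct the symbol of $e^{-t\Delta}$ at all: it observes that $e^{-t\Delta}=p_t(\sqrt{\Delta})$ for the one-variable function $p_t(\xi)=e^{-t\xi^2}$, and then invokes Strichartz's functional-calculus theorem (cited as \cite[Thm 1.3 p.~296]{Taylor-81}), which says that $p\mapsto p(\sqrt{\Delta})$ maps $S^s_{1,0}(\mathbb{R})$ continuously into $\Psi^s_{1,0}(M)$. This reduces the whole lemma to the elementary scalar estimates showing that $p_t$ is bounded in $S^0_{1,0}(\mathbb{R})$ uniformly for $t\in[0,a]$ and that $p_t\to 1$ in $S^{u}_{1,0}(\mathbb{R})$ for every $u>0$ --- precisely the inequalities $\sup_\xi(1+|\xi|)^{-u}|e^{-t\xi^2}-1|\to 0$ and the Gaussian-times-polynomial derivative bounds that also form the heart of your second paragraph. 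What your approach does differently is to rebuild the full $(x,\xi)$-dependent symbol of $e^{-t\Delta}$ from Seeley's resolvent parametrix and a contour integral, which forces you to establish uniformity of the parabolic calculus down to $t=0$ (the point you yourself flag as the ``main obstacle'' and defer to Greiner/Beals--Greiner). That machinery is standard and your plan would go through, and it has the advantage of producing the explicit asymptotic expansion of the heat symbol as a byproduct; but for the statement at hand it buys nothing, since only the crude convergence in $\Psi^{+0}_{1,0}(M)$ is needed, and the functional-calculus reduction obtains it from a purely one-dimensional computation with no parametrix and no remainder analysis. Two small cautions if you pursue your route: the expansion is naturally in half-integer powers of $t$ (the $j$-th resolvent term is homogeneous of degree $-2-j$ in $(\xi,\lambda^{1/2})$), and the polynomial coefficients have $\xi$-degree growing with the order of the term, so the claim that the $b_k$ have ``bounded order in $\xi$'' needs to be replaced by the statement that $t^{k}$ times a polynomial of degree $O(k)$ in $\xi$ times the Gaussian is still uniformly a symbol of order $0$ --- true, but it is exactly the kind of bookkeeping the paper's shortcut avoids.
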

\begin{proof}
For every real number $s$,
a symbol
$p\in S^{s}_{1,0}\left(\mathbb{R} \right)$ iff $p$ is in $C^\infty\left( \mathbb{R}\right)$ and 
$ \vert \partial_\xi^jp(\xi) \vert\leqslant 
C_j\left(1+\vert \xi\vert\right)^{s-j} $~\cite[Lemm 1.2 p.~295]{Taylor-81} for every $j\in \mathbb{N}$.
Observe that the function $p_t:\xi\in \mathbb{R}\mapsto e^{-t\vert\xi\vert^2}$ 
defines a family $(p_t)_{t\in [0,+\infty)}$ of symbols in
$S^0_{1,0}\left(\mathbb{R} \right)$ such that $p_t\underset{t\rightarrow 0}{\rightarrow} 1$ 
in $S^{+0}_{1,0}\left(\mathbb{R} \right)$.
Indeed, for $k\in \mathbb{N}$ and
for $t$ in some compact interval $[0,a], a>0$, we find by direct computation that~:
$(1+\vert\xi\vert)^k\vert\partial_\xi^ke^{-t\xi^2} \vert\leqslant C(1+\vert\xi\vert)^k\sum_{0\leqslant l\leqslant \frac{k}{2}} t^{k-l}\vert\xi\vert^{k-2l} e^{-t\xi^2}$ where the constant $C$ depends
only on $k$.
 
 When $\vert\xi\vert\geqslant a$, the function
$t\in [0,+\infty)\mapsto (t^{k-l}\xi^{k-2l})e^{-t\xi^2}$ goes to $0$ when $t=0,t\rightarrow +\infty$ and 
reaches its maximum when $\frac{d}{dt}\left( (t^{k-l}\xi^{k-2l})e^{-t\xi^2}\right)
=((k-l)t^{k-l-1}\xi^{k-2l} - t^{k-l} \xi^{k-2l+2}) e^{-t\xi^2}
=((k-l)-t\xi^2) t^{k-l-1} \xi^{k-2l}e^{-t\xi^2}=0 $ for $t=\frac{k-l}{\xi^2}$.
Hence when $\vert\xi\vert\geqslant a$, $$\sup_{t\in [0,a]} (1+\vert\xi\vert)^k\vert(t^{k-l}\xi^{k-2l})\vert e^{-t\xi^2} \leqslant  (k-l)^{k-l}(1+\vert\xi\vert)^k\vert\xi\vert^{-k}\leqslant (k-l)^{k-l}(1+a^{-k})^k.$$
On the other hand, 
if $\vert \xi\vert\leqslant a$, $t\in [0,a]$, we find that
$(1+\vert\xi\vert)^k\vert\partial_\xi^ke^{-t\xi^2} \vert\leqslant  C(1+a)^k\sum_{0\leqslant l\leqslant \frac{k}{2}} a^{2k-3l}$.

Therefore, we showed that
$(1+\vert\xi\vert)^k\vert \partial^k_\xi e^{-t\xi^2} \vert\leqslant C_k $
uniformly on $t\in [0,a]$, hence $p_t\in S^0_{1,0}$ uniformly on $t\in [0,a]$.
We also have
for all $\delta,u>0$,
$t\leqslant\delta^{1+2u}$
implies that
$\sup_{\xi} \vert(1+\vert \xi\vert)^{-u} (e^{-t\xi^2}-1)\vert\leqslant \delta $ which means that 
$\sup_{\xi} \vert(1+\vert \xi\vert)^{-u} (e^{-t\xi^2}-1)\vert\rightarrow 0 $
when $t\rightarrow 0^+$ which implies the convergence $p_t\rightarrow 1$ in $S^{+0}_{1,0}$.
By a result of Strichartz~\cite[Thm 1.3 p.~296]{Taylor-81}, 
\begin{equation}\label{e:convheat}
p_t(\sqrt{\Delta})=e^{-t\Delta}\underset{t\rightarrow 0^+}{\rightarrow} Id\text{ in }\Psi^{+0}_{1,0}(M).
\end{equation}
\end{proof}

\subsubsection{No counterterms for $d\leqslant 3$.}
We now discuss the case of dimension $d=(2,3)$ where we show that the regularized partition function converges
when $\varepsilon\rightarrow 0^+$
and we do not need to subtract local counterterms.
By composition of pseudodifferential operators, we find that
$e^{-2\varepsilon\Delta}\Delta^{-1}V \underset{\varepsilon\rightarrow 0}{\rightarrow} \Delta^{-1}V$ in the space $\Psi^{-2+0}(M)$
of pseudodifferential operators of order $-2+\varepsilon, \forall \varepsilon >0$ which implies that
the convergence occurs in the ideal $\mathcal{I}_2$ of Hilbert--Schmidt operators by~\cite[Prop B 21]{DyZwscatt}.  
By continuity of Gohberg--Krein's determinant on the ideal $\mathcal{I}_2$ i.e. of the map 
$H\in \mathcal{I}_2\mapsto \det_2\left(I+H\right) $~\cite[Thm 9.2]{Simon-traceideals}, we find that
\begin{equation}
Z_g(\lambda)=\lim_{\varepsilon\rightarrow 0^+}  \mathbb{E}\left( \exp\left(-\frac{\lambda}{2}\int_M V(x):\phi_\varepsilon^2(x):  dv\right)\right)= 
\text{det}_2\left(I+\lambda \Delta^{-1} V \right)^{-\frac{1}{2}}.
\end{equation}

By Lemma~\ref{l:detregtraces}, the function $\lambda \mapsto\text{det}_2\left(I+\lambda \Delta^{-1}V  \right)$ has an analytic continuation to the complex plane as an entire function whose zeroes is exactly the set $\{\lambda\in \mathbb{C} \text{ s.t. }\ker\left(  I+\lambda \Delta^{-1}V\right)\neq \{0\} \}$. Thus, we  
find that the divisor of $Z_g^{-2}$ coincides with the subset $\{ \lambda\text{ s.t. }z\lambda=-1, z\in\sigma(\Delta^{-1}V) \}\subset \mathbb{C}$ hence when $V=1$, the partition function $Z_g$ determines 
the spectrum $\sigma(\Delta)$ of the Laplace--Beltrami operator $\Delta$.

\subsection{Explicit counterterms in dimension $d\leqslant 4$.}
%
In what follows, for a separable Hilbert space $H$ and every integer $p\geqslant 1$, 
we shall denote by $\mathcal{I}_p(H)$ the Schatten ideal of operators
whose $p$-th power is trace class. When there is no ambiguity on $H$, we will shortly
write $\mathcal{I}_p$. The space $\mathcal{I}_p(H)$ 
is endowed with the Schatten norm $\Vert.\Vert_{\mathcal{I}_p}$.
$\mathcal{I}_1(H),\mathcal{I}_2(H)$ are the usual ideals of trace class and Hilbert--Schmidt operators 
respectively.

When $d=(2,3)$, $\Delta^{-1}$ is only Hilbert--Schmidt but not trace class and we only need the Wick renormalization
to renormalize the partition function. This is exactly what 
Gohberg--Krein's renormalized determinant $\det_2$ is doing. 
When $d=4$, for $\vert \lambda\vert<\frac{1}{\Vert V\Vert_{L^\infty(M)} \Vert e^{-2\varepsilon\Delta}\Delta^{-1}\Vert_{HS}}$, we start again from the series expansion~:
\begin{eqnarray*}
&&\log \mathbb{E}\left( \exp\left(-\frac{\lambda}{2}\int_M V(x):\phi_\varepsilon^2(x):  dv\right)\right)
= \frac{\lambda^2}{4} Tr_{L^2}\left(\left(e^{-2\varepsilon\Delta}\Delta^{-1}V\right)^2\right)\\
&+&
\frac{1}{2}\sum_{k=3}^{\infty} \frac{(-1)^{k}\lambda^{k}}{k}Tr_{L^2}\left( \left(e^{-2\varepsilon\Delta}\Delta^{-1}V\right)^k \right)
\end{eqnarray*}
where we need to renormalize 
$Tr_{L^2}\left(\left(e^{-2\varepsilon\Delta}\Delta^{-1}V\right)^2\right)$ since for all $k\geqslant 3$, equation~\ref{e:convheat} implies $ \left(e^{-2\varepsilon\Delta}\Delta^{-1}V\right)^k\underset{\varepsilon\rightarrow 0^+}{\rightarrow} \left(\Delta^{-1}V\right)^k \in \Psi^{-2k}(M)\subset \Psi^{-6}(M)$ which are trace class. 
We shall use pseudodifferential calculus to extract the singular part of this term. 
The extraction of the singular part would be easy if we considered the term
$ Tr_{L^2}\left(e^{-4\varepsilon\Delta}\Delta^{-2}V^2 \right) $ using the asymptotic expansion of the heat kernel. But 
as usual, the difficulty lies in the fact that operators do not commute
hence $ \left(e^{-2\varepsilon\Delta}\Delta^{-1}V\right)^2\neq  e^{-4\varepsilon\Delta}\Delta^{-2}V^2 $. The trick is to
arrange the term  $ \left(e^{-2\varepsilon\Delta}\Delta^{-1}V\right)^2$ to produce a commutator term
which is trace class~:
\begin{eqnarray*}
Tr_{L^2}\left(\left(e^{-2\varepsilon\Delta}\Delta^{-1}V\right)^2\right)&=&Tr_{L^2}\left(e^{-4\varepsilon\Delta}\Delta^{-2} V^2\right)+
Tr_{L^2}\left(\underset{\in \Psi^{-5}\text{ hence trace class}}{\underbrace{e^{-2\varepsilon\Delta}\Delta^{-1} [V,e^{-2\varepsilon\Delta}\Delta^{-1}]V}}\right),
\end{eqnarray*}
where the family of heat operators $(e^{-\varepsilon\Delta})_{\varepsilon\in [0,1]}$ is bounded in $\Psi^0(M)$ by equation~\ref{e:convheat}, the commutator term
$[V,e^{-2\varepsilon\Delta}\Delta^{-1}]$ is therefore bounded in
$\Psi^{-3}(M)$ \textbf{uniformly} in
the parameter $\varepsilon\in [0,1]$~\cite[p.~14]{Taylor2}. 
By composition in the pseudodifferential calculus and properties of the commutator of pseudodifferential operators,
we thus find that
$
e^{-2\varepsilon\Delta}\Delta^{-1} [V,e^{-2\varepsilon\Delta}\Delta^{-1}]V \in \Psi^{-5}\left(M \right),
$
uniformly in $\varepsilon\in [0,1]$ 
and is therefore of trace class by Proposition~\cite[Prop B 21]{DyZwscatt} since we are in dimension $d=4$, 
uniformly in the small parameter $\varepsilon\in [0,1]$. 
So we found
that 
$Tr_{L^2}\left(\left(\Delta^{-1} e^{-\varepsilon\Delta} Ve^{-\varepsilon\Delta}\right)^2\right)=Tr_{L^2}\left(e^{-4\varepsilon\Delta}\Delta^{-2} V^2\right)+\mathcal{O}(1) $, the singular part of $Tr_{L^2}\left(\left(\Delta^{-1} e^{-\varepsilon\Delta} Ve^{-\varepsilon\Delta}\right)^2\right)$ 
coincides with that of $Tr_{L^2}\left(e^{-4\varepsilon\Delta}\Delta^{-2} V^2\right)$. 
Then the singular part of $Tr_{L^2}\left(e^{-4\varepsilon\Delta}\Delta^{-2} V^2\right)$ is easily extracted using
the heat kernel asymptotic expansion~\cite{BGV}. First by~\cite[Proposition 3.3 p.~12]{DangZhang} based on the functional calculus of
the Laplace operator and Mellin transform, we have 
$ \Delta^{-2}=\frac{1}{\Gamma(2)}\int_0^\infty (e^{-t\Delta}-\Pi)tdt $
where both sides are defined a priori as elements in $\mathcal{B}(L^2(M),L^2(M))$ 
and $\Pi$ is the orthogonal projector on $\ker(\Delta)$ i.e. constant functions.
Hence $e^{-4\varepsilon\Delta}\Delta^{-2} V^2=\frac{1}{\Gamma(2)}\int_0^\infty e^{-(t+4\varepsilon)\Delta}(Id-\Pi)V^2 tdt $. We use the notation $\mathcal{O}(1)$ to refer to something which is bounded when $\varepsilon\rightarrow 0^+$. First we have the decomposition~:
\begin{eqnarray*}
&&Tr_{L^2}\left(e^{-4\varepsilon\Delta}\Delta^{-2} V^2\right)=\int_0^1 Tr_{L^2}\left(e^{-(4\varepsilon+t)\Delta}V^2 \right) tdt\\
&+&
\underset{\mathcal{O}(1)}{\underbrace{\int_{0}^1 Tr_{L^2}\left(-\Pi V^2\right) tdt+\int_1^{\infty} Tr_{L^2}\left(e^{-(t+4\varepsilon)\Delta}(Id-\Pi)V^2\right) tdt}}
\end{eqnarray*}
since $e^{-(t+4\varepsilon)\Delta}\left(-\Pi\right)=-\Pi$ and 
the integral $\int_1^{\infty} Tr_{L^2}\left(e^{-(t+4\varepsilon)\Delta}(Id-\Pi)V^2\right) tdt$
converges uniformly in $\varepsilon\rightarrow 0$ by exponential decay in $t$ of $Tr_{L^2}\left(e^{-(t+4\varepsilon)\Delta}(Id-\Pi)V^2\right) $. The proof of the exponential decay follows~\cite[7.3 Proof of Lemma 4.1]{DangZhang} and is a consequence of the \textbf{spectral gap} 
for $ e^{-(t+4\varepsilon)\Delta}(Id-\Pi)$. 
Then we may use the asymptotic expansion of the heat kernel~\cite[Thm 2.30]{BGV} to study the term
$e^{-(t+4\varepsilon)\Delta}(x,x)=\frac{1}{(4\pi(t+4\varepsilon))^2}+\mathcal{O}((t+4\varepsilon)^{-1})$. 
This yields~:
\begin{eqnarray*}
&&Tr_{L^2}\left(e^{-4\varepsilon\Delta}\Delta^{-2} V^2\right)=\int_0^1 Tr_{L^2}\left(e^{-(4\varepsilon+t)\Delta}V^2 \right) t dt+\mathcal{O}(1)
\\&=&\frac{1}{(4\pi)^2}\int_0^1 \frac{t}{(4\varepsilon+t)^2}dt\int_M V^2(x)dv+\mathcal{O}(1)
=\frac{\int_M V^2(x)dv}{16\pi^2} \int_{4\varepsilon}^{1+4\varepsilon}  (u^{-1} -4\varepsilon u^{-2}) du +\mathcal{O}(1)\\
&=&\frac{-\log(\varepsilon)\int_M V^2(x)dv}{16\pi^2}+\mathcal{O}(1)
= \frac{\int_M V^2(x)dv}{16\pi^2}\vert\log(\varepsilon)\vert +\mathcal{O}(1).
\end{eqnarray*}  

We conclude by the observation that for $\vert \lambda\vert<\frac{1}{\Vert V\Vert_{L^\infty(M)}\Vert \Delta^{-1}\Vert_{\mathcal{I}_3}}$, $Z_g(\lambda)$
\begin{eqnarray*}
&=&\lim_{\varepsilon\rightarrow 0^+}\mathbb{E}\left(\exp\left(-\frac{\lambda}{2}\int_M V(x) :\phi^2_\varepsilon(x):dv - \frac{\lambda^2\int_MV^2(x)dv}{64\pi^2}\vert\log(\varepsilon) \vert \right) \right)\\
&=&\lim_{\varepsilon\rightarrow 0^+}\exp\left(\underset{\mathcal{O}(1)}{\underbrace{ \frac{\lambda^2}{4}Tr_{L^2}\left(\left(\Delta^{-1} e^{-2\varepsilon\Delta} V\right)^2 \right)- \frac{\lambda^2\int_MV^2(x)dv}{64\pi^2}\vert\log(\varepsilon) \vert}} +
\sum_{k=3}^{\infty} \frac{(-1)^{k}\lambda^{k}}{2k}Tr_{L^2}\left( \left(e^{-2\varepsilon\Delta}\Delta^{-1}V\right)^k \right)\right)\\
&=&\lim_{\varepsilon\rightarrow 0^+} \exp\left( \frac{\lambda^2}{4}Tr_{L^2}\left(\left(\Delta^{-1} e^{-2\varepsilon\Delta} V\right)^2 \right)- \frac{\lambda^2\int_MV^2(x)dv}{64\pi^2}\vert\log(\varepsilon) \vert \right)\text{det}_3\left(I+\lambda\Delta^{-1}e^{-2\varepsilon\Delta} V\right)^{-\frac{1}{2}}\\
&=&e^{P(\lambda)}\text{det}_3\left(I+\lambda\Delta^{-1} V \right)^{-\frac{1}{2}}
\end{eqnarray*}
where we recognized Gohberg--Krein's renormalized determinant $\det_3$. By the properties of $\det_3$ recalled in Lemma~\ref{l:detregtraces}, the expression on the r.h.s. converges when expanded as power series in $\lambda$ provided $\vert \lambda\vert<\frac{1}{\Vert V\Vert_{L^\infty(M)}\Vert \Delta^{-1}\Vert_{\mathcal{I}_3}} $ 
since $\Delta^{-1}e^{-2\varepsilon\Delta} V\underset{\varepsilon\rightarrow 0^+}{\rightarrow} \Delta^{-1}V\in \Psi^{-2}(M)$ hence in the Schatten ideal $\mathcal{I}_3$ and $P$ is a polynomial of degree $2$.

Now we conclude similarly as for dimension $d=(2,3)$, by Lemma~\ref{l:detregtraces}, $\det_3\left(I+\lambda\Delta^{-1} V \right)$
has analytic continuation as an entire function in $\lambda\in\mathbb{C}$ which
vanishes with multiplicity on the set $-\sigma(\Delta^{-1}V)$
which implies that $Z_g$ determines $\sigma(\Delta)$ when $V=1$.

\subsubsection{Conclusion of the proof.}

The proof of identity
\begin{equation}
Z_g(\lambda,V)=\exp\left( P(\lambda)+\sum_{n>\frac{d}{2}} \frac{(-1)^{n}c_n(g,V)\lambda^n}{2n}\right)
\end{equation}
follows immediately from the fact that
for $n>\frac{d}{2}$, composition in the pseudodifferential calculus implies that 
$(\Delta^{-1}V)^n\in\Psi^{-2n}(M)$ is trace class hence
the integrals 
$$c_n(g,V)=\int_{M^n} t_n(x_1,\dots,x_n)V(x_1)\dots V(x_n)dv_n $$ are convergent and equal to
$Tr_{L^2}\left((\Delta^{-1}V)^n \right)$ and $P$ vanishes if the dimension 
$d\leqslant 3$. The conclusion follows from the relation 
of Gohberg--Krein's determinants $\det_p$ with functional traces summarized in Lemma~\ref{l:detregtraces}.

\section{Proof of Corollary \ref{c:coromain}.}

Let us prove the equivalence of claims 1),2),3) in
Corollary \ref{c:coromain}. In this paragraph, we shall denote the Laplace-Beltrami operator of the respective metrics
$g_i, i=1,2$ by $\Delta_{g_i}, i=1,2$ to stress the dependence in the metric.

\begin{itemize}
\item Let us first show that 1)$\implies$ 3) namely if some infinite number 
of Feynman amplitudes coincide then the metrics are isospectral. Set 
$[\frac{d}{2}]=\sup_{k\leqslant \frac{d}{2},k\in \mathbb{Z}}k$.
Observe that arguing as in the proof of Proposition~\ref{t:mainthm1}, when $\vert\lambda\vert<\frac{1}{\Vert \Delta^{-1}\Vert_{[\frac{d}{2}]+1} }$, the series
$ \exp\left( \sum_{n>\frac{d}{2}} \frac{(-1)^{n+1}c_n(g_i)\lambda^n}{n}\right), i=1,2$ converges absolutely to
the Gohberg--Krein determinant
$\det_{[\frac{d}{2}]+1}\left( Id+\lambda\Delta_{g_i}^{-1} \right) $.
So the coincidence of Feynman amplitudes $c_n(g_1)=c_n(g_2), \forall n>\frac{d}{2}$ implies 
the equality $\det_{[\frac{d}{2}]+1}\left( Id+\lambda\Delta_{g_1}^{-1} \right)=\det_{[\frac{d}{2}]+1}\left( Id+\lambda\Delta_{g_2}^{-1} \right) $
as entire functions by analytic continuation. Hence $(g_1,g_2)$ are isospectral by the properties of the zeros of $\det_{[\frac{d}{2}]+1}$.

\item Assume $3)$ namely that
$(M_1,g_1)$ and $(M_2,g_2)$ are isospectral. Our goal is to show how to recover the 
partition function $Z_g$ from the spectrum. 
In proposition~\ref{t:mainthm1}, we established the relation $Z_g(\lambda)=\det_2\left(Id+\lambda\Delta^{-1} \right)^{-\frac{1}{2}}$ when $d\leqslant 3, \vert\lambda\vert< \frac{1}{\Vert \Delta^{-1}\Vert_{HS}}$
and $Z_g(\lambda)=\lim_{\varepsilon\rightarrow 0^+}\exp\left(-\frac{\lambda^2\text{Vol}_g(M)}{64\pi^2}\vert\log(\varepsilon)\vert\right)\det_2\left(Id+\lambda e^{-2\varepsilon\Delta}\Delta^{-1} \right)^{-\frac{1}{2}}=e^{P(\lambda)}\det_3\left(Id+\lambda \Delta^{-1} \right)^{-\frac{1}{2}}$ when $d=4, \vert\lambda\vert< \frac{1}{\Vert \Delta^{-1}\Vert_{3}}$ and where $P$ is some polynomial of degree $2$. The r.h.s of both equalities are \textbf{purely spectral} since~: 
\begin{enumerate}
\item by Lemma~\ref{l:detregtraces}, the Gohberg--Krein determinants can be expressed in terms of 
$Tr_{L^2}\left( \Delta^{-n} \right)=\sum_{\lambda\in \sigma(\Delta)\setminus \{0\}}\lambda^{-n}=c_n(g), \forall n> \frac{d}{2}$ 
by Lidskii's Theorem,
\item for the $d=4$ case, we use the fact that $\text{Vol}_g(M)$ is spectral~\footnote{which follows from Weyl's law.} and $Tr_{L^2}\left( \left(e^{-2\varepsilon\Delta}\Delta\right)^{-n} \right)=\sum_{\lambda\in \sigma(\Delta)\setminus \{0\}} e^{-2n\varepsilon\lambda}\lambda^{-n}$, 
\end{enumerate}
which imply both
3)$\implies$2) and 3)$\implies$ 1). 

\item Finally, assume $2)$ namely that the partition functions are equal $Z_{g_1}=Z_{g_2}$. The equality
implies $\det_{[\frac{d}{2}]+1}\left( Id+\lambda\Delta_{g_1}^{-1} \right)=\det_{[\frac{d}{2}]+1}\left( Id+\lambda\Delta_{g_2}^{-1} \right) $ by the formula relating the partition functions and Gohberg--Krein's determinants hence 
$(g_1,g_2)$ are isospectral and 2) $\implies$ 1) again by $c_n(g)=Tr_{L^2}\left(\Delta^{-n} \right)=\sum_{\lambda\in \sigma(\Delta)\setminus \{0\}}\lambda^{-n}$.
\end{itemize}
We proved the desired equivalences.

Now the main implication of Corollary~\ref{c:coromain} 
is a consequence of
the deep Theorem of Colin de Verdi\`ere~\cite{CDV1,CDV2}, Duistermaat--Guillemin~\cite[Thm 4.5 p.~60]{DG75}
relating the spectrum of the Laplacian and the length spectrum in negative curvature.   
We recall, in the particular case
of metrics with negative curvature~: 
\begin{thm}[Trace formula]\label{t:DGtrace}
Let $(M,g)$ be a smooth 
compact Riemannian manifold with 
negative sectional curvatures and $\Delta$ the Laplace-Beltrami operator.
Then 
the spectrum $\sigma\left(\Delta\right)$
determines
the non marked length spectrum
by the \textbf{trace formula}~:
\begin{equation}
2Re\left(\sum_{\lambda\in \sigma\left(\Delta \right)} e^{i\sqrt{\lambda}t} \right)=\sum_{\gamma} \frac{\ell_\gamma }{m_{\gamma}\vert \det\left(I-P_\gamma \right) \vert^{\frac{1}{2}}} \delta\left(t-\ell_\gamma\right) + L^1_{\text{loc}} , 
\end{equation} 
\footnote{which is an equality in the sense of distributions
in $\mathcal{D}^\prime\left(\mathbb{R}_{>0}\right)$} where $\ell_\gamma$, $m_\gamma$ are the period and multiplicity of the 
orbit $\gamma$ and $P_\gamma$ is the Poincar\'e return map.
Furthermore, the singularities of the wave trace equals the 
length spectrum~:
\begin{equation}
\text{singular support}\left(2Re\left( \sum_{\lambda\in \sigma\left(\Delta \right)} e^{i\sqrt{\lambda}t}\right)\right)=\{\ell_\gamma | [\gamma]\in \pi_1\left(M\right)  \}
\end{equation}
which implies the Laplace 
spectrum $\sigma\left(\Delta\right)$ determines the length spectrum of $\left(M,g\right)$. 
\end{thm}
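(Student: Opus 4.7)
My approach is to realise the wave trace $\Theta(t)=\sum_{\lambda\in\sigma(\Delta)} e^{it\sqrt\lambda}$ as $\Tr\bigl(e^{it\sqrt{\Delta}}\bigr)$ and to analyse its singularities by microlocal methods. First I would construct the half--wave propagator $U(t)=e^{it\sqrt{\Delta}}$ as a Fourier integral operator of order $0$ on $\mathbb{R}_t\times M\times M$ via the Hadamard parametrix for $(\partial_t^2+\Delta)$; the underlying canonical relation is the graph of the geodesic flow $\Phi_g^t$ on $T^*M\setminus 0$, and the principal symbol is a half--density transported by $\Phi_g^t$, normalised by the initial condition $U(0)=\mathrm{Id}$.

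Next I would take the $L^2$ trace by pulling $U(t,x,y)$ back to the diagonal $\{x=y\}$ and pushing forward along the base, applying the standard wavefront calculus for Fourier integral operators to obtain
\begin{equation*}
\WF(\Theta)\ \subset\ \bigl\{(t,\tau)\ :\ \exists\,(x,\xi)\in T^*M\setminus 0,\ \Phi_g^t(x,\xi)=(x,\xi),\ \tau=|\xi|_g\bigr\}.
\end{equation*}
Projecting to the $t$--axis and symmetrising under $t\mapsto -t$ to form the real part then yields the inclusion $\operatorname{sing\,supp}(2\Re\,\Theta)\subset\mathcal{P}([g])\cup\{0\}$.

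To produce the explicit trace formula I would invoke the clean intersection stationary phase principle of Duistermaat--Guillemin. Negative sectional curvature is essential here: every closed geodesic $\gamma$ is hyperbolic, hence $I-P_\gamma$ is invertible and the fixed point set of $\Phi_g^t$ for $t$ in a neighbourhood of $\ell_\gamma$ is a clean one--dimensional submanifold, namely the orbit itself. Evaluating the stationary phase expansion along this clean locus produces exactly the singular contribution $\frac{\ell_\gamma}{m_\gamma|\det(I-P_\gamma)|^{1/2}}\delta(t-\ell_\gamma)$; the primitive length $\ell_\gamma$ arises from integration along the orbit, the multiplicity $m_\gamma$ from counting iterates, and lower order terms are absorbed into the $L^1_{\loc}$ remainder.

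To promote the inclusion to equality of singular supports with $\mathcal{P}([g])$, hyperbolicity guarantees $|\det(I-P_\gamma)|^{1/2}>0$ so each period contributes a nontrivial $\delta$, and in negative curvature the length spectrum is discrete with only finitely many orbits sharing any prescribed length, which rules out cancellation across distinct geodesics. The main obstacle I foresee is the clean intersection verification together with the Maslov index bookkeeping in the stationary phase at each closed orbit; in parallel, the singularity at $t=0$, which produces the Weyl term and the remaining $L^1_{\loc}$ error, must be handled separately via the Hadamard asymptotic expansion of the heat/wave kernel near the diagonal.
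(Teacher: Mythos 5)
Your proposal follows the standard Duistermaat--Guillemin route, which is essentially what the paper does: the theorem is not proved from scratch there but is quoted from Colin de Verdi\`ere \cite{CDV1,CDV2} and Duistermaat--Guillemin \cite{DG75}, with the explicit leading coefficient taken from \cite[Theorem 3 p.~495]{guillemin1977lectures}; your FIO construction of the half--wave propagator, the wavefront inclusion for the trace, and the clean--intersection stationary phase at hyperbolic closed orbits are exactly the ingredients behind those citations.

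The one step where your argument as written has a real gap is the passage from the inclusion $\operatorname{sing\,supp}\left(2\Re\Theta\right)\subset \mathcal{P}([g])\cup\{0\}$ to equality. Knowing that only finitely many closed geodesics share a given length does \emph{not} by itself rule out cancellation: finitely many delta contributions with coefficients $i^{-\sigma_\gamma}\ell_\gamma/(m_\gamma\vert\det(I-P_\gamma)\vert^{1/2})$ could still sum to zero if the Maslov factors $i^{-\sigma_\gamma}$ produced opposite phases. What saves the day in negative curvature --- and what the paper explicitly invokes, citing \cite[Coro 1.1 p.~73]{BPP} --- is that the Maslov index $\sigma_\gamma$ vanishes for every closed geodesic, so all the leading coefficients are strictly positive real numbers and a finite sum of them cannot vanish. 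You flag the ``Maslov index bookkeeping'' as an anticipated obstacle and you also write the leading term without any Maslov factor (which is only legitimate once $\sigma_\gamma=0$ is established), but you then assert non-cancellation from discreteness and hyperbolicity alone; you need to actually insert the vanishing of $\sigma_\gamma$ in negative curvature to close the argument. With that addition the proof is complete and coincides with the paper's (cited) route.
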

For geodesic flows in negative curvature,  
the set of periods forms a discrete subset of $\mathbb{R}_{>0}$ hence each period is isolated
and the corresponding periodic orbits are isolated and in finite number. 
In that case, \cite[Theorem 3 p.~495]{guillemin1977lectures}) gives a leading term for the real part of the distributional flat trace
$2Re\left( Tr^\flat\left(U(.) \right)\right)\in \mathcal{D}^\prime\left(\mathbb{R}_{>0} \right)$ of the wave propagator $U(t)=e^{it\sqrt{\Delta}}$
of the form~:
\begin{eqnarray*}
2Re\left( Tr^\flat\left(U(t) \right)\right)=\sum_{[\gamma]\in \pi_1\left(M\right)} \frac{i^{-\sigma_\gamma}\ell_\gamma }{ m_{\gamma}\vert \det\left(I-P_\gamma \right) \vert^{\frac{1}{2}}} \delta\left(t-\ell_\gamma\right) + L^1_{\text{loc}}.
\end{eqnarray*}
The flat trace $Tr^\flat(U(t))$ of the wave propagator $U(t)$, also called 
distributional trace of $U(t)$, is defined to be the integral of the Schwartz kernel of $U(t)$ on the diagonal: $t\mapsto Tr^\flat(U(t))=\int_M K_t(x,x)dv(x)$ and it is a \textbf{distribution in the variable} $t$.
This formula holds true for every geodesic flow 
whose periodic orbits are countable (form a discrete set)
and such that each periodic orbit is \textbf{nondegenerate} in the sense the Poincar\'e map $P_\gamma$ is hyperbolic. 
In case the metric has negative curvature, each closed geodesic make a 
non-zero contribution to the singular support of $U$ since the Maslov index 
$\sigma_\gamma=0 $ for all $\gamma$ as noted in~\cite[Coro 1.1 p.~73]{BPP}.

The identity 
\begin{eqnarray*}
S_{EH}(g)=Res|_{s=\frac{d}{2}-1} \sum_{\lambda, Z_g(\lambda)^{-2}=0} \lambda^{-s}
\end{eqnarray*}
follows immediately from the spectral interpretation of the integral of the scalar curvature (Einstein--Hilbert action)~\cite[Thm 6.1 p.~26]{ammann2002einstein}. Let us briefly recall the principle of 
this derivation. The first heat invariant 
of the scalar Laplacian 
is directly related to the scalar curvature, 
for $Re(s)>\frac{d}{2}$, the sum
$\sum_{\lambda\in \sigma(\Delta),\lambda>0} \lambda^{-s}$ converges by Weyl's law
and coincides with $Tr_{L^2}\left(\Delta^{-s} \right)$. By the heat kernel expansion, the
trace $Tr_{L^2}\left(\Delta^{-s} \right)$ admits an analytic continuation
as a meromorphic function whose poles at $s=\frac{d}{2}-1$ are related to the first heat invariant.  

\section{Proof of Theorem \ref{t:mainthm1}.}

Let us explain the central ideas in the proof of Theorem \ref{t:mainthm1}.
Recall that we denoted by $N$ some smooth finite dimensional submanifold
of $\mathcal{G}\subset \mathcal{R}(M)$ such that $\partial N\subset \partial \mathcal{G}$.
When $d\leqslant 3$,
our goal is to show that given such $N$, 
the probability distribution of the random variable $ \int_M :\phi^2(x):dv$ determines
a \textbf{finite number} of elements in $N\cap \mathcal{R}\left( M\right)_{\leqslant-\varepsilon}\cap \mathcal{G}_{\geqslant\varepsilon}$.   
First,
the probability distribution of the random variable $ \int_M :\phi^2(x):dv$ determines the moments
$\mathbb{E}\left( \left(\int_M :\phi^2(x):dv\right)^k \right), k\geqslant 2$ of $ \int_M :\phi^2(x):dv$, 
hence the partition
function $Z_g(\lambda)$ whose zeroes give the spectrum $\sigma(\Delta)$ of the Laplace operator by Proposition~\ref{t:mainthm1}. 
Then the length spectrum $\mathcal{P}(g)$ of $M$ is recovered from the Laplace spectrum using 
the trace formula of Duistermaat--Guillemin.
This is the content of subsubsections~\ref{ss:existencewick} and~\ref{ss:speclength}.
When $d=4$, the discussion is simpler since 
we are directly given the partition function $Z_g$ which determines the Laplace spectrum 
by Proposition~\ref{t:mainthm1}.

Now it remains to show that in some 
finite dimensional submanifold $N$ in $ \mathcal{R}\left( M\right)_{\leqslant-\varepsilon}\cap \mathcal{G}_{\geqslant\varepsilon}$ for $d\leqslant 3$ and
$ \mathcal{R}\left( M\right)_{[-\varepsilon^{-1},-\varepsilon]}\cap \mathcal{G}_{>\varepsilon}$ when $d=4$, 
the Laplace spectrum together with the length spectrum
determine a \textbf{finite number} of isometry types. This follows from Proposition~\ref{p:rigidity} and 
let us explain informally the intuition behind this result.
Near every isometry class $[g_0]$ in $N$, one should think that 
there is some neighborhood $U\subset N$ of $[g_0]$ such that 
the map
$$ \text{Isometry class of metric }[g]\in U\subset\tilde{N} \longmapsto \text{length spectrum } \mathcal{P}([g])\subset \mathbb{R}_{>0} $$
is \textbf{injective}. In fact in our proof, we do not deal directly with this nonlinear map (nonlinear in the metric $g$) but with its linearization which is the
X-ray transform.

 Thinking in terms of representatives instead of isometry class, it means that
in some neighborhood of every metric $g_0\in N$, two metrics $(g_1,g_2)\in N^2$ with the same length spectrum must be isometric.
Then the finiteness follows from the compactness properties of isospectral metrics and finite dimensionality of $N$. 
Note that for simplicity of exposition, we prove Proposition~\ref{p:rigidity} by a
contradiction argument but the reader should keep in mind the intuitive picture explained above.

\subsubsection{Existence of Wick square as random variable.}
\label{ss:existencewick}
%
%
We use Proposition~\ref{p:glimmjaffe} on Gaussian measures. 
In dimension $d=(2,3)$, 
the operator $\Delta^{-\frac{1}{2}}V\Delta^{-\frac{1}{2}}\in \Psi^{-2}(M)$ is 
Hilbert--Schmidt and therefore the Wick renormalized 
functional
$ \int_M V:\phi^2(x):dv  $ is a \textbf{well--defined random variable}
in all $L^p(\mathcal{D}^\prime(M),\mu), p\in [2,+\infty)$ 
where $\mu$ is the Gaussian Free Field 
measure on
$\mathcal{D}^\prime(M)$ 
with covariance $\Delta^{-1}$.
From now on, we let $V=1\in C^\infty(M)$. 

\subsubsection{Spectrum of $\Delta$ and probability distribution of the Wick square $\int_M :\phi^2(x):dv$.}
\label{ss:speclength}
Furthermore, the 
probability distribution of the random variable $ \int_M :\phi^2(x):dv$, more precisely its moments are 
related to the partition
function $Z_g(\lambda)$ by the observation that
the series
\begin{eqnarray*}
Z_g(\lambda)=\sum_{n=0}^\infty \frac{(-1)^n\lambda^n}{2^nn!}\mathbb{E}\left(  \left(\int_M :\phi^2(x):dv\right)^n \right)
\end{eqnarray*} 
converges absolutely for 
$\vert\lambda \vert<\frac{1}{\Vert \Delta^{-1}\Vert_{HS} }$ where $\Vert.\Vert_{HS}$ is the Hilbert--Schmidt norm.
Therefore by Proposition~\ref{t:mainthm2}, the 
\textbf{probability distribution of $ \int_M :\phi^2(x):dv$ determines $Z_g$ and its zeroes, hence the spectrum of
$\Delta$}. 

To prove the second claim of Theorem~\ref{t:mainthm2}, we need to recall some
results on compactness of isospectral metrics which will also be useful for the proof of 
the first claim of Theorem~\ref{t:mainthm2}.
\subsubsection{Compactness of isospectral metrics.}
\label{ss:compactness}
A main ingredient of our proof of Theorem~\ref{t:mainthm2}
is compactness of the space of isospectral metrics.  
Note that two isospectral Riemannian surfaces
$(M_1,g_1)$ and $(M_2,g_2)$ have the same genus
since the second heat invariant $a_1=\frac{1}{6} \int_M \mathfrak{K}_g$ is a spectral invariant and is proportional to the integral of the scalar curvature $\mathfrak{K}_g$ on $M$. Therefore, $a_1$
determines the Euler characteristic thus the genus of $M$ by Gauss--Bonnet.
We start by the compactness result of Osgood--Philips--Sarnak~\cite{OPS} which deals with isospectral families surfaces.
\begin{thm}[Compactness for $d=2$]\label{t:compactd2}
An
isospectral set of isometry classes of metrics on a closed surface is sequentially compact in the $C^\infty$--topology.
\end{thm}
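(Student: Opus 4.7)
The plan is to follow the classical strategy of Osgood--Phillips--Sarnak, which couples uniformization, the Polyakov anomaly formula, and iterative bootstrapping on heat invariants. First I would reduce to a family of conformal factors on a \emph{fixed} reference surface. By uniformization, every class $[g]$ in the isospectral family admits a representative $g = e^{2\varphi} g_0$, where $g_0$ is a constant curvature metric in the conformal class of $g$. The first heat invariant $a_0 = (4\pi)^{-1}\mathrm{Vol}_g(M)$ fixes the area, and $a_1 = \tfrac{1}{6}\int_M \mathfrak{K}_g\,dv_g$ is proportional to the Euler characteristic by Gauss--Bonnet, so the genus (and hence the underlying smooth surface, up to diffeomorphism) is constant along the family. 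Fixing a topological model $M$ and a moduli representative $g_0$, the family is described by a family of conformal factors $\{\varphi_i\}$.

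Next I would use the Polyakov formula: up to constants determined by the spectrum,
\begin{equation*}
\log\det\Delta_g - \log\det\Delta_{g_0} = -\frac{1}{12\pi}\int_M \bigl(\tfrac{1}{2}|d\varphi|^2_{g_0} + \mathfrak{K}_{g_0}\varphi\bigr)\,dv_{g_0} + \text{(area term)}.
\end{equation*}
The functional determinant $\det\Delta_g$ is a spectral invariant, and so is the area, hence the right--hand side is constant along the isospectral family. Normalizing $\varphi_i$ so that $\int_M \varphi_i\,dv_{g_0}=0$, this identity together with a Moser--Trudinger inequality (in the genus zero case one must treat the conformal group of $\mathbb{S}^2$ by fixing a center of mass, as in OPS) gives a uniform $H^1$ bound on the conformal factors.

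The heart of the argument, and the main obstacle, is the bootstrap from $H^1$ to $C^\infty$. For this I would exploit the higher heat coefficients $a_k(g)$, each of which is a spectral invariant and is expressible as an integral of universal polynomials in $\mathfrak{K}_{g}$ and its covariant derivatives. In the conformal gauge $g = e^{2\varphi}g_0$, the scalar curvature satisfies $\mathfrak{K}_g = e^{-2\varphi}(\mathfrak{K}_{g_0} - \Delta_{g_0}\varphi)$, so $a_k(g)$ controls an $L^2$ norm of $\Delta_{g_0}^{k-1}\varphi$ modulo lower order terms already controlled by the induction hypothesis. An iterative elliptic regularity argument (using the $H^s$ bound together with the spectral invariant $a_{k+1}$ to control $H^{s+2}$) produces uniform $H^s$ bounds for all $s$. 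Rellich--Kondrachov and a diagonal extraction then yield a subsequence converging in $C^\infty$ to a limit conformal factor $\varphi_\infty$, and hence a limiting metric $g_\infty = e^{2\varphi_\infty}g_0$ in the same conformal class. The limit metric has the same spectrum by continuity of the heat invariants (or by continuity of $\det\Delta$), so it lies in the isospectral set, which proves sequential compactness in the $C^\infty$ quotient topology on $\mathcal{R}(M)$. The delicate points to handle carefully would be the non-compactness of the conformal group of the sphere in genus $0$ (fixed by a Möbius normalization of the center of mass), and the degeneration of moduli in higher genus (ruled out by the uniform area and curvature bounds coming from the first few heat invariants, via Mumford's compactness criterion for hyperbolic surfaces).
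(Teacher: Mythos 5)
This theorem is not proved in the paper: it is imported verbatim from Osgood--Phillips--Sarnak \cite{OPS}, so there is no internal argument to compare yours against. Your outline is, in its architecture, exactly the OPS proof (uniformization to a conformal gauge, Polyakov's anomaly formula plus the spectral invariance of $\det\Delta$ and of the heat coefficients to get an $H^1$ bound on the conformal factor, then a bootstrap on the higher heat invariants $a_k$, whose leading terms control $\int|\nabla^{k-2}\mathfrak{K}_g|^2$, followed by Rellich and a diagonal argument). So as a reconstruction of the cited result the strategy is the right one.

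There is, however, one step you assert without a correct justification, and it is one of the two genuinely hard points of OPS: the claim that degeneration in moduli space for genus $\geqslant 2$ is ``ruled out by the uniform area and curvature bounds coming from the first few heat invariants, via Mumford's compactness criterion.'' Mumford compactness requires a \emph{lower bound on the systole} of the uniformized hyperbolic metrics $g_{0,i}$, and a pinching family of hyperbolic surfaces has constant curvature $-1$ and fixed area throughout, so area and curvature bounds alone are compatible with degeneration and give you nothing. What actually closes this step in OPS is spectral: as the conformal class approaches the boundary of moduli space, small nonzero eigenvalues appear (equivalently, $\det\Delta$ of the uniformized metric tends to $0$), and this is transferred to the metrics $g_i$ through the $H^1$ control on the conformal factors to contradict isospectrality. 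Without that input your $H^1$ bound itself is also at risk, since the Polyakov formula compares $g_i$ to a reference $g_{0,i}$ that may be escaping to infinity in moduli space. The genus-zero Möbius normalization you flag is handled correctly. The bootstrap step is stated a bit loosely (one needs the nonvanishing of the universal constant in front of the top-order term $\int \mathfrak{K}\,\Delta^{k-2}\mathfrak{K}$ in $a_k$, a Gilkey-type computation), but that is a standard point.
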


A general compactness result for isospectral metrics is no longer true in dimension $d=3$, we need some 
further assumptions on the metric.
For $d=3$, we shall use the celebrated result of Brooks-Petersen--Perry~\cite[Thm 0.2 p.~68]{BPP} and Anderson~\cite[Thm 0.1 p.~700]{andersonisospec}.
\begin{thm}[Compactness for $d=3$]\label{t:compactd3}
The space of smooth compact isospectral $3$-manifolds $(M,g)$
for which the length of the shortest closed geodesic is bounded from below
\begin{equation}\label{e:shortgeod}
\ell_M\geqslant \ell >0
\end{equation}
is compact in the $C^\infty$ topology. It follows that there are only finitely many
diffeomorphism types of isospectral $3$-manifolds which satisfy \ref{e:shortgeod}.
\end{thm}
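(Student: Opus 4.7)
The plan is to translate spectral data into Riemannian a priori bounds, feed those into a Cheeger--Gromov type compactness theorem, and bootstrap the resulting $C^{1,\alpha}$ convergence up to $C^\infty$.

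As a first step, I would exploit the heat trace asymptotic
\begin{equation*}
\Tr\!\left(e^{-t\Delta_g}\right) \sim (4\pi t)^{-d/2} \sum_{k\geqslant 0} a_k(g)\, t^k, \qquad t\to 0^+,
\end{equation*}
which is purely spectral. Isospectrality forces all $a_k(g) = \int_M u_k(g)\,dv_g$ to be fixed constants along the family, where $u_k$ is a universal local polynomial in curvature and its covariant derivatives. In particular $d$, $\Vol(M,g)$, $\int_M R_g\,dv_g$, the Gilkey integrand $\int_M (5R_g^2 - 2|\mathrm{Ric}_g|^2 + 2|\mathrm{Rm}_g|^2)\,dv_g$, and all higher analogs are determined. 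Because in dimension three the Weyl tensor vanishes identically, $\mathrm{Rm}_g$ is an algebraic function of $\mathrm{Ric}_g$ and $R_g$; this is what lets me convert the successive $a_k$ identities, by an inductive argument isolating the top order term $\int |\nabla^{k-1}\mathrm{Rm}_g|^2 dv_g$ in each $a_k$, into uniform $H^k$ bounds on $\mathrm{Rm}_g$ for every $k \geqslant 0$.

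Next I would extract geometric bounds. Weyl's law together with the spectrum yields an a priori upper bound on the diameter of $(M,g)$. Sobolev embedding in dimension three promotes sufficiently high $H^k$ bounds on curvature to pointwise bounds on sectional curvature. The hypothesis $\ell_M \geqslant \ell$ on the shortest closed geodesic then combines with Cheeger's lemma
\begin{equation*}
\mathrm{inj}(M,g) \geqslant \min\!\left(\pi/\sqrt{\sup|K_g|},\;\ell/2\right)
\end{equation*}
to give a uniform positive lower bound on the injectivity radius. At this point I have, uniformly in the family, a bound on volume, a bound on diameter, a bound on $\|\mathrm{Ric}_g\|_{L^p}$ for every $p$, and a lower bound on $\mathrm{inj}(M,g)$.

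This is exactly the input of Anderson's compactness theorem: such a family of closed Riemannian $3$-manifolds is precompact in the $C^{1,\alpha}$ topology (up to diffeomorphism), the convergence being realized in harmonic coordinate charts in which the metric satisfies the quasilinear elliptic system $-\tfrac{1}{2}\Delta_g g_{ij} = \mathrm{Ric}_{ij} + Q_{ij}(g,\partial g)$. Elliptic bootstrap, fed by the uniform $H^k$ curvature bounds coming from the higher heat invariants, promotes the convergence to $C^\infty$. The finiteness of diffeomorphism types follows by a soft pigeonhole: a sequence of pairwise non-diffeomorphic isospectral $3$-manifolds satisfying \ref{e:shortgeod} would, by the above, admit a $C^\infty$-convergent subsequence, and metrics sufficiently close in $C^1$ live on diffeomorphic manifolds via the exponential map — a contradiction.

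The main obstacle, and the technical heart of the Brooks--Petersen--Perry and Anderson arguments, is the passage from the \emph{integral} curvature identities furnished by the $a_k$ to the \emph{pointwise} (or high-Sobolev) curvature control required both for Cheeger's injectivity radius lemma and for Anderson's compactness theorem. This step uses dimension three in an essential way through the vanishing of the Weyl tensor, and without the lower bound \ref{e:shortgeod} on closed geodesics the family could degenerate through collapse along short geodesics and no compactness can be expected.
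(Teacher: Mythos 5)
Your proposal is correct and follows essentially the same route the paper takes (and attributes to Brooks--Petersen--Perry and Anderson): heat invariants plus the dimension-three identity for the curvature tensor give uniform Sobolev curvature bounds, the short-geodesic hypothesis combined with Klingenberg's inequality gives the injectivity radius lower bound, and Cheeger--Gromov type compactness plus Cheeger finiteness yield $C^\infty$ precompactness and finiteness of diffeomorphism types. The paper does not reprove this theorem but cites it and sketches exactly this chain of ideas, so there is nothing substantive to add.
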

Let us explain the meaning of the above 
statement in practice. Let $(M_i,g_i)_i$ denotes a sequence of isospectral smooth compact 
$3$--manifolds without boundary whose shortest closed geodesic 
has length bounded from below. Then there is a finite number of manifolds
$\left(M^\prime_1,\dots,M^\prime_k\right)$ and on each $M^\prime_j$ a compact family of metrics
$\mathcal{M}^\prime_j$ such that each of the manifolds $M_j$ is diffeomorphic to one of the $M^\prime_i$ and isometric to an element
of $\mathcal{M}^\prime_i$.

In dimension $d=4$, we use the following Theorem by Zhou~\cite[Thm 1.1 p.~188]{Zhou} which requires an additional assumption on 
sectional curvatures
~: 
\begin{thm}[Compactness for $d=4$]\label{t:compact4}
On a given smooth compact manifold $M$, 
the set of isospectral metrics whose sectional curvatures are bounded in some compact interval  
is \textbf{compact} for the $C^\infty$ topology. 
If $(M_i,g_i)$ is a family of isospectral manifolds of dimension $4$
with negative sectional curvatures bounded in some compact interval, 
then $(M_i,g_i)$ contains only \textbf{finitely many diffeomorphism types}.
\end{thm}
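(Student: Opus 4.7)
The plan is to combine the information extracted from the spectrum via the heat kernel expansion with the sectional curvature pinching to put the isospectral family into the hypotheses of a Cheeger--Gromov type compactness theorem, and then bootstrap to $C^\infty$ regularity by exploiting further heat invariants.

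First I would extract the fundamental geometric bounds implied by isospectrality. Weyl's law forces the volume $\Vol_g(M)$ to be constant along the family. The Minakshisundaram--Pleijel expansion of $\Tr(e^{-t\Delta_g})$ yields the Gilkey heat coefficients $a_k(g)$, each a spectral invariant. In particular, $a_2(g)$ is a universal quadratic expression in the Riemann, Ricci and scalar curvatures integrated over $M$ with positive coefficient on $|R|^2$. Combined with the pointwise bound $|R|_g \leqslant C$ coming from the sectional curvature pinching, this yields a uniform $L^2$ bound on the full curvature tensor, which after exploiting $a_k$ for $k \geqslant 3$ upgrades via interpolation to uniform $L^p$ control of all covariant derivatives $\nabla^j R$.

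Next I would establish the remaining Cheeger--Gromov hypotheses. A two-sided bound on Ricci curvature follows from the sectional pinching, so by the Cheng--Li--Yau type diameter estimate $\mathrm{diam}(M,g)$ is bounded from above in terms of the first non-zero eigenvalue $\lambda_1(\Delta_g)$, which is spectral and hence uniform. With two-sided sectional curvature bounds and a uniform positive volume, Cheeger's injectivity radius lemma gives a uniform positive lower bound on the injectivity radius. The classical Cheeger--Gromov precompactness theorem now applies and yields a subsequence converging in the $C^{1,\alpha}$ topology, in harmonic coordinates, to a limit $C^{1,\alpha}$ metric $g_\infty$ on a smooth manifold $M_\infty$ (equal to $M$ in the first part, possibly varying in the second).

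The main obstacle is to upgrade $C^{1,\alpha}$ convergence to $C^\infty$ and to verify that the limit stays in the isospectral class. For the regularity upgrade, I would work in harmonic coordinates where $\mathrm{Ric}_g$ takes a quasilinear elliptic form in the metric, and combine the uniform $L^2$ control on $\nabla^j R$ for all $j$ with standard elliptic bootstrapping to promote $C^{k,\alpha}$ convergence to $C^{k+1,\alpha}$, reaching $C^\infty$ convergence in the limit. The spectrum depends continuously on the metric in the $C^\infty$ topology, so the limit is isospectral to the family; this closedness together with precompactness proves the compactness assertion. For the second assertion on finitely many diffeomorphism types in negative curvature, I would invoke Cheeger's finiteness theorem: with $|K| \leqslant \kappa$, $\Vol \geqslant v$ and $\mathrm{diam} \leqslant D$, there are finitely many diffeomorphism classes of closed manifolds admitting such a metric, and all four bounds have been extracted above from the spectrum and the curvature pinching.
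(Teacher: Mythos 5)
Your overall architecture --- volume from Weyl's law, Sobolev bounds on the curvature and its covariant derivatives from the Gilkey heat invariants, Cheeger--Gromov precompactness plus elliptic bootstrapping in harmonic coordinates, closedness of the isospectral condition, and Cheeger finiteness for the diffeomorphism types --- is exactly the skeleton of Zhou's proof that the paper sketches. However, there is a genuine gap at the step where you produce the diameter and injectivity radius bounds, and these are precisely the hypotheses without which neither Cheeger's lemma nor Cheeger--Gromov can be invoked. You bound the diameter by a ``Cheng--Li--Yau type estimate'' in terms of $\lambda_1$. Cheng's comparison goes the wrong way for this purpose: with $\mathrm{Ric}\geqslant -(n-1)K$, $K>0$, it gives $\lambda_1\leqslant \tfrac{(n-1)^2K}{4}+\tfrac{C_n}{d^2}$, so a fixed positive $\lambda_1$ yields an upper bound on the diameter $d$ only when $\lambda_1>\tfrac{(n-1)^2K}{4}$. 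In the regime of the theorem (sectional curvatures in $[-\varepsilon^{-1},-\varepsilon]$) there is no reason for this threshold to be met --- hyperbolic manifolds with $\lambda_1$ arbitrarily small and diameter arbitrarily large exist --- so no diameter bound follows. Since your injectivity radius bound is obtained from Cheeger's lemma, which requires the diameter bound as input, the flaw propagates and the Cheeger--Gromov hypotheses are never verified.

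The paper closes this gap by a different mechanism, and it is the one place where isospectrality and negative curvature genuinely interact: by the Duistermaat--Guillemin trace formula, in negative curvature the singular support of the wave trace is the length spectrum, so the length $\ell_1$ of the shortest closed geodesic is a spectral invariant and is therefore fixed along the family; Klingenberg's estimate $\mathrm{inj}(M)\geqslant \inf\left(\tfrac{\pi}{\sqrt{C}},\tfrac{1}{2}\ell_1\right)$ then gives the uniform lower bound on the injectivity radius directly, with no diameter needed. The diameter bound is then recovered afterwards by the standard covering argument: the curvature upper bound and the injectivity radius lower bound give a uniform lower bound on the volume of small balls (G\"unther comparison), and since the total volume is fixed by Weyl's law a maximal family of disjoint small balls is finite, which bounds the diameter of a connected manifold. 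If you replace your Cheng-based step by this trace-formula-plus-Klingenberg argument, the rest of your proposal goes through essentially as in the paper.
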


We would like to sketch the ideas behind the compactness results and we refer to the original papers for additional details. 
Set $n$ some positive integer and 
some real parameters 
$C,\delta,v>0 $.  
We denote by $\mathcal{M}\left(n,\delta,v,C\right)$
the set of $n$--dimensional manifolds 
with bounded sectional curvature $\vert K\vert\leqslant C$, a lower bound on the injectivity radius
$\text{inj}(M)\geqslant\delta>0$ and an upper bound on the volume
$\text{Vol}(M)\leqslant v
$.
Consider isospectral metrics $I^n(C)_{<0}$ whose
sectional curvature $\vert K\vert$ is bounded by some fixed constant $C$ and whose sectional curvature is negative.
Isospectrality and negative curvature 
ensures that Riemannian manifolds in $I^n(C)_{<0}$ have \textbf{fixed volume} since the 
volume is a spectral invariant by Weyl's law. The \textbf{length $\ell_1$ of the
shortest closed geodesic is fixed} using Duistermaat--Guillemin's trace formula. Therefore the
injectivity radius of every Riemannian manifolds in $I^n(C)_{<0}$ is uniformly bounded from below using the inequality
due to Klingenberger~\cite[p.~78]{PetersCheegerfiniteness}~:
\begin{equation}
\text{inj}(M)\geqslant \inf\left(\frac{\pi}{\sqrt{C}},\frac{1}{2}\ell_1  \right).
\end{equation}

Hence $ I^n(C)_{<0}\subset \mathcal{M}\left(n,\delta,v,C\right)$ for some real parameters 
$C,\delta,v>0 $ and by the
Cheeger finiteness Theorem in the version of Peters~\cite[Corollary 3.8 p.~9]{Boileau}~\cite[p.~77]{PetersCheegerfiniteness}, we find that~:
\begin{thm}[Cheeger finiteness for diffeomorphism types]
The Riemannian manifolds 
in $\mathcal{M}\left(n,\delta,v,C\right)$ hence 
in $ I^n(C)_{<0}$
have \textbf{finite number of diffeomorphism types}.
\end{thm}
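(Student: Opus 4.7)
The plan is to prove finiteness of diffeomorphism types in $\mathcal{M}(n,\delta,v,C)$ by combining a precompactness theorem in the $C^{1,\alpha}$ topology with a stability statement asserting that two manifolds sufficiently close in that topology are diffeomorphic. Once both are in place, the existence of infinitely many pairwise non--diffeomorphic elements of $\mathcal{M}(n,\delta,v,C)$ would, after extraction of a convergent subsequence, produce mutually diffeomorphic types, a contradiction.

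The first step is to install uniform local coordinates. Given the bounds $\vert K\vert\leqslant C$ and $\operatorname{inj}(M)\geqslant\delta$, harmonic coordinates on geodesic balls of some uniform radius $r_0=r_0(n,C,\delta)$ are available, and the estimates of Jost--Karcher (or Anderson) yield $C^{1,\alpha}$ bounds on the metric components in these charts that depend only on $n,C,\delta$. Bishop--Gromov volume comparison together with $\operatorname{Vol}(M)\leqslant v$ then bounds by a universal $N=N(n,v,C,\delta)$ the maximum number of pairwise disjoint balls of radius $r_0/2$, so every $M\in\mathcal{M}(n,\delta,v,C)$ admits an atlas of at most $N$ harmonic charts whose transition functions are controlled uniformly in $C^{2,\alpha}$. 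Precompactness then follows by applying Arzela--Ascoli to transition functions and metric components, stabilizing along a subsequence the combinatorial nerve of the cover, and extracting a $C^{1,\alpha}$ limit Riemannian manifold.

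The main step, and the principal obstacle, is the stability statement: two manifolds $(M,g)$ and $(M',g')$ that are close enough in the $C^{1,\alpha}$ sense are diffeomorphic. Given harmonic atlases indexed by a common combinatorial nerve whose transition functions are within $\varepsilon$ of one another in $C^{2,\alpha}$, one builds a global diffeomorphism by patching local chart identifications through a center--of--mass construction with respect to $g'$; the uniform lower bound on injectivity radius guarantees that the construction is well--defined and produces a smooth local diffeomorphism, while its injectivity at the global level is enforced by the same injectivity radius bound together with the combinatorial consistency of the nerves. Stability combined with precompactness then forces the claimed finiteness, since any infinite sequence of pairwise non--diffeomorphic members of $\mathcal{M}(n,\delta,v,C)$ would contain a $C^{1,\alpha}$--Cauchy subsequence whose tail would have to consist of mutually diffeomorphic manifolds. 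The delicate point is the coordinated gluing of local identifications into a single global diffeomorphism, which is where the uniform $C^{2,\alpha}$ control on transition functions is essential.
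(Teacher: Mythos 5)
Your proposal is correct in outline and is essentially the standard proof of Cheeger's finiteness theorem in its modern ``precompactness plus stability'' form. The paper itself does not prove this statement: it invokes it as a black box, citing Peters and Boileau's notes, after having checked the inclusion $I^n(C)_{<0}\subset\mathcal{M}(n,\delta,v,C)$ via Weyl's law, the trace formula and Klingenberg's inequality. What you sketch is precisely the content of those references, in the harmonic-coordinate formulation of Jost--Karcher, Greene--Wu and Anderson (Peters' original argument uses almost-linear/distance coordinates and Lipschitz convergence rather than harmonic charts, but the architecture --- a uniformly controlled atlas with boundedly many charts, Arzel\`a--Ascoli on metric components and transition functions, stabilization of the nerve, and a center-of-mass gluing of local chart identifications into a global diffeomorphism between nearby manifolds --- is identical). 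Two small points worth tightening: the lower bound on the volume of the small balls in your packing argument comes from $K\leqslant C$ together with $\mathrm{inj}(M)\geqslant\delta$ via G\"unther's inequality, not from Bishop--Gromov (which gives the upper comparison used to count); and the gluing step you defer is genuinely the substantive content of the theorem, so as written your text is a correct proof scheme rather than a complete proof --- which is an entirely reasonable level of detail given that the paper treats the whole statement as a citation.
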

To explain the compactness, we recall that
fixing the spectrum of $\Delta$ fixes the heat coefficients. Using some results 
of Gilkey on the structure of heat coefficients~\cite[Thm 2.1 p.~189, Lemm 3.1 p.~193]{Zhou}, Zhou proves that
for all $g\in I^n(C)_{<0}$,
its curvature $R(g)$ is bounded in all Sobolev norms $W^{k,2}(g)$ of order $k$ 
where the Sobolev norms are also defined using the same metric $g\in I^n(C)_{<0}$~\cite[Lemma 3.2 p.~193]{Zhou}.  
Then by some geometric properties of Sobolev constants, we use Sobolev inequalities
to control the $C^k$ norms of $R(g)$ in terms of the $W^{k^\prime,2}(g), k^\prime\in \mathbb{N}$ \textbf{uniformly in the metric} $g\in  I^n(C)_{<0}$.
So we converted global integrated informations on the metric and curvature into pointwise bounds on the curvature.
The conclusion now follows from the $C^k$ version of Cheeger--Gromov compactness Theorem~\cite[p.~701]{andersonisospec}~\cite[Thm 2.2 p.~190]{Zhou}~\cite[Thm $A^\prime$ p.~27]{Kasue}~:
\begin{thm}[Cheeger--Gromov $C^k$ compactness]
Fix a positive integer $k$ and $\alpha\in (0,1)$.
The space of $n$--dimensional Riemannian manifolds s.t.
$\Vert\nabla^j R\Vert_{C^0}\leqslant C$, $\forall j\leqslant k $, $\text{Vol}(M)\geqslant v>0$ and $\text{diam}_M\leqslant D$
is precompact in the $C^{k+1,\alpha}$ topology. 
More precisely
\begin{enumerate}
\item for fixed $M$, given any $\alpha<1$ and any sequence of metrics
$(g_i)_i$ on $M$
satisfying the above bounds, we can extract a convergent subsequence 
in the H\"older $C^{k+1,\alpha^\prime}$ topology for all $\alpha^\prime<\alpha$ to a limit metric $g$ of H\"older regularity $C^{k+1,\alpha}$.

\item For any sequence $\left(M_i,g_i\right)$ satisfying the above bounds, there is a subsequence which converges
\textbf{in the Lipschitz topology} to a limit metric of H\"older regularity $C^{k+1,\alpha}$ for any $0<\alpha<1$.
\end{enumerate}
\end{thm}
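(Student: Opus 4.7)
The plan is to prove this compactness theorem through the harmonic coordinate method, which turns the geometric hypothesis $\Vert\nabla^j R\Vert_{C^0}\leqslant C$ into \emph{elliptic} regularity for the components of $g$. First, I would establish a uniform lower bound on the injectivity radius: the assumption $\vert K\vert\leqslant C$ together with $\text{Vol}(M)\geqslant v$ and $\text{diam}(M)\leqslant D$ puts the manifolds in Cheeger's class, so Cheeger's lemma gives $\text{inj}(M)\geqslant i_0(n,C,v,D)>0$. This is what makes the local geometry non-collapsing and is the input to the harmonic radius estimate of Jost--Karcher/Anderson, which produces at every point $p$ a harmonic coordinate chart $y=(y^1,\dots,y^n):B_g(p,r_0)\to \mathbb{R}^n$ of radius $r_0=r_0(n,C,v,D)>0$ in which the metric components $g_{ij}$ satisfy $C^{-1}\delta_{ij}\leqslant g_{ij}\leqslant C\delta_{ij}$.

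Next I would use the harmonic gauge identity
\begin{equation*}
g^{ab}\,\partial_a\partial_b\, g_{ij}=-2\,\text{Ric}(g)_{ij}+Q_{ij}(g,\partial g),
\end{equation*}
where $Q$ is a quadratic form in $\partial g$ with $g$-dependent coefficients; this is a quasilinear elliptic equation for $g_{ij}$. Bootstrapping: a $C^0$ bound on $R$ yields (via $L^p$ elliptic theory) a $C^{1,\alpha}$ bound on $g_{ij}$ in every harmonic chart, uniformly in the sequence. Assuming inductively $\Vert\nabla^j R\Vert_{C^0}\leqslant C$ for all $j\leqslant k$, the right-hand side of the above PDE gains one order of Hölder regularity at each step, and Schauder estimates applied in harmonic charts promote this to a uniform $C^{k+1,\alpha}$ bound on $(g_{ij})$ in every harmonic chart, with constants depending only on $n,C,v,D,k,\alpha$. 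The choice of \emph{harmonic} coordinates is crucial here because in arbitrary coordinates the principal part of the linearized equation is not controlled.

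For part (1), where $M$ is fixed, I would cover $M$ by a uniformly finite collection of harmonic charts for each $g_i$; after passing to a subsequence the chart centers converge, and Arzelà--Ascoli applied componentwise to the metric tensors yields a $C^{k+1,\alpha'}$ subsequential limit on each chart for every $\alpha'<\alpha$. A diagonal extraction and a standard gluing argument in overlaps then produce a global limit metric $g$ on $M$ of regularity $C^{k+1,\alpha}$. For part (2), the diffeomorphism type is not a priori fixed, so one must first build the limit space: Gromov's precompactness theorem gives a Gromov--Hausdorff limit $(X,d_X)$; the harmonic charts of controlled size provide, via approximating maps $\varphi_i:X\to M_i$ that are uniform bi-Lipschitz equivalences on small balls, a smooth structure on $X$ together with a $C^{k+1,\alpha}$ metric $g$ pulled back from the coordinate limits. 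After a subsequence, the $(M_i,g_i)$ converge to $(X,g)$ in the Lipschitz topology, and the limit inherits the $C^{k+1,\alpha}$ bounds from the uniform Schauder estimates.

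The main obstacle is the gluing/patching step, not the PDE regularity itself: one has to pass from locally controlled harmonic charts to a globally defined limit manifold and a globally defined convergence. For part (1) this reduces to a finite cover argument and transition-function control, which is essentially bookkeeping. The genuinely delicate point is part (2), where the smooth structure of the limit is not given in advance; here the construction of bi-Lipschitz identification maps $\varphi_i:M_i\to M_j$ (or eventually to a limit $M$) out of harmonic charts, and the verification that the transition functions have uniform $C^{k+1,\alpha}$ bounds so that the limit chart atlas is compatible, is the core technical effort. Once those identifications are in place, the extraction of a Lipschitz-convergent subsequence follows from the Arzelà--Ascoli argument applied chart by chart.
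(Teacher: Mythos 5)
The paper does not prove this theorem at all: it is imported verbatim from the literature, with citations to Anderson, Kasue and Zhou, and is used as a black box in the compactness discussion for isospectral metrics. So there is no in-paper argument to compare against; the relevant comparison is with the cited sources, and your sketch is essentially a reconstruction of the proof given there (Cheeger's injectivity radius lemma, the Jost--Karcher/Anderson harmonic radius estimate, the elliptic identity $g^{ab}\partial_a\partial_b g_{ij}=-2\mathrm{Ric}_{ij}+Q(g,\partial g)$ in harmonic gauge, Schauder bootstrap, then Arzel\`a--Ascoli and a Gromov--Hausdorff limit construction for the variable-topology case). The outline is correct, and you correctly identify the chart-gluing and bi-Lipschitz identification step as the genuinely delicate part of (2).

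One caveat worth recording. Part (1) as stated in the paper --- a subsequence of the metrics $g_i$ on the \emph{fixed} $M$ converging in $C^{k+1,\alpha'}$ --- is false as literally written: pulling back a single metric by a divergent sequence of diffeomorphisms satisfies all the bounds but converges to nothing. The correct statement, and the one your construction actually delivers, is convergence of $\varphi_i^*g_i$ for suitable diffeomorphisms $\varphi_i$ built from the harmonic charts, i.e.\ convergence of isometry classes. Since the paper only ever applies the theorem to isometry classes in $\mathcal{R}(M)$, this imprecision is harmless there, but your proof should state explicitly that the Arzel\`a--Ascoli limit is obtained only after composing with these chart-induced diffeomorphisms. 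A second, smaller point: when converting the covariant bounds $\Vert\nabla^j R\Vert_{C^0}\leqslant C$ into H\"older bounds on the coordinate components of $\mathrm{Ric}$ at each stage of the bootstrap, you need the lower-order control on $g_{ij}$ already established at the previous stage; you gesture at this ("gains one order at each step") but the induction should be set up so that the constants at step $j$ depend only on those at step $j-1$, which is what makes the final $C^{k+1,\alpha}$ bound uniform over the whole class.
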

We use the fact that for Riemannian manifolds with metric $g\in I^n(C)_{<0}$,
the assumptions of the above Theorem are satisfied for every $k$, which explains the compactness result
of Theorem \ref{t:compact4}.

\subsubsection{Consequence of the compactness result and proof of the second claim of Theorem~\ref{t:mainthm1}.}
A sequence $(M_i,g_i)$ of Riemannian manifolds of negative curvature
s.t. $ \int_M :\phi^2(x):dv$ has fixed probability distribution 
is in fact an isospectral sequence of Riemannian manifolds. 
But since the Laplace spectrum determines the
length spectrum, the sequence $(M_i,g_i)$ of Riemannian manifolds is isospectral and 
along this sequence the geodesics of shortest length has fixed length $\ell>0$ hence 
the sequence $(M_i,g_i)$ is precompact in the sense of Anderson~\cite{andersonisospec} and by Theorem~\ref{t:compactd3}, there exists
a subsequence such that
$M_i$ has \textbf{fixed diffeomorphism type} and $g_i\rightarrow g$ to some metric $g$ in the $C^\infty$ 
topology
which is the second claim from Theorem~\ref{t:mainthm1}.
The discussion for $d=4$ is similar. A sequence $(M_i,g_i)_i$ of Riemannian manifold 
whose partition function $Z_g$ is given,
is isospectral and the condition that
the sectional curvature is in some bounded interval
$[-\varepsilon^{-1},-\varepsilon]$
implies that the sequence $(M_i,g_i)_i$ satisfies the 
assumptions of
Theorem~\ref{t:compact4}. Then the conclusion follows. 

\subsubsection{Rigidity in negative curvature.}
\label{ss:rigidity}

Up to now, we have proved that the probability distribution of 
$\int_M:\phi^2(x):dv$ determines the Laplace spectrum. Recall 
$\mathcal{R}(M)_{\leqslant -\varepsilon}$ denotes the set of isometry classes of metrics whose sectional curvatures are bounded from above by $-\varepsilon$.
To conclude the proof of claim 1) from Theorem~\ref{t:mainthm1}, it remains to show that~:
\begin{prop}\label{p:rigidity}
Let $M$ be a smooth closed compact manifold and $N$ be some finite dimensional submanifold in $\mathcal{G}\subset \mathcal{R}(M)$ such that
$\partial N\subset \partial \mathcal{G}$.
For all $\varepsilon>0$, 
the set of isospectral metrics
\begin{itemize}
\item in $N\cap \mathcal{R}(M)_{\leqslant -\varepsilon}\cap \mathcal{G}_{\geqslant\varepsilon}$ when $d\leqslant 3$,
\item in $N\cap \mathcal{R}(M)_{[-\varepsilon^{-1}, -\varepsilon]}\cap \mathcal{G}_{\geqslant\varepsilon}$ when $d=4$, 
\end{itemize}
is finite.
\end{prop}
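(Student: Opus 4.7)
The plan is a compactness-plus-rigidity argument in the spirit of Sarnak's approach to isospectral rigidity. Arguing by contradiction, suppose there is an infinite sequence $([g_i])_{i\in \mathbb{N}}$ of pairwise distinct isospectral classes in $N\cap \mathcal{R}(M)_{\leqslant -\varepsilon}\cap \mathcal{G}_{\geqslant\varepsilon}$ (or in the analogous set for $d=4$). The compactness Theorems \ref{t:compactd2}, \ref{t:compactd3}, and \ref{t:compact4} apply to such families: in negative curvature the length of the shortest closed geodesic is determined by the Laplace spectrum via the Duistermaat-Guillemin trace formula, and for $d=4$ the two-sided pinching gives the hypotheses of Theorem \ref{t:compact4}. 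So I can extract a subsequence converging in the $C^\infty$ topology on $\mathcal{R}(M)$ to some limit $[g_\infty]$. The inequality $\mathbf{d}([g_i],\partial\mathcal{G})\geqslant \varepsilon$ passes to the limit, giving $[g_\infty]\in \mathcal{G}$, and the assumption $\overline{N}\setminus N=\partial N\subset \partial\mathcal{G}$ then forces $[g_\infty]$ to lie in the interior of $N$.

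Since $[g_\infty]\in \mathcal{G}$, Ebin's slice theorem provides a smooth ILH-manifold structure around $[g_\infty]$. Using finite-dimensionality of $N$, I pick a smooth chart $\Psi:U\subset \mathbb{R}^k\to N$ with $\Psi(0)=[g_\infty]$ and $k=\dim N$. The locus of metrics in $\Psi(U)$ isospectral to $g_\infty$ pulls back to the zero set in $U$ of the countable family of real-analytic functions $u\mapsto \lambda_m(\Psi(u))-\lambda_m([g_\infty])$, $m\geqslant 0$; analyticity comes from Kato-Rellich perturbation theory (treating multiple eigenvalues via spectral projectors). Since $[g_i]\to [g_\infty]$ with each $[g_i]\neq [g_\infty]$, this zero set contains a sequence $u_i\to 0$ with $u_i\neq 0$. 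Passing to a subsequence so that $u_i/\Vert u_i\Vert\to v$, the analytic equations vanishing along $u_i$ force $d\lambda_m([g_\infty])\cdot v=0$ for all $m$. Writing $h=d\Psi(0)\cdot v\in T_{[g_\infty]}\mathcal{G}$, I obtain a nonzero infinitesimal isospectral deformation of $g_\infty$.

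The contradiction now comes from infinitesimal spectral rigidity in negative curvature. The Duistermaat-Guillemin trace formula recalled in Corollary \ref{c:coromain} shows that an infinitesimal isospectral deformation is also length-spectrum preserving, so the variation of each period $\ell_\gamma(g_t)$ vanishes to first order; this first variation is exactly the X-ray transform of $h$ along the closed geodesic $\gamma$. Guillemin-Kazhdan's theorem on surfaces, together with the higher dimensional extensions of Croke-Sharafutdinov, and the Anosov version of Guillarmou-Lefeuvre (applicable when $d=4$ thanks to the two-sided pinching $[-\varepsilon^{-1},-\varepsilon]$), give injectivity of the X-ray transform on solenoidal symmetric $2$-tensors. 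In the Ebin slice transverse to the $\mathbf{Diff}(M)$-orbit, which is what parametrizes $\mathcal{G}$ locally, tangent vectors are automatically solenoidal. Therefore $h=0$, hence $v=0$, contradicting our construction, and $[g_\infty]$ is isolated in the isospectral set within $N$. The main obstacle I anticipate is the rigorous bookkeeping of the last step: turning \emph{isospectral along a smooth curve in} $\mathcal{G}$ into \emph{a solenoidal symmetric $2$-tensor on which the X-ray transform vanishes}, and verifying uniformly across $d=2,3,4$ that the curvature hypotheses do place us in a regime where the appropriate X-ray injectivity result is available.
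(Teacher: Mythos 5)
Your overall architecture --- contradiction, compactness of the isospectral family, reduction to a solenoidal tangent direction at the limit metric, and injectivity of the X-ray transform on solenoidal $2$-tensors --- is the same as the paper's. The gap is in the middle step, where you linearize the isospectrality condition. First, the functions $u\mapsto\lambda_m(\Psi(u))-\lambda_m([g_\infty])$ are not real-analytic, and in general not even differentiable at $u=0$: Kato--Rellich analyticity applies to an \emph{analytic one-parameter} family of operators, whereas $\Psi$ is a merely smooth, multi-parameter chart, and the ordered $m$-th eigenvalue is only Lipschitz at points where multiplicities cross. So the quantity $d\lambda_m([g_\infty])\cdot v$ on which your blow-up argument rests need not exist (the usual fix is to linearize smooth spectral functions such as $\Tr\left(e^{-t\Delta_{g(u)}}\right)$ rather than individual eigenvalues). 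Second, even granting a well-defined infinitesimal isospectral deformation $h$, the claim that ``the variation of each period vanishes to first order'' is not a restatement of the Duistermaat--Guillemin formula: it is the Guillemin--Kazhdan infinitesimal rigidity lemma, which requires analyzing the first variation of the wave-trace singularities and is a nontrivial statement you would have to prove, not merely cite via Corollary~\ref{c:coromain}.

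Both difficulties are avoidable, and the paper avoids them by never differentiating the spectrum. Since the $g_i$ are genuinely isospectral and negatively curved, Duistermaat--Guillemin gives equality of length spectra as \emph{sets}; discreteness of the length spectrum plus continuity of $i\mapsto\ell_{g_i}(\gamma_i)$ then yield the exact equality $\ell_{g_i}(\gamma_i)=\ell_{g_\infty}(\gamma)$ for $i\geqslant N_\gamma$. The paper then exploits that in negative curvature each closed geodesic is the unique length minimizer in its free homotopy class to get the two-sided squeeze $\delta_{\gamma}\left(g_i-g_\infty\right)\geqslant 0\geqslant\delta_{\gamma_i}\left(g_i-g_\infty\right)$, normalizes $\varepsilon_i=g_i-g_\infty$ by $\Vert\varepsilon_i\Vert_\infty$ (extracting a nonzero limit $u$ using the finite dimension of $N$, much as you do), and passes to the limit using the weak-$*$ convergence $\delta_{\gamma_i}\rightarrow\delta_\gamma$ to conclude $I_2(u)_\gamma=0$ --- no differentiability of eigenvalues or of length functions is needed. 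Note also that tangent vectors to an arbitrary finite-dimensional $N\subset\mathcal{G}$ are not ``automatically solenoidal''; one must first invoke the slice theorem (Theorem~\ref{l:makemetsolenoidal}) to replace the $g_i$ by isometric representatives whose differences from $g_\infty$ are solenoidal w.r.t.\ $g_\infty$, at the price of landing only in $C^{k,\alpha}$, which is why the paper needs the $C^{k,\alpha}$ versions of the Livsic theorem and of X-ray injectivity.
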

We prove Proposition~\ref{p:rigidity} by giving 
a simple adaptation of a result due to Sarnak~\cite{sarnak1990determinants} in dimension $2$ and 
Sharafutdinov~\cite{sharafutdinov2009local} for hyperbolic metrics that 
for a finite dimensional manifold of metrics of negative curvature, 
there are only a \textbf{finite number of isospectral metrics}. 

\section{Proof of Proposition~\ref{p:rigidity}.}

In the next subsection, we introduce the geometrical tools 
needed to prove Proposition~\ref{p:rigidity}.

\subsubsection{Convergence in the space of metrics.}

Let us recall the notion of convergence in the moduli space $\mathcal{R}\left(M\right)$.
We work on a smooth closed compact manifold $M$ of dimension $d=2,3$. 
The convergence of isometry classes $[g_n]\rightarrow [g] $ means that there is a sequence of representatives
$g_n\rightarrow g$ in the $C^\infty$ topology for $2$-tensors.

\subsubsection{Symmetric tensors on Riemannian manifolds and a Hodge type decomposition of metrics.}
\label{ss:symtenshodge}
In the sequel, for any smooth vector bundle $E\mapsto M$, 
we shall use the notation $C^{k,\alpha}(E), k\in \mathbb{N}, \alpha\in(0,1)$, $H^s(E),s\in \mathbb{R}$, $C^0(E)$ to denote 
sections of $E$ of H\"older regularity $C^{k,\alpha}$, Sobolev $H^s$ and $C^0$ respectively.
Consider a Riemannian manifold $(M,g)$ and denote by $d\lambda$ the Liouville metric
on $SM$.
Consider the space of symmetric covariant $m$--tensor denoted by $S^mT^*M\subset T^mT^*M$ where $T^mT^*M$ are the
covariant $m$--tensors on $M$.
We will denote by $\sigma$ the natural symmetrization operator
acting on sections of $T^mT^*M$~\cite[p.~1267-1268]{croke1998} whose image are sections of 
$S^mT^*M$. 
The metric $g$ on $TM$ defines a canonical vertical metric on $T^mT^*M$ which induces a 
canonical $L^2$ structure on the space of smooth sections $C^\infty\left(T^mT^*M \right)$.
The Liouville measure $d\lambda$ also induces an $L^2$ structure on $C^\infty(SM)$.
Then there is a map denoted by $\pi_m^*$ going from
$C^0\left(S^mT^*M\right)$ to $C^0\left(SM\right)$ 
which identifies a symmetric $m$-tensor with a function on $SM$~: for $(x;v)\in SM$, $f\in C^0\left(S^mT^*M\right)$ we have
$  \pi_m^*f(x,v)=f(x;v,\dots,v)$ whose formal adjoint $\pi_{m*}$ with respect 
to the two $L^2$ structures defined above is defined as~: 
\begin{eqnarray}
\left\langle\pi_m^*f,u\right\rangle_{L^2(SM)}=\left\langle f,\pi_{m*}u \right\rangle_{L^2(S^*T^*M)}.
\end{eqnarray}
If $\nabla$ denotes the Levi--Civita connection, 
we define an operator
$D_g=\sigma \circ\nabla: C^\infty\left( S^mT^*M\right) \longmapsto C^\infty\left( S^{m+1}T^*M\right)$.
Its formal adjoint w.r.t. the $L^2$ scalar product on $S^mT^*M $ reads
$-D^*_g=-Tr\left( D_g\right)=-Tr\left(\nabla \right)$
where the trace is taken w.r.t. the first two factors:
$$ Tr\left(\nabla u\right)_{\alpha_1\dots \alpha_{m-1}}=\nabla^\beta u_{\beta \alpha_1\dots \alpha_{m-1}}  $$
where we sum over the repeated index $\beta$.
There is an explicit relation between
the operator $D_g$ and the generator $X\in C^\infty(SM)$ of the geodesic flow acting by Lie derivative~\cite{croke1998}~\cite[Prop 3.10 p.~28]{LefeuvreM2}~: 
\begin{equation}\label{e:identityXgeodvsD}
X\pi_m^*=\pi_{m+1}^*D_g.
\end{equation}

A fundamental result for inverse problems in negative curvature is some kind of Hodge type decomposition of 
metrics due to 
Croke--Sharafutdinov~\cite[Thm 2.2 p.~1269]{croke1998}~\cite[Thm 3.8 p.~26]{LefeuvreM2}~:
\begin{thm}\label{t:hodgemetrics}
Let $(M,g)$ be a compact Riemannian manifold s.t. the geodesic flow
on $SM$ has at least one dense geodesic and
$k\geqslant 1$ an integer. Then
every symmetric $2$--tensor $T\in H^k\left(S^2T^*M\right)$
admits the following \textbf{unique decomposition} 
\begin{equation}
T=T^s+D_g\theta,\,\ D_g^*T^s=0  
\end{equation}
where $T^s\in \ker(D_g^*)\cap H^k\left(S^2T^*M\right)$ is called the \textbf{solenoidal part} w.r.t. $g$ of the tensor~\footnote{also called divergence free part} 
and $D_g\theta=\sigma \nabla \theta$ is the \textbf{potential part} w.r.t. $g$
where $\theta \in H^{k+1}(T^*M)$ is a $1$--form,
$\nabla$ is the covariant derivative w.r.t. $g$ and $\sigma$ is the symmetrization operator. 
\end{thm}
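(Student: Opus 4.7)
The plan is to realize the decomposition by inverting a second-order elliptic operator on $1$-forms, namely $\mathcal{L} := D_g^* D_g$. First I would check that $D_g: H^{k+1}(T^*M) \to H^k(S^2T^*M)$ has principal symbol $\theta \mapsto \tfrac{1}{2}(\xi\otimes\theta + \theta\otimes\xi)$ at $\xi \in T_x^*M$, so that the principal symbol of $\mathcal{L}$ is positive definite in $\xi$. This makes $\mathcal{L}$ an elliptic, self-adjoint Fredholm operator $H^{k+1}(T^*M) \to H^{k-1}(T^*M)$ on the compact manifold $M$ by the standard elliptic package.

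The critical step is to establish injectivity of $\mathcal{L}$. Since $\langle \mathcal{L}\theta,\theta\rangle_{L^2} = \|D_g\theta\|_{L^2}^2$, this reduces to showing that $D_g\theta = 0$ forces $\theta = 0$ (for $\theta$ first assumed smooth, then extended by elliptic regularity). Here the dense-orbit hypothesis is the key input. Applying the identity \eqref{e:identityXgeodvsD}, the equation $D_g\theta = 0$ translates into $X\pi_1^*\theta = \pi_2^* D_g\theta = 0$ on $SM$, so the function $\pi_1^*\theta: (x,v)\mapsto \theta_x(v)$ is invariant under the geodesic flow. The existence of one dense orbit then forces $\pi_1^*\theta$ to be constant on all of $SM$; but $\theta_x(v)$ is linear, hence odd, in $v$, so the constant must be $0$, whence $\theta = 0$.

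Being self-adjoint, Fredholm and injective, $\mathcal{L}$ is an isomorphism on the Sobolev scale. Given $T \in H^k(S^2T^*M)$, I would set $\theta := \mathcal{L}^{-1} D_g^* T \in H^{k+1}(T^*M)$ and $T^s := T - D_g\theta \in H^k(S^2T^*M)$, so that $D_g^* T^s = D_g^* T - \mathcal{L}\theta = 0$ by construction. For uniqueness, if $T_1^s + D_g\theta_1 = T_2^s + D_g\theta_2$ with both $T_i^s \in \ker D_g^*$, applying $D_g^*$ gives $\mathcal{L}(\theta_1 - \theta_2) = 0$, hence $\theta_1 = \theta_2$ by injectivity, and then $T_1^s = T_2^s$.

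The main obstacle is really the injectivity of $D_g$ on $1$-forms: the ellipticity and Fredholm theory are routine. The density assumption on the geodesic flow is precisely what rules out non-trivial $1$-forms $\theta$ whose dual vector field $\theta^\sharp$ is Killing; on manifolds with non-trivial continuous isometry group such $\theta$ exist and genuinely obstruct the uniqueness of the decomposition, which is why the dynamical hypothesis is both natural and essentially necessary.
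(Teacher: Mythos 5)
Your proof is correct and is essentially the standard argument from the references the paper cites for this theorem (Croke--Sharafutdinov, Lefeuvre): the paper itself states the decomposition without proof, and your route through the elliptic, self-adjoint operator $D_g^*D_g$ on $1$-forms, with injectivity supplied by the dense-orbit hypothesis via $X\pi_1^*=\pi_2^*D_g$ and the oddness of $\pi_1^*\theta$ in $v$, is exactly how it is established there. The only quibble is your closing remark: Killing $1$-forms obstruct uniqueness of the potential $\theta$, not of the splitting $T=T^s+D_g\theta$ itself, which persists on any compact manifold because $\operatorname{ran}(D_g)\perp\ker(D_g^*)$ in $L^2$.
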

The uniqueness of the decomposition and
$C^\infty=\bigcap_{k+1}^\infty H^k$
implies the above Theorem 
holds true for $C^\infty$ tensors. 
%
%
Geometrically, a consequence of the above Theorem is that the tangent space $T_{g}\textbf{Met}(M)\simeq C^\infty(S^2T^*M)$ to any metric $g$ whose geodesic flow admits at least one dense orbit, there is a decomposition of the form~: 
$$T_{g}\textbf{Met}(M) = \text{solenoidal tensors for }g \oplus \text{potential tensors for }g.$$

\subsection{The geometry of $\textbf{Met}(M)$ and a slice Theorem.}

In the next Theorem, we shall examine the consequences of the above Hodge type decomposition for 
the geometry of $\textbf{Met}(M)$.
In some sense, it 
shows that the space $g+\ker\left(D^*_g \right)$ of \textbf{perturbations of $g$ which are solenoidal w.r.t. $g$} is transverse, near $g$, to the
orbits of $\textbf{Diff}(M)$ and is therefore
a local slice to the orbit of $\textbf{Diff}(M)$ near $g$.
In fact, such 
result was proved by Ebin in his thesis in the ILH setting as described in 
subsubsection~\ref{ss:modulimetrics} but the slice Theorem we present 
here is more adapted to negatively curved metrics.
A consequence of the Hodge type decomposition in the space of metrics from Theorem~\ref{t:hodgemetrics} is the following~\cite[Lemma 4.1]{guillarmoulefeuvre}(see also~\cite[Thm 2.1]{croke2000local} for the boundary case)~: 
\begin{thm}\label{l:makemetsolenoidal}[Croke--Dairbekov--Sharafutdinov, Guillarmou--Lefeuvre]
Let $M$ be a compact manifold.  
For any smooth metric $g_0$ whose geodesic flow has one dense orbit, 
for every integer $k\in \mathbb{N}, k\geqslant 2 $ and real number
$\alpha\in (0,1)$, 
there exists a neighborhood $\mathcal{U}$ of $g_0$ in $C^{k,\alpha}(S^2T^*M)$
such that for any $g\in \mathcal{U}$, there is a $C^{k,\alpha}$ metric $g^\prime=\Phi^*g$ isometric to $g$
where $\Phi$ is a diffeomorphism of regularity $C^{k+1,\alpha}$ 
such that $g^\prime-g_0$ is \textbf{solenoidal} w.r.t $g_0$, moreover the map $\Psi~:g\in \mathcal{U}\subset C^{k,\alpha}(S^2T^*M) \mapsto g^\prime\in C^{k,\alpha}(S^2T^*M)$ is \textbf{smooth}.

In particular, the above Theorem holds true for $g_0$ with negative sectional curvatures.
\end{thm}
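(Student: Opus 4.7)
The plan is to obtain $\Phi$ via the implicit function theorem applied to a gauge-fixing map. Given $g_0$, I would define
\begin{equation*}
F:\mathcal{U}\times\mathcal{V}\longrightarrow C^{k-1,\alpha}(T^*M),\qquad F(g,\Phi)=D_{g_0}^{*}\bigl(\Phi^{*}g-g_0\bigr),
\end{equation*}
where $\mathcal{U}$ is a $C^{k,\alpha}$ neighborhood of $g_0$ in symmetric $2$-tensors and $\mathcal{V}$ is a small neighborhood of the identity in the Banach manifold of $C^{k+1,\alpha}$ diffeomorphisms of $M$. Then $F(g_0,\mathrm{Id})=0$, and solving $F(g,\Phi(g))=0$ produces $g':=\Phi(g)^{*}g$, which is isometric to $g$ and whose difference with $g_0$ is solenoidal w.r.t.\ $g_0$. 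The map $\Psi$ will then be $g\mapsto\Phi(g)^{*}g$, smoothness of $\Psi$ following from the IFT output.

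Next I would compute the partial derivative in the $\Phi$-direction at $(g_0,\mathrm{Id})$ by parametrizing diffeomorphisms via flows of vector fields $X\in C^{k+1,\alpha}(TM)$. The standard identity $\mathcal{L}_X g_0=2\,D_{g_0}(X^{\flat})$, where $X^{\flat}$ is the $g_0$-dual $1$-form, yields
\begin{equation*}
\partial_{\Phi}F\bigl|_{(g_0,\mathrm{Id})}(X)\;=\;2\,D_{g_0}^{*}D_{g_0}(X^{\flat}).
\end{equation*}
This is a formally self-adjoint, second-order elliptic operator between $C^{k+1,\alpha}(T^{*}M)$ and $C^{k-1,\alpha}(T^{*}M)$, hence Fredholm of index zero, so invertibility reduces to triviality of its kernel.

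For the kernel step, if $\theta=X^{\flat}\in\ker D_{g_0}^{*}D_{g_0}$, pairing in $L^{2}$ gives $\Vert D_{g_0}\theta\Vert^{2}=0$, so $D_{g_0}\theta=0$ and $X$ is Killing for $g_0$. Applying identity~\eqref{e:identityXgeodvsD} to $\theta$ shows that $X\,(\pi_1^{*}\theta)=\pi_2^{*}D_{g_0}\theta=0$, meaning the continuous function $(x,v)\mapsto\langle X(x),v\rangle_{g_0}$ on $SM$ is constant along the geodesic flow. Density of some orbit forces this function to be constant on $SM$; being linear in the fiber it must vanish, so $X\equiv 0$. (For $g_0$ of negative sectional curvature this also follows from Bochner.) Thus $\partial_{\Phi}F|_{(g_0,\mathrm{Id})}$ is an isomorphism, the Banach IFT produces a smooth map $g\mapsto\Phi(g)$ solving $F(g,\Phi(g))=0$, and standard elliptic bootstrap ensures the gain of one derivative for $\Phi$ over $g$.

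I expect two main technical points. The cleaner one is the kernel-triviality argument, which is made almost tautological by the key identity \eqref{e:identityXgeodvsD}; the more delicate one is verifying that $F$ is genuinely $C^{\infty}$ as a map between the chosen Banach manifolds. The composition $(g,\Phi)\mapsto\Phi^{*}g$ is only well-behaved in Hölder scales with a controlled loss of derivatives, so one must set up the IFT on $C^{k+1,\alpha}$-diffeomorphisms acting on $C^{k,\alpha}$-metrics and land in $C^{k-1,\alpha}$ $1$-forms after applying $D_{g_0}^{*}$, which is exactly the regularity count above; once this is done, the IFT yields the slice, and the final $g'=\Phi(g)^{*}g$ has the same $C^{k,\alpha}$-regularity as $g$, as required.
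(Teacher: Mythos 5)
The paper does not actually prove this statement: it is quoted from \cite[Lemma 4.1]{guillarmoulefeuvre} and \cite[Thm 2.1]{croke2000local}, with only the heuristic decomposition $T_{g_0}\textbf{Met}(M)=\ker D_{g_0}^*\oplus \mathrm{Im}\,D_{g_0}$ and the remark that the proof goes by a Banach fixed point. Your sketch follows exactly the route of those references: gauge--fix by $D_{g_0}^*(\Phi^*g-g_0)=0$, identify the linearization with $2D_{g_0}^*D_{g_0}$ via $\mathcal{L}_Xg_0=2D_{g_0}X^\flat$, and kill the kernel using ellipticity, the identity $X\pi_1^*=\pi_2^*D_{g_0}$, the dense orbit, and oddness in the fiber variable. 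That part of your argument is correct and is precisely the mechanism behind the cited lemma.

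There is, however, a genuine gap in the way you invoke the implicit function theorem, and it is not merely the bookkeeping issue you acknowledge. For the Banach IFT you need $\partial_\Phi F$ to exist and be continuous, with values in $\mathcal{L}\bigl(C^{k+1,\alpha}(TM),\,C^{k-1,\alpha}(T^*M)\bigr)$, on a \emph{neighborhood} of $(g_0,\mathrm{Id})$, not just at the center. But $\partial_\Phi F(g,\Phi)\cdot X$ involves $D_{g_0}^*\bigl(\Phi^*\mathcal{L}_{X'}g\bigr)$, and $\mathcal{L}_{X'}g$ contains the term $X'^{l}\partial_l g_{ij}$; for $g$ merely $C^{k,\alpha}$ this lies only in $C^{k-1,\alpha}$, so after applying the first--order operator $D_{g_0}^*$ the derivative lands in $C^{k-2,\alpha}$ rather than $C^{k-1,\alpha}$. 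Your regularity count works only at $g=g_0$, where $g_0$ is smooth. Consequently $F$ is not $C^1$ into the target you chose, and the IFT as stated does not apply; in the target $C^{k-2,\alpha}$ where $F$ is $C^1$, the linearization is no longer surjective. The standard repair (and what the cited proofs actually do) is to rewrite the equation as a fixed--point problem
\begin{equation*}
v^\flat\;=\;-\tfrac12\,(D_{g_0}^*D_{g_0})^{-1}\Bigl[D_{g_0}^*(g-g_0)+R(g,v)\Bigr],
\end{equation*}
where $R$ collects the terms that are quadratically small or carry a factor of $g-g_0$; the elliptic inverse $(D_{g_0}^*D_{g_0})^{-1}:C^{k-1,\alpha}\to C^{k+1,\alpha}$ regains the two derivatives lost by $D_{g_0}^*\mathcal{L}_{(\cdot)}$, and one checks this map is a contraction on a small ball of $C^{k+1,\alpha}(TM)$ with contraction constant $O(\Vert g-g_0\Vert_{C^{k,\alpha}})$. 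The asserted smoothness of $\Psi$ also does not come for free from this scheme (higher $v$--derivatives of $\phi_v^*g$ consume further derivatives of $g$) and requires the separate argument given in \cite{guillarmoulefeuvre}. So the architecture of your proof is right, but the analytic core --- making the inversion step legitimate despite the loss of one derivative --- is exactly the part that still needs to be supplied.
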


Intuitively, 
the picture one should have in mind is that in the space $\textbf{Met}(M)$ of metrics (viewed as an open cone of the space of $2$--tensors hence as a Fr\'echet manifold), the tangent space $T_{g_0}\textbf{Met}(M)$ to $g_0$ 
admits the decomposition~: 
\begin{eqnarray}\label{e:decompmetric}
T_{g_0}\textbf{Met}(M) = \underset{\text{solenoidal tensors for }g_0}{\underbrace{\ker D_{g_0}^*}} \oplus \underset{\text{potential tensors for }g_0}{\underbrace{\text{Im} D_{g_0}}}.
\end{eqnarray}
The space of potential tensors for $g_0$ is precisely the tangent part to the orbit through $g_0$ of the action of the group of diffeomorphisms which is $T_{g_0} (\textbf{Diff}(M).g_0)$. Hence starting from $g_0$ and adding a small solenoidal part exactly means moving
in the transversal direction to the orbits of $\textbf{Diff}(M)$ which means after projection that we are moving in the quotient space
$\mathcal{R}\left(M \right)=\textbf{Met}(M)/\textbf{Diff}(M)$.
Since the above 
Theorem is proved using Banach fixed point, the metric $g^\prime$ isometric to $g$ is only known to belong 
to some H\"older space $C^{k,\alpha}$ and not necessarily $C^\infty$ 
but this is sufficient for our purpose since the index $k\geqslant 2$. 
%
%
%
%

\subsection{Injectivity of the X--ray transform.}
\label{s:Xray}
Periodic orbits $\gamma$ of the vector field $X\in C^\infty(T(SM))$
which generates the geodesic flow of $g$ on $SM$
are defined as continuous maps~:
\begin{equation}
\gamma:t\in [0,T_\gamma]\longmapsto \left(\gamma(t),\dot{\gamma}(t)\right)\in SM
\end{equation} 
where $\gamma$ is parametrized at unit speed.
The closed geodesic $\gamma$ 
defines a \textbf{distribution} in $\mathcal{D}^\prime(SM)$, denoted by 
$\delta_\gamma$, as follows~:
\begin{eqnarray}
\left\langle \delta_\gamma, f \right\rangle = \int_0^{T_\gamma} f\left(\gamma(t),\dot{\gamma}(t)\right)dt. 
\end{eqnarray}
 
Recall that any symmetric $m$-tensor $f\in C^0\left(S^mT^*M\right),$ 
can be lifted as
a function on the sphere bundle $SM$ by $\pi_m^*$ defined in subsubsection~\ref{ss:symtenshodge}.
It follows that the distributions $\delta_\gamma$ act on 
symmetric $m$-tensor in $C^0\left(S^mT^*M\right)$ as follows~:
\begin{eqnarray}\label{e:deltagamma}
\left\langle \delta_\gamma, f \right\rangle = \int_0^{T_\gamma} \left(\pi_m^*f\right)\left(\gamma(t),\dot{\gamma}(t)\right)dt.
\end{eqnarray}

Now let us recall some important properties for periodic geodesics 
on manifolds with negative curvature~:
\begin{prop}\label{p:negativecurv}
Let $(M,g)$ be a compact Riemannian manifold s.t. $g$ has negative sectional
cuvatures. Denote by $\pi_1(M)$ the free homotopy classes of loops
~\footnote{The word free means that the loops are not based.} in $M$. 
Then~:
\begin{itemize}
\item the geodesic flow has the Anosov property in particular it has one everywhere dense geodesic~\cite[Thm 17.6.2]{Katokhasselblatt},
\item the 
periodic geodesics of $g$ in $SM$ are in $1-1$ correspondence
with free homotopy classes of loops in $M$~\cite[Thm 3.8.14 p.~357]{klingenberg},
\item each geodesic $\gamma$ is the unique \textbf{minimizer of the length functional}
among $C^1$ loops in the free homotopy class $[\gamma]\in \pi_1(M)$~\cite[Thm 3.8.14 p.~357]{klingenberg}.
\end{itemize}
\end{prop}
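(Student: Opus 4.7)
The plan is to derive all three bullets from the geometry of the universal cover. Since $(M,g)$ is compact with strictly negative sectional curvatures, the pulled-back metric on the universal cover $\til{M}$ makes $\til{M}$ a Cartan--Hadamard manifold: complete, simply connected, with negative curvature. The Cartan--Hadamard theorem then gives that $\exp_x:T_xM\to \til{M}$ is a diffeomorphism for every $x$, so any two points are joined by a unique minimizing geodesic, and $\pi_1(M)$ acts freely, properly discontinuously, and cocompactly by isometric deck transformations with $M=\til{M}/\pi_1(M)$. I would use this setup as the backbone for the second and third bullets, and treat the Anosov property essentially independently via Jacobi field comparison.

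For the Anosov property, I would construct the stable and unstable subbundles $E^s,E^u\subset T(SM)$ pointwise as the Jacobi fields that decay (resp. grow) exponentially along the geodesic. The key estimate is the Rauch comparison theorem: if the sectional curvature is bounded above by $-\kappa^2<0$, then for an orthogonal Jacobi field $J$ one has $\left\|J(t)\right\|''\geqslant \kappa^2\left\|J(t)\right\|$, which forces a two-dimensional splitting of normal Jacobi fields into an exponentially contracting and an exponentially expanding line. Combined with the geodesic spray giving the flow direction, this yields the continuous hyperbolic splitting $T(SM)=\mathbb{R} X\oplus E^s\oplus E^u$ with uniform expansion/contraction. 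Density of at least one geodesic then follows from topological transitivity of Anosov flows on compact manifolds, which is a standard consequence of local product structure together with the Hopf argument (or the Anosov closing lemma combined with the non-wandering set being all of $SM$, which in turn uses Poincar\'e recurrence plus compactness). I expect this to be the main technical point to cite carefully.

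For the bijection between periodic orbits and free homotopy classes, I would exploit that every nontrivial $\varphi\in\pi_1(M)$ acts on $\til{M}$ as an axial isometry. Its displacement function $x\mapsto d_{\til g}(x,\varphi x)$ is strictly convex (convex combinations of geodesics shorten in negative curvature) and proper, so it attains its infimum on a unique geodesic line, the axis of $\varphi$, along which $\varphi$ acts by translation of length $\ell(\varphi)=\inf_x d_{\til g}(x,\varphi x)>0$. Projecting this axis to $M$ gives a closed geodesic of length $\ell(\varphi)$, and two elements yield the same unparametrized closed orbit iff they are conjugate in $\pi_1(M)$, i.e.\ iff they correspond to the same free homotopy class. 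Conversely, every periodic orbit of $X$ lifts to an axis of some deck transformation by picking a lift of the base loop. This establishes the bijection.

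For the uniqueness of the length-minimizer in a fixed class $[\gamma]\in\pi_1(M)$, let $c$ be a $C^1$ loop freely homotopic to $\gamma$ and lift it to a path $\til c$ in $\til{M}$ joining some $x_0$ to $\varphi x_0$, where $\varphi\in\pi_1(M)$ is the deck transformation associated to $[\gamma]$. Then the length of $c$ equals the length of $\til c$, which is at least $d_{\til g}(x_0,\varphi x_0)\geqslant \ell(\varphi)$; equality in the first inequality forces $\til c$ to be the unique minimizing geodesic segment from $x_0$ to $\varphi x_0$ provided by Cartan--Hadamard, and equality in the second forces $x_0$ to lie on the axis by strict convexity of the displacement function. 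Hence the minimizer in $[\gamma]$ is exactly the projection of the axis, i.e.\ $\gamma$ itself. The main obstacle in the whole proposition is clearly the Anosov bullet; the other two reduce to very clean Cartan--Hadamard arguments once the universal cover picture is in place.
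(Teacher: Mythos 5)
The paper does not actually prove this proposition: it is recalled as a classical fact with pointers to Katok--Hasselblatt (Anosov property and transitivity) and Klingenberg, Thm 3.8.14 (existence and uniqueness of the closed geodesic in each free homotopy class). Your proposal supplies the standard textbook proof underlying those citations -- Jacobi field comparison for the hyperbolic splitting, and the Cartan--Hadamard/axial-isometry picture for the other two bullets -- so it is a genuine proof where the paper offers only references; the overall architecture is correct and is essentially the argument one finds in the cited sources.

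Three points deserve tightening. First, the splitting of normal Jacobi fields is into an exponentially contracting and an exponentially expanding subspace each of dimension $d-1$, not into two lines; ``two-dimensional splitting into lines'' is literally correct only for surfaces, and since the paper works in dimensions up to $4$ you should state the general version. Second, the displacement function $x\mapsto d_{\tilde g}(x,\varphi x)$ is \emph{not} proper: its sublevel sets are unbounded tubes around the axis, so properness cannot be the reason the infimum is attained. The standard fix is to take a minimizing sequence, translate it into a compact fundamental domain by deck transformations (replacing $\varphi$ by conjugates, of which only finitely many can occur for bounded displacement by proper discontinuity), and extract a convergent subsequence; alternatively, produce the closed geodesic downstairs first by minimizing length in the free homotopy class via Arzel\`a--Ascoli and then lift it to an axis. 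Third, topological transitivity is not automatic for abstract Anosov flows (there are non-transitive examples), so you should say explicitly that here the flow preserves the Liouville measure, which has full support, whence the nonwandering set is all of $SM$ by Poincar\'e recurrence and transitivity follows from the spectral decomposition; your parenthetical gestures at this but should be made the actual argument rather than an alternative. With these repairs the proof is complete and matches what the cited theorems assert.
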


By considering the collection of 
all maps $(\delta_\gamma)_{[\gamma]\in \pi_1(M)}$, for all periodic geodesics, 
we can define the $X$-ray transform.
\begin{defi}[$X$-ray transform]
A metric $g$ with negative curvature being fixed, 
the $X$-ray transform is a linear map defined as~:
\begin{eqnarray}\label{e:Xray}
I_2: f\in C^0\left(S^2T^*M\right) \longmapsto  \left(\left\langle \delta_\gamma, f \right\rangle =\int_0^{T_\gamma} f\left(\gamma(t),\dot{\gamma}(t)\right) dt  \right)_{[\gamma]\in \pi_1(M)}
\end{eqnarray}
which maps $2$-tensors to sequences indexed by the free homotopy classes $\pi_1(M)$
of closed loops in $M$.
The map $I_2$ depends on the chosen metric $g$ since geodesics of $g$
explicitely enter in the definition of $I_2$.
\end{defi}
Note that the above $X$-ray transform is well--defined for \textbf{continuous} tensors
hence for every tensors of high enough Sobolev regularity $s>\frac{\dim(M)}{2}$
or H\"older regularity $C^{k,\alpha}$ for $k\in \mathbb{N}, \alpha\in (0,1)$.

Before we discuss the injectivity of the $X$-ray transform, we need to
introduce some geometric formalism needed in the formulation of energy identities following~\cite[section 2]{PSU15}.
The tangent bundle to $SM$ admits a direct orthogonal decomposition~:
$$T (SM ) = \mathbb{V} \perp \mathbb{H} \perp \mathbb{R}X,$$
where $\mathbb{H}$ is the horizontal bundle,  $\mathbb{V}$ is the vertical bundle, 
$X$ is the vector field generating the geodesic flow and 
$SM$ is endowed with
the Sasaki metric $\widehat{g}$ induced from the metric $g$ on the base manifold $M$. 
The Levi-Civita connection $\widehat{\nabla}$
for the Sasaki metric $\widehat{g}$ on $SM$ admits the following decomposition~:
\begin{equation}
\forall u\in C^\infty(SM), \widehat{\nabla}u=\nabla^vu+\nabla^hu+\left(Xu\right)X
\end{equation}
where $\nabla^{v,h}$ are the respective vertical and horizontal connections (the orthogonal projection of the 
connection $\widehat{\nabla}$ on the vertical and horizontal bundles), i.e. 
$\nabla^vu\in \mathbb{V}, \nabla^hu\in \mathbb{H} $.

An important result 
about $I_2$ reads~:
\begin{thm}[Injectivity of the X ray transform]
\label{t:injectXray}
Let $k\in \mathbb{N}, k\geqslant 2, \alpha\in (0,1)$.
Let $g$ be a smooth metric with negative curvature.
Then the $X$-ray transform $I_2$ defined above
restricted to solenoidal tensors in $\ker(D_g^*)$ 
of H\"older regularity
$C^{k,\alpha}$ is \textbf{injective}.
\end{thm}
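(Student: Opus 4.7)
The plan is to argue by contradiction: assume $f\in C^{k,\alpha}(S^2T^*M)\cap \ker(D_g^*)$ satisfies $I_2f=0$ and deduce $f=0$ by combining a Liv\v{s}ic type theorem with a Pestov-type energy identity on $SM$. First, by equation~\eqref{e:deltagamma}, the assumption $I_2f=0$ precisely says that the smooth function $\pi_2^*f\in C^{k,\alpha}(SM)$ integrates to zero along every periodic orbit of the geodesic vector field $X$. Since $g$ has negative sectional curvatures, Proposition~\ref{p:negativecurv} tells us that the geodesic flow is Anosov and hence transitive. Under these hypotheses, the Liv\v{s}ic theorem for Anosov flows produces a H\"older continuous function $u\in C^{\alpha'}(SM)$ (for some $\alpha'>0$) such that $Xu=-\pi_2^*f$.

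The next step is to upgrade the regularity of $u$. Because the right-hand side $\pi_2^*f$ lies in $C^{k,\alpha}(SM)$ and the geodesic flow is Anosov, the regularity theorem of de la Llave--Marco--Moriy\'on for solutions of the cohomological equation $Xu=h$ over transitive Anosov flows applies and yields $u\in C^{k,\alpha}(SM)$ (at least for the regularity we need, namely enough to run the energy identities below). At this point we have a genuine function $u$ on $SM$, of sufficient regularity, satisfying $Xu=-\pi_2^*f$ globally.

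Now decompose $u$ into spherical harmonics along the fibers of $SM\to M$, writing $u=\sum_{m\geqslant 0} u_m$ with $u_m=\pi_m^*q_m$ for $q_m\in C^\infty(S^mT^*M)$ trace-free. Using the intertwining identity~\eqref{e:identityXgeodvsD}, the equation $Xu=-\pi_2^*f$ distributes over the fiber degree: $X$ shifts by $\pm 1$ (splitting into $X_++X_-$ in Guillemin--Kazhdan's notation), and matching degree-$2$ components forces $\pi_2^*f$ to be the degree-$2$ component of $Xu$. The heart of the argument is the Pestov identity on $SM$, which in negative sectional curvature reads
\begin{equation*}
\|\nabla^v Xw\|_{L^2(SM)}^2 \;=\; \|X\nabla^v w\|_{L^2(SM)}^2 \;+\; \langle R(w),w\rangle_{L^2(SM)} \;-\; (d-1)\|Xw\|_{L^2(SM)}^2,
\end{equation*}
where $R$ is a curvature term that is nonpositive (hence $-R$ nonnegative) precisely because the sectional curvatures of $g$ are negative. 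Applied to the tail $w=\sum_{m\geqslant 2}u_m$ of the spherical harmonic expansion, this identity yields the strict inequality needed to conclude $u_m=0$ for all $m\geqslant 2$, so that $u=\pi_0^*a+\pi_1^*p$ for some $a\in C^\infty(M)$ and $p\in C^\infty(T^*M)$.

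Feeding this back into $Xu=-\pi_2^*f$ and using again the identity $X\pi_m^*=\pi_{m+1}^*D_g$, we obtain $\pi_1^*D_ga+\pi_2^*D_gp=-\pi_2^*f$; comparing the degree-$1$ and degree-$2$ parts gives $D_ga=0$, i.e. $a$ is constant, and $f=-D_gp$. Thus $f$ is \emph{potential}. But $f$ was assumed to be \emph{solenoidal}, i.e. $D_g^*f=0$. The uniqueness statement in the Hodge-type decomposition of Theorem~\ref{t:hodgemetrics} then forces $f=0$, proving the injectivity of $I_2$ on $\ker(D_g^*)\cap C^{k,\alpha}(S^2T^*M)$.

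The main technical obstacle is the regularity upgrade from the H\"older primitive furnished by Liv\v{s}ic to an object of regularity high enough to justify the integrations by parts in the Pestov identity; this is where the Anosov character of the geodesic flow in negative curvature, combined with smoothness of $\pi_2^*f$, is used in an essential way.
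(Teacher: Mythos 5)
Your argument is correct in outline, but it follows a genuinely different route from the one in the paper. Both proofs start identically: the hypothesis $I_2f=0$ plus the $C^k$ Liv\v{s}ic theorem of de la Llave--Marco--Moriy\'on for the Anosov geodesic flow produces a primitive $u\in C^k(SM)$ of $\pi_2^*f$ of enough regularity to integrate by parts. From there the paper stays with $u$ itself: it uses the solenoidal hypothesis \emph{at the outset} to write $X^2u=\pi_3^*D_gf$ with $D_gf$ trace--free, which legitimizes a second energy identity; combining that identity with the Pestov inequality in negative curvature gives a chain of inequalities forcing $Xu=0$, hence $\pi_2^*f=0$ and $f=0$ directly. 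You instead run the Guillemin--Kazhdan/Paternain--Salo--Uhlmann scheme: decompose $u$ into vertical spherical harmonics, use the degree-shifting structure of $X=X_++X_-$ and the curvature-signed Pestov identity to show $u$ has degree $\leqslant 1$, conclude that $f=-D_gp$ is \emph{potential}, and only then invoke the solenoidal hypothesis together with the uniqueness in Theorem~\ref{t:hodgemetrics} (or simply $\|f\|^2=-\langle D_g^*f,p\rangle=0$) to kill $f$. Your route proves the stronger statement that $\ker I_2$ consists exactly of potential tensors, at the cost of the heavier finite-degree/descent machinery; the paper's route is shorter for the specific solenoidal-injectivity claim because the solenoidal condition is exploited early. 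Two small points to tighten: the Pestov identity as you displayed it has the curvature term acting on $\nabla^v w$ rather than on $w$ and the $(d-1)\|Xw\|^2$ term should enter with the opposite sign (compare the inequality form used in the paper), and the step ``apply the identity to the tail $\sum_{m\geqslant 2}u_m$'' really requires the localized, degree-by-degree estimates (or the $X_\pm$ inequalities valid in negative curvature) rather than a single application to the tail; also the Fourier components $q_m$ inherit only the finite regularity of $u$, not $C^\infty$. None of these affects the validity of the strategy.
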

The above result was proved in the $C^\infty$ case by Croke--Sharafutdinov and is well--known in 
H\"older regularity $C^{k,\alpha}$ although we could not find a reference.
In the present work, we shall need the injectivity of $I_2$ acting on $g$--solenoidal
tensors of regularity $C^{k,\alpha}$ due to the loss of regularity caused by
applying the slice Theorem~\ref{l:makemetsolenoidal}.
We slightly adapt the proof following notes of Lefeuvre which are themselves heavily based on the original 
proof of~\cite{croke1998}.
\begin{proof}
The proof in~\cite{croke1998} relies on the following ingredients~:
\begin{enumerate} 
\item The Hodge like decomposition~\ref{e:decompmetric} which is well--defined in every Sobolev regularity $H^s, s\in \mathbb{R}$ or H\"older regularity $C^{k,\alpha}$~\footnote{since it relies on ellipticity of $D_g$ and pseudodifferential calculus}.
\item 
The Pestov identity
\begin{equation}
\Vert \nabla^vXu\Vert^2\geqslant \Vert X\nabla^vu\Vert^2 +d\Vert Xu\Vert^2
\end{equation}
which is well--defined for Sobolev tensors in $H^s(S^2T^*M)$
for $s\geqslant 2$ and another energy identity if $X^2u\in H^0\left(S^{m+1}T^*M \right)$ then
\begin{equation}
\Vert X\nabla^vu\Vert^2-\Vert \nabla^vXu\Vert^2=\Vert \nabla^hu\Vert^2+ ((m+d)(m+1)-d)\Vert Xu\Vert^2+ \Vert \nabla^vXu\Vert^2
\end{equation}
which is also valid for Sobolev tensors in $H^s(S^2T^*M)$
for $s\geqslant 2$. In particular, both identities are valid for tensors $u$ of regularity at least $C^2\subset H^2$ 
hence for tensors of H\"older regularity $C^{k,\alpha}$ $k\in \mathbb{N}, k\geqslant 2, \alpha\in (0,1)$.
\item The Livsic Theorem in regularity $C^k$ for all $k\in \mathbb{N}$ due to De La Llave--Marco--Moriyon~\cite[Remark p.~578]{DeLlaveMarcoMoriyon} which states that for any function $\tilde{f}\in C^k(SM)$ s.t. $ \forall [\gamma]\in \pi_1(M),\,\ \left\langle \delta_\gamma, f \right\rangle =0$, there exists $u\in C^k(SM)$ s.t. $\tilde{f}=Xu$. 
\end{enumerate}

Fix $k\geqslant 2$ so that we can use the energy identities.
The proof is almost verbatim the one of Croke--Sharafutdinov~\cite[Thm 1.3]{croke1998} using a $C^k$ version of 
Livsic Theorem instead of the $C^\infty$ version.
Assume $f\in C^{k,\alpha}(S^2T^*M)$ is solenoidal and that $I_2f=0$. Then by the inclusion 
$C^{k,\alpha}\subset C^k$ and the $C^k$ version of Livsic Theorem,
there exists $u\in C^k(SM)$ s.t. $ \pi_2^*f=Xu $. Thus $X^2u=X\pi_2^*f=\pi_3^*D_gf$ by equation~\ref{e:identityXgeodvsD} since 
$D_gf$ is trace free because $D_g^*f=0=Tr(D_gf)$.
Combining the above energy identities yields~:
\begin{eqnarray*}
0\geqslant \underset{\text{by Pestov}}{\underbrace{-d\Vert Xu\Vert^2\geqslant \Vert\nabla_X\nabla^vu\Vert^2-\Vert\nabla^v\nabla_Xu\Vert^2}}
\underset{\text{by the second energy identity}}{\underbrace{= \Vert \nabla^hu\Vert^2+ ((2+d)(2+1)-d)\Vert Xu\Vert^2+ \Vert \nabla^vXu\Vert^2\geqslant 0}} 
\end{eqnarray*}
which implies that $Xu=0$.
\end{proof}

In the sequel, for 
every negatively curved metric $g$, 
we denote by $\ell_g(\gamma)$ the length of the unique closed geodesic
$\gamma$ given a class $[\gamma]\in \pi_1(M)$. 

There is a natural map from moduli space of metrics of negative curvature to 
periods 
\begin{eqnarray}\label{e:periodmap}
[g]\in \tilde{N}\in \mathcal{R}(M)_{<0}\longmapsto \mathcal{P}([g])=\{ \ell_g(\gamma) ; [\gamma]\in \pi_1(M) \} \subset \mathbb{R}_{>0} .
\end{eqnarray}
 
The assignment $g\mapsto \ell_g(\gamma)$ depends nonlinearly 
on the metric $g$. However a classical
observation which can be found in~\cite{GuiKaz} is the relation between the differential
of the length function and the X--ray transform, for any metric $g\in \textbf{Met}(M)$ and symmetric $2$--tensor $h$, the differential
of $\ell$ at $g$ in the direction $h$ reads~:
\begin{eqnarray}
D\ell_{g}(\gamma)(h)=\frac{1}{2}
I_2(h)_{[\gamma]}.
\end{eqnarray}
Therefore, one should think about the X--ray transform $I_2$ as
a linearized version of the length function and the injectivity of $I_2$
reflects the injectivity of the nonlinear map~\ref{e:periodmap}.

After these rather long geometric preparations, we can proceed to prove 
Proposition~\ref{p:rigidity}. 
\subsection{Proof of Proposition~\ref{p:rigidity} by a contradiction argument.}  

In dimension $d\leqslant 3$ (resp $d=4$),
we assume by contradiction that the set of isospectral metrics in
$N\cap \mathcal{R}\left(M\right)_{\leqslant-\varepsilon}\cap \mathcal{G}_{\geqslant\varepsilon}$ 
(resp $N\cap \mathcal{R}\left(M\right)_{[-\varepsilon^{-1},-\varepsilon]}\cap \mathcal{G}_{\geqslant\varepsilon}$) 
has an infinite number of classes.
Therefore, we assume there exists an infinite sequence $(g^\prime_n)_n$ of smooth isospectral 
metrics on $M$
whose isometry classes $([g^\prime_n])_n$ are $2$ by $2$ distinct in $N\cap \mathcal{R}\left(M\right)_{\leqslant-\varepsilon}\cap \mathcal{G}_{\geqslant\varepsilon}$ (resp 
$N\cap \mathcal{R}\left(M\right)_{[-\varepsilon^{-1},-\varepsilon]}\cap \mathcal{G}_{\geqslant\varepsilon}$). 
So if $(g^\prime_n)_n$ is a sequence of isospectral metrics of negative curvature $\leqslant-\varepsilon$ (resp in $[-\varepsilon^{-1},-\varepsilon]$), the above compactness
Theorems~\ref{t:compactd2},~\ref{t:compactd3},~\ref{t:compact4} tell us that we may extract a subsequence such that $g^\prime_n\rightarrow g$ in the $C^\infty$--topology for some metric
$g$ of negative curvature $\leqslant -\varepsilon$ (resp in $[-\varepsilon^{-1},-\varepsilon]$). It is important to note that
the limit metric $g$ has no isometry group since its class $[g]$ belongs to $\mathcal{G}_{\geqslant \varepsilon}$ and $[g]$ belongs to $N$
since $N\cap \mathcal{R}\left(M\right)_{\leqslant-\varepsilon}\cap \mathcal{G}_{\geqslant\varepsilon} $ (resp $N\cap \mathcal{R}\left(M\right)_{[-\varepsilon^{-1},-\varepsilon]}\cap \mathcal{G}_{\geqslant\varepsilon}$) is \textbf{closed}
from the condition $\partial N\subset \partial\mathcal{G}$.
In dimension $3$, we can apply the compactness Theorem~\ref{t:compactd3} since the spectrum determines
the length of the shortest closed geodesic by Theorem~\ref{t:DGtrace}.

%

Now since $g$ has negative curvature, we can make use of the slice Theorem~\ref{l:makemetsolenoidal} and produce a new sequence $(g_n)_n$ of metrics with the following properties~:
\\
\\
\fbox{
\begin{minipage}{0.94\textwidth}
\begin{coro}\label{c:coroisometric}
There exists a sequence of metrics $(g_n=\Psi\left( g^\prime_n\right))_n$ of regularity $C^{k,\alpha}, k\geqslant 2, \alpha\in (0,1)$ such 
that $[g_n]=[g^\prime_n], \forall n\in \mathbb{N}$,
the difference $\varepsilon_n=g_n-g\in C^{k,\alpha}(S^2T^*M) \rightarrow 0$ 
is \textbf{solenoidal} w.r.t. $g$, the metrics $g_n$ all have the \textbf{same length spectrum}.
\end{coro}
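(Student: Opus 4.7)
The plan is to apply the slice Theorem~\ref{l:makemetsolenoidal} of Croke--Dairbekov--Sharafutdinov and Guillarmou--Lefeuvre to the convergent sequence $(g_n')_n$ at the basepoint $g$, and then to transport the isospectrality already established for $(g_n')_n$ through the isometric change of representative.

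Concretely, I would first fix integers $k\geqslant 2$ and $\alpha\in (0,1)$ and invoke Theorem~\ref{l:makemetsolenoidal} at the limit metric $g$: since $g$ has negative sectional curvature its geodesic flow is Anosov and in particular admits a dense orbit, so the Theorem produces a $C^{k,\alpha}$ neighborhood $\mathcal{U}$ of $g$ and a smooth map $\Psi:\mathcal{U}\subset C^{k,\alpha}(S^2T^*M)\to C^{k,\alpha}(S^2T^*M)$ such that, for every $h\in \mathcal{U}$, the tensor $\Psi(h)$ is a $C^{k,\alpha}$ metric isometric to $h$ via a $C^{k+1,\alpha}$ diffeomorphism and with $\Psi(h)-g$ solenoidal with respect to $g$. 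Because the original sequence $g_n'\to g$ in the $C^\infty$ topology by Theorems~\ref{t:compactd2}--\ref{t:compact4}, one has $g_n'\to g$ in $C^{k,\alpha}$, so $g_n'\in \mathcal{U}$ for all $n$ sufficiently large, and I may set $g_n=\Psi(g_n')$ after discarding the initial indices.

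The three properties claimed in the Corollary then follow directly. The isometry $g_n=\Phi_n^\ast g_n'$ yields $[g_n]=[g_n']$ in $\mathcal{R}(M)$. The solenoidal condition $D_g^\ast(g_n-g)=0$ is built into the definition of $\Psi$, and continuity (in fact smoothness) of $\Psi$ together with $\Psi(g)=g$ gives $\varepsilon_n=g_n-g\to 0$ in $C^{k,\alpha}(S^2T^*M)$. Finally, the length spectrum is an invariant of the isometry class of a closed Riemannian manifold (since an isometry conjugates geodesic flows on the unit sphere bundles), so $\mathcal{P}([g_n])=\mathcal{P}([g_n'])$; and the $g_n'$ were chosen isospectral for the Laplacian and have negative curvature, so Theorem~\ref{t:DGtrace} (Duistermaat--Guillemin) identifies their length spectra as the singular support of the same explicit distribution, hence $\mathcal{P}([g_n'])$ is independent of $n$. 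Combining these two facts, all the $g_n$ share the same length spectrum.

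There is essentially no obstacle beyond verifying the hypotheses of the slice theorem: the only point that requires care is ensuring the smoothing produced by Ebin--Palais type slice constructions is available in H\"older scales adapted to negatively curved metrics (rather than Sobolev/ILH scales), which is exactly what Theorem~\ref{l:makemetsolenoidal} provides; the loss of regularity from $C^\infty$ down to $C^{k,\alpha}$ is unavoidable here but harmless for the subsequent use in the proof of Proposition~\ref{p:rigidity}, because the X--ray transform injectivity of Theorem~\ref{t:injectXray} has already been stated in the $C^{k,\alpha}$ category with $k\geqslant 2$.
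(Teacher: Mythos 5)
Your proposal is correct and follows essentially the same route as the paper: apply the slice Theorem~\ref{l:makemetsolenoidal} at the $C^\infty$ limit $g$ produced by the compactness theorems, set $g_n=\Psi(g_n')$ for large $n$ to get isometric, $g$-solenoidal representatives converging to $g$ in $C^{k,\alpha}$, and transport the common length spectrum (obtained from isospectrality via Theorem~\ref{t:DGtrace}) through the isometries. The only point the paper leaves equally implicit, and which you handle correctly, is that $\varepsilon_n\to 0$ follows from the continuity of $\Psi$ and $\Psi(g)=g$.
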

\end{minipage}
}
\\
\\
It is important to note that it is no longer 
a priori true that the sequence $(g_n)_n$ is made of smooth metrics.
The \textbf{solenoidal property} 
will be very important in the sequel since we shall
use the injectivity of the X-ray transform for solenoidal tensors w.r.t. $g$.
We assume by contradiction that the sequence of metrics $g_n$ is non stationary, 
which means that the sequence $\varepsilon_n=g_n-g_0$ 
never vanishes for every $n$ and
$\varepsilon_n\rightarrow 0$
in $C^{k,\alpha}(S^2T^*M)$.
By Proposition~\ref{p:negativecurv}, for each metric $g_n$ and for every class $[\gamma]\in \pi_1(M)$, 
we shall denote by $\gamma_n$ (resp $\gamma$) the unique geodesic representative of $[\gamma]$ 
in $\pi_1\left(M\right)$
for the metric $g_n$ (resp for the metric $g$).

For each closed curve $\gamma$ in $SM$, 
we define a Radon measure $\delta_\gamma\in \mathcal{D}^\prime(SM)$ by equation~(\ref{e:deltagamma}) in subsection~\ref{s:Xray}.
Recall that a sequence $\mu_n$ of Radon measures on $SM$ is said to weakly--$*$ converge to $\mu$ if
for every continuous $\varphi\in C^0(SM)$, $\mu_n(\varphi)\rightarrow \mu(\varphi)$ when $n\rightarrow +\infty$. 
By Proposition~\ref{p:convradon} proved in
the appendix, we have the weak--$*$ convergence $\delta_{\gamma_n}\rightarrow \delta_\gamma$ of the 
Radon measures on $SM$.
By the convergence of metrics $g_n\rightarrow g$,
for every free homotopy class $[\gamma]$ in $\pi_1\left(M\right)$, for every $n$, we have the
convergence $\ell_{g_n}(\gamma_n)\rightarrow \ell_g(\gamma)$ by~\cite[Lemma 4.1 p.~11]{dang2018fried}.

\subsubsection{Inequalities satisfied by $\varepsilon_n$.}

From the fact that 
the metrics are isospectral and the length spectrum is discrete, we deduce that
$\ell_{g_n}(\gamma_n)=\ell_g(\gamma)$ for every $n\geqslant
N_\gamma$ where the integer $N_\gamma$ depends on $[\gamma]\in \pi_1(M)$.
By equation~(\ref{e:deltagamma}) defining the Radon measures
$\delta_\gamma\in \mathcal{D}^\prime(SM)$ 
carried by closed curves $\gamma$, 
the length of the curve $\gamma$ for the metric $g$ is defined as
$\ell_g(\gamma)=\delta_\gamma\left( g\right)$.
The metrics $g_n$ have negative sectional curvatures hence by Proposition~\ref{p:negativecurv}, 
every closed geodesic 
$\gamma_n$ is minimizing for $g_n$ in the class $[\gamma]\in \pi_1(M)$
which implies the 
inequality 
$\delta_{\gamma_n}\left( g_n \right) \leqslant \delta_\gamma\left( g_n \right) $.
But for $n\geqslant N_\gamma$, we find
that $\delta_{\gamma_n}\left( g_n \right)=\delta_\gamma\left( g_0 \right) $
from which we deduce the inequality
$\delta_\gamma\left( g_0 \right) \leqslant \delta_\gamma\left( g_n \right) $ 
which implies
\begin{eqnarray}\label{e:firstineq}
\delta_\gamma\left(\varepsilon_n\right) \geqslant 0, \forall n\geqslant N_\gamma.
\end{eqnarray} 
Conversely since $\gamma$ minimizes the length for $g_0$ we have a reverse inequality  
$\delta_{\gamma_n}\left( g_n \right)= \delta_{\gamma}\left( g_0 \right)\leqslant \delta_{\gamma_n}\left( g_0 \right)$
which implies the second inequality~:
\begin{eqnarray}\label{e:secineq}
\delta_{\gamma_n}\left( \varepsilon_n\right)\leqslant 0, \forall n\geqslant N_\gamma.
\end{eqnarray}

\subsubsection{Another compactness argument and conclusion of the proof of Proposition~\ref{p:rigidity}.}

Now we would like to know if we can extract a subsequence 
from $\frac{\varepsilon_n}{\Vert \varepsilon_n\Vert_\infty}\in C^{k,\alpha}(S^2T^*M)$
\textbf{with non trivial limit} so that we obtain inequalities on the $X$-ray transform 
which are independent of $n$. 
We denote by $\textbf{Met}^{k,\alpha}(M)$
the set of Riemannian metrics of H\"older regularity $C^{k,\alpha}$.
The assumption in corollary~\ref{c:coroisometric} 
means that 
there exists an abstract smooth
submanifold $\tilde{N}$ of the same dimension as $N$ near $g$ 
which contains the sequence $(g_n)_n$
and a $C^\infty$- map
$ \iota~: \tilde{N}\longmapsto \textbf{Met}^{k,\alpha}(M) $
such that $\pi\circ\iota(\tilde{N})=N$ near $[g]$ where $\pi:\textbf{Met}^{k,\alpha}(M)\mapsto \mathcal{R}(M)$
is the natural projection induced by the quotient map. 
One should think of 
$\tilde{N}$ as some kind of cover of $N$ near $[g]$.

Choose any smooth Riemannian metric $\tilde{g}$ on the finite dimensional
submanifold $\tilde{N}$ and denote by $v_n$
a sequence of tangent vectors in $T_{g_0}\tilde{N}$ such 
that
$\iota\left(\exp_{g_0}(v_n)\right)=g_0+\frac{\varepsilon_n}{\Vert \varepsilon_n \Vert_\infty}$ where
$\exp$ is the Riemannian exponential map induced by the metric $\tilde{g}$.
Since the exponential map $v\in T_{g_0}\tilde{N} \mapsto \exp_{g_0}\left(v \right)$ is 
a diffeomorphism 
near the origin whose differential at $0$ is the identity, we may find 
that the norm of the sequence $v_n$ is equivalent to
the distance $\text{dist}\left(g_0+\frac{\varepsilon_n}{\Vert \varepsilon_n \Vert_\infty},g_0\right)$.
Since $\tilde{N}$ has finite dimension and the sequence $\frac{\varepsilon_n}{\Vert \varepsilon_n \Vert_\infty}$
has sup norm $1$, the sequence of tangent vectors $v_n$ is contained in some closed bounded subset
of $T_{g_0}\tilde{N}$ which avoids $0$. 
Then by compactness of closed bounded subsets 
in \textbf{finite dimension}, we can extract a 
subsequence of
$(v_n)_n$ s.t. $v_n\rightarrow v_\infty\neq 0\in T_{g_0}\tilde{N}$.
So along this subsequence, $\frac{\varepsilon_n}{\Vert\varepsilon_n \Vert}$ has a nontrivial limit
$u=\exp_{g_0}(v_\infty)\in C^{k,\alpha}(S^2T^*M)$.
 Hence, up to 
extracting a subsequence, we may assume that
$ \frac{\varepsilon_n}{\Vert\varepsilon_n\Vert_\infty} \underset{n\rightarrow \infty}{\longrightarrow} u\in C^{k,\alpha}(S^2T^*M) $ 
in the $C^{k,\alpha}$ 
topology where $u \neq 0$ and $\Vert u\Vert_\infty=1$.

Passing to the limit in both 
inequalities \ref{e:firstineq} and \ref{e:secineq} and using the fact
that the Radon measures  $\delta_{\gamma_n}$ weakly--$*$ converges to $\delta_\gamma$,  
we find that the limit $u$ satisfies
$ I_2(u)_\gamma= \delta_\gamma\left( u \right) \geqslant 0  $ and $I_2(u)_\gamma =\delta_\gamma\left( u \right)\leqslant 0 $
hence for any free homotopy class $[\gamma]\in \pi_1(M)$, 
$ I_2(u)_\gamma =\delta_\gamma\left( u \right)=0$. The above means that $u$ is a $2$--tensor which 
belongs to the kernel of the linear map $I_2$. 
But since $u$ is solenoidal w.r.t. $g$ and the restriction of 
$I_2$ to solenoidal tensors w.r.t. $g$ is \textbf{injective} by Theorem~\ref{t:injectXray},  we conclude that $u=0$
which contradicts $u\neq 0 $ in
$C^{k,\alpha}(S^2T^*M)$. 

%

\section{Appendix.}

\subsection{Gohberg--Krein's determinants.}
\label{s:Hadamard}
Set $p=[\frac{d}{2}]+1$ and let $A$ belong to the Schatten ideal
$\mathcal{I}_p$. 
Following~\cite[chapter 9]{Simon-traceideals}, 
we shall summarize the main properties of 
Gohberg--Krein's determinants and their relation with functional traces~:
\begin{lemm}[Gohberg--Krein's determinants and functional traces]
\label{l:detregtraces}
For all $A\in \mathcal{I}_p$, the Gohberg--Krein determinant $\det_p(1+zA)$ is an \textbf{entire
function} in $z\in \mathbb{C}$ and is related to traces $Tr(A^n)$ for $n>\frac{d}{2}$ by the following
formulas~:
\begin{eqnarray*}
\boxed{\text{det}_p(1+zA)= \exp\left(\sum_{n=p}^\infty  \frac{(-1)^{n+1}z^n}{n}Tr(A^n)  \right)=\prod_k\left[(1+z\lambda_k(A))\exp\left( \sum_{n=1}^{p-1}(-1)^nn^{-1}\lambda_k(A)^n  \right)\right]}
\end{eqnarray*}
where the series $\exp\left(\sum_{n=p}^\infty  \frac{(-1)^{n+1}z^n}{n}Tr(A^n)  \right)$
converges when $\vert z\vert<\Vert A\Vert_{\mathcal{I}_p}$ and the infinite product vanishes exactly when 
$z\lambda_k(A)=-1$ with multiplicity.
\end{lemm}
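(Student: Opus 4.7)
The plan is to invoke the standard theory of Gohberg--Krein regularized determinants as developed in Simon's monograph on trace ideals (Chapter 9), and simply verify that the three expressions agree. First, I would take as \emph{definition} of $\det_p$ the Weierstrass--type infinite product on the right, and verify it defines an entire function of $z$. The key input here is the Weyl inequality $\sum_k |\lambda_k(A)|^p \leqslant C_p \Vert A\Vert_{\mathcal{I}_p}^p$: combined with the elementary estimate
\[
\Bigl|(1+w)\exp\Bigl(\sum_{j=1}^{p-1}\tfrac{(-w)^j}{j}\Bigr) - 1\Bigr| \leqslant C |w|^p \quad \text{for } |w|\leqslant \tfrac{1}{2},
\]
it shows that each factor in the infinite product is $1+O(|z\lambda_k(A)|^p)$, so the product converges uniformly on compact subsets of $\mathbb{C}$, yielding an entire function in $z$ whose zero set is exactly $\{z : z\lambda_k(A) = -1\text{ for some }k\}$ with the correct multiplicity.

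Next, I would derive the middle exponential-of-trace formula by taking the logarithm of the product on a small disk $\{|z|<\Vert A\Vert_{\mathcal{I}_p}^{-1}\}$. On such a disk each factor is nonzero and one expands
\[
\log\bigl[(1+z\lambda_k)\exp\bigl(\textstyle\sum_{j=1}^{p-1}\tfrac{(-z\lambda_k)^j}{j}\bigr)\bigr] = \sum_{n=1}^{\infty}\tfrac{(-1)^{n+1}(z\lambda_k)^n}{n} + \sum_{j=1}^{p-1}\tfrac{(-z\lambda_k)^j}{j},
\]
where the counterterm cancels exactly the first $p-1$ terms of the $\log$-series, leaving only $\sum_{n\geqslant p} \tfrac{(-1)^{n+1}(z\lambda_k)^n}{n}$. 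Summing over $k$ and exchanging sums (justified by absolute convergence from $\lambda_k \in \ell^p$ together with $|z\lambda_k|\leqslant 1/2$ for $k$ large and $z$ small) gives $\sum_{n\geqslant p}\tfrac{(-1)^{n+1}z^n}{n}\sum_k \lambda_k(A)^n$. For $n\geqslant p$, the operator $A^n$ is trace class (its singular values lie in $\ell^{p/n}\subset \ell^1$), and Lidskii's theorem identifies $\sum_k \lambda_k(A)^n = \Tr(A^n)$, yielding the middle formula on the disk. Both sides being entire in $z$ (the left by the product construction, the right as the exponential of a convergent power series), equality extends by uniqueness of analytic continuation.

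The main obstacle, such as it is, is justifying the Fubini step and the absolute convergence in the disk $|z|<\Vert A\Vert_{\mathcal{I}_p}^{-1}$: one has to bound $\sum_k \sum_{n\geqslant p} |z\lambda_k|^n/n$ by splitting the $k$-sum into finitely many large eigenvalues (handled by finite summation) and a tail where $|z\lambda_k|\leqslant 1/2$ (handled by the geometric majorization $|z\lambda_k|^n/n \leqslant |z\lambda_k|^p \cdot 2^{p-n}$). Everything else is bookkeeping, and the statement about zeros follows directly from the product representation since each factor vanishes precisely at $z=-1/\lambda_k(A)$.
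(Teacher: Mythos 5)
Your proof is correct and follows exactly the standard development of Gohberg--Krein regularized determinants in Simon's \emph{Trace Ideals} (Chapter 9), which is precisely what the paper relies on: the lemma is stated there as a summary of known results with a citation and no proof, so your Weierstrass-product definition, Weyl-inequality convergence argument, log-expansion with cancellation of the first $p-1$ terms, and Lidskii identification of $\sum_k\lambda_k(A)^n$ with $Tr(A^n)$ reproduce the intended argument. Two minor remarks: you implicitly (and correctly) repair two typos in the statement --- the product factors should involve $(z\lambda_k(A))^n$ rather than $\lambda_k(A)^n$, and the convergence condition should read $\vert z\vert\,\Vert A\Vert_{\mathcal{I}_p}<1$ --- but your passing claim that the exponential of the trace series is itself entire is neither true nor needed (it lives only on its disk of convergence; the product is what provides the entire extension).
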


\subsection{Convergence of Radon measures corresponding to closed geodesics.}

The goal of this paragraph is to show that if
$g_n\mapsto g$ in the metrics of negative curvature, then
for every free homotopy class $[\gamma]\in \pi_1(M)$, denote by $\gamma_n$ (resp $\gamma$) the unique
corresponding sequence of closed geodesic for $g_n$ (resp $g$), 
the sequence of Radon measures $\delta_{\gamma_n}$ weak--$*$ converges to $\delta_\gamma$. 
We shall use the structural stability result
of Anosov flows in the version of De La Llave--Marco--Moriyon~\cite[Thm A.2 p.~598]{DeLlaveMarcoMoriyon}. 
\begin{thm}[Structural stability]\label{t:ssLlave}
Let $(M,g)$ be a Riemannian manifold of negative curvature and set $\mathcal{M}=SM$ 
to be the sphere bundle of $M$.
We denote by $X\in C^1(T\mathcal{M})$ the geodesic vector field of the metric $g$ and by 
$C^0_X(\mathcal{M},\mathcal{M})$ the space of homeomorphisms from $\mathcal{M}$ to $\mathcal{M}$ which are $C^1$ along integral curves of $X$ and $C^0(\mathcal{M})$
denotes continuous functions on $\mathcal{M}$.
Then there exists a $C^1$ neighborhood $\mathcal{U}$ 
of $X$, a submanifold $\mathcal{N}\subset C^0_X(\mathcal{M},\mathcal{M})$ and a $C^1$ map~: 
\begin{eqnarray}
S:\mathcal{U} \longmapsto \mathcal{N} \times C^0(\mathcal{M}) \\
 Y\longmapsto \left(\Phi_Y,h_Y\right)
\end{eqnarray}
satisfying the structure equation~:
\begin{equation}
\label{e:DMMstruceq}
\boxed{(\Phi_Y^{-1*}h_Y)Y=\Phi_{Y*}X}
\end{equation}
where $\left(\Phi_X,h_X\right)=(Id,1)\in C^0_X(\mathcal{M},\mathcal{M})\times C^0(M)$.
\end{thm}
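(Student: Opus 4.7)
Plan. The statement is the classical structural stability of Anosov flows in the time-reparametrization formulation of De La Llave--Marco--Moriyon. My approach is to apply the implicit function theorem in Banach spaces to the equation
\begin{equation*}
F(Y,\Phi,h) := \Phi_{*}X - (h\circ\Phi^{-1})\,Y = 0,
\end{equation*}
viewed as an equation determining $(\Phi,h)$ as a function of $Y$ near the trivial solution $(X,\mathrm{Id},1)$; the conclusion $F(Y,\Phi_Y,h_Y)=0$ is precisely the structure equation \eqref{e:DMMstruceq}.

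First I would shrink $\mathcal{U}$ so that every $Y\in\mathcal{U}$ remains Anosov, with a hyperbolic splitting $T\mathcal{M}=E^{s}\oplus \mathbb{R}X\oplus E^{u}$ close to that of $X$; this is the classical $C^{1}$-openness of hyperbolicity via cone fields. Next I would endow $\mathcal{N}\subset C^{0}_{X}(\mathcal{M},\mathcal{M})$ with a Banach-manifold structure. Because orbit equivalences between Anosov flows are only H\"older transversally but $C^{1}$ along orbits, I parametrize a neighborhood of $\mathrm{Id}$ by vector fields $u=u^{s}+\sigma X+u^{u}$ with $u^{s,u}\in C^{0}(E^{s,u})$ and $\sigma\in C^{0}(\mathcal{M})$ of class $C^{1}$ along $X$-orbits, via $\Phi=\exp(u)$ in a fixed background metric.

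The crucial step is to invert the linearization of $F$ at $(X,\mathrm{Id},1)$. A direct computation gives
\begin{equation*}
L(u,k) = \mathcal{L}_{X}u - kX,
\end{equation*}
so one must solve $\mathcal{L}_{X}u = w + kX$ for arbitrary $w\in C^{0}(T\mathcal{M})$. Decomposing $w=w^{s}+w^{c}X+w^{u}$ along the Anosov splitting: the flow-direction component is absorbed by choosing $k=-w^{c}$ and $\sigma=0$, which is exactly the cohomological freedom that the time reparametrization $h$ buys us; on $E^{s}$ (resp.\ $E^{u}$) the equation is inverted by standard Duhamel-type integrals that converge absolutely thanks to the uniform exponential contraction (resp.\ expansion) of $D\phi^{X}_{t}$, and one checks that the resulting sections lie in $C^{0}(E^{s,u})$. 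Thus $L$ is an isomorphism from $T_{(\mathrm{Id},1)}(\mathcal{N}\times C^{0}(\mathcal{M}))$ onto $C^{0}(T\mathcal{M})$, and the implicit function theorem delivers the desired $C^{1}$ map $Y\mapsto(\Phi_{Y},h_{Y})$.

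The main obstacle is not existence but regularity management: every estimate must be carried out in anisotropic function spaces that are only $C^{0}$ transversally to the flow but $C^{1}$ along orbits, and one must verify both that the Duhamel integrals preserve such anisotropic regularity and that the nonlinear terms in $F$ are continuously differentiable in this topology. This is the technical heart of the proof and is precisely why De La Llave--Marco--Moriyon work with $C^{0}_{X}(\mathcal{M},\mathcal{M})$ rather than uniform $C^{k}$ spaces; attempting the same argument in $C^{1}(\mathcal{M},\mathcal{M})$ fails because orbit equivalences are generically not transversally differentiable.
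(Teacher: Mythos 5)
The paper does not actually prove Theorem~\ref{t:ssLlave}: it is imported verbatim as a black box from De La Llave--Marco--Moriy\'on \cite[Thm A.2 p.~598]{DeLlaveMarcoMoriyon}, and the only argument the paper supplies is the two-line manipulation converting their pointwise identity $D\Phi_Y(x,v)\left(X(x,v)\right)=h_Y(x,v)\,Y\left(\Phi_Y(x,v)\right)$ into the boxed form $\Phi_{Y*}X=(\Phi_Y^{-1*}h_Y)Y$ by evaluating at $\Phi_Y^{-1}(x,v)$. Your proposal instead reconstructs the proof of the cited theorem itself, and the strategy you describe --- an implicit function theorem applied to $F(Y,\Phi,h)=\Phi_*X-(h\circ\Phi^{-1})Y$, with the linearization inverted componentwise along the Anosov splitting $E^s\oplus\mathbb{R}X\oplus E^u$ (the flow component absorbed into the reparametrization $k$, the stable and unstable components by absolutely convergent Duhamel integrals) --- is faithful to how the reference actually argues, and you correctly identify that the genuine difficulty is carrying this out in spaces that are only $C^0$ transversally but $C^1$ along orbits, which is exactly why the target space is $C^0_X(\mathcal{M},\mathcal{M})$ rather than $C^1(\mathcal{M},\mathcal{M})$. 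Two caveats: first, the technical heart you name (that $F$ is $C^1$ between these anisotropic Banach spaces and that the Duhamel integrals preserve them) is deferred rather than executed, so as written this is an outline of the known proof rather than a self-contained one --- which is acceptable here precisely because the paper itself treats the result as a citation; second, mind the direction of integration in the Duhamel formulas (on $E^s$ one must integrate over backward time so that the pullback $(\phi_{-t})^*$ contracts), a sign detail your sketch glosses over but which does not affect the argument.
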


The equation \ref{e:DMMstruceq} follows from
~\cite[equation (e) p.~592]{DeLlaveMarcoMoriyon}
\begin{equation}
D \Phi_Y\left(x,v\right)\left(X(x,v)\right)=h_Y\left(x,v \right)Y\left(\Phi_Y(x,v)\right),
\end{equation}
this implies that
$ D \Phi_Y\left(\Phi_Y^{-1}(x,v)\right)\left(X(\Phi_Y^{-1}(x,v))\right)=h_Y\left(\Phi_Y^{-1}(x,v) \right)Y\left(x,v\right) $ hence 
$\Phi_{Y*}X=\left(\Phi_Y^{-1*}h_Y \right) Y $.
The above equation means 
that flows in a neighborhood $\mathcal{U}$ of $X$ are conjugated to the flow generated by $X$ up to reparametrization of time, more precisely
let $\varphi^t_Y:\mathcal{M}\mapsto \mathcal{M}$ denotes the flow generated by $Y\in \mathcal{U}\subset C^1(T\mathcal{M})$, then 
there exists $\tau_Y\in C^0(\mathbb{R}\times \mathcal{M})$ s.t.~:
\begin{eqnarray}
\varphi^{t}_Y(x,v)=\Phi_Y\circ \varphi_X^{\tau_Y(t,x,v)} \circ \Phi_Y^{-1}(x,v)
\end{eqnarray}
where $\tau_Y(t,x,v)\rightarrow t$ in $C^0([0,T]\times \mathcal{M})$ for all $T>0$ 
when $Y\rightarrow X$ in $C^1(T\mathcal{M})$.

A corollary of the above result 
\begin{prop}[Convergence result for Radon measures.]
\label{p:convradon}
Under the assumptions 
of Theorem~\ref{t:ssLlave}.
Let $(g_n)_n$ be a sequence of metrics of negative curvature
which converges to $g$ 
in the $C^2$ topology.
We denote by $X\in C^1(T\mathcal{M})$ (resp $X_n\in C^1(T\mathcal{M})$) the geodesic vector field of the metric $g$
(resp $g_n$).
 
Then $X_n$ is a sequence of vector fields which converges to $X$ in $C^1(\mathcal{M})$
where
for every free homotopy class $[\gamma]\in \pi_1\left(M \right)$,
there exists $N_\gamma\in \mathbb{N}$ and a unique subsequence of periodic orbits $(\gamma_n)_{n\geqslant N_\gamma} $ of the vector field $X_n$
which converges to a periodic orbit $\gamma$ of $X$.
The corresponding Radon measures $\delta_{\gamma_n}, n\geqslant N_\gamma$ 
will weak--$*$ converge to the limit  Radon measure
$\delta_\gamma$.

In particular for every $2$--tensor $h\in C^0(S^2T^*M)$, 
recall $\pi_2^*:C^0(S^2T^*M)\mapsto C^0(SM)$, then
$$ \delta_{\gamma_n}(\pi_2^*h)\rightarrow \delta_\gamma(\pi_2^*h) . $$
\end{prop}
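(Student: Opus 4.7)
The plan is to use the structural stability result Theorem~\ref{t:ssLlave} to conjugate the flow of $X_n$ to a reparametrization of the flow of $X$, so that each $\delta_{\gamma_n}$ can be written as an integral over the \emph{fixed} orbit $\gamma$ with an integrand that converges uniformly. Since $g_n\to g$ in the $C^2$--topology, the Christoffel symbols converge in $C^1$, so the geodesic vector fields satisfy $X_n\to X$ in $C^1(T\mathcal{M})$. For $n$ large enough, $X_n$ therefore lies in the neighborhood $\mathcal{U}$ of $X$ provided by Theorem~\ref{t:ssLlave}, and one obtains a pair $(\Phi_n,h_n)$ satisfying the structure equation
\[
D\Phi_n(z)\,X(z)=h_n(z)\,X_n(\Phi_n(z)),
\]
with $(\Phi_n,h_n)\to(\mathrm{Id},1)$ in $C^0(\mathcal{M},\mathcal{M})\times C^0(\mathcal{M})$ by continuity of the map $S$ (recall $(\Phi_X,h_X)=(\mathrm{Id},1)$).

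Next I would identify the closed orbits. Fix $[\gamma]\in\pi_1(M)$. Because $\Phi_n$ conjugates the $X$--flow to a time reparametrization of the $X_n$--flow, $\Phi_n(\gamma)$ is a periodic orbit of $X_n$ in $\mathcal{M}=SM$. For $n$ large, $\Phi_n$ is $C^0$--close to the identity, hence isotopic to $\mathrm{Id}$, so $\Phi_n(\gamma)$ is freely homotopic to $\gamma$ in $M$. By the $1$--$1$ correspondence between free homotopy classes and closed geodesics in negative curvature (Proposition~\ref{p:negativecurv}), one then has $\Phi_n(\gamma)=\gamma_n$, and the existence of $N_\gamma$ beyond which this applies is immediate.

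Writing the conjugation as $\varphi^t_{X_n}(\Phi_n(y))=\Phi_n(\varphi^{s(t)}_X(y))$ on $\gamma$ and differentiating in $t$, the structure equation yields the time change $dt=h_n(\varphi^s_X(y))\,ds$. Pick a base point $y=(x,v)\in\gamma$; since $\Phi_n$ is a homeomorphism, one full traversal of $\gamma_n$ (over its period $T_n$) corresponds under the change of variables to $s$ running from $0$ to $\ell_g(\gamma)$. Hence for every $f\in C^0(\mathcal{M})$,
\[
\delta_{\gamma_n}(f)=\int_0^{T_n}\!f(\varphi^t_{X_n}(\Phi_n(y)))\,dt=\int_0^{\ell_g(\gamma)}\!(f\circ\Phi_n)(\varphi^s_X(y))\,h_n(\varphi^s_X(y))\,ds.
\]
Since $f$ is uniformly continuous on the compact $\mathcal{M}$ and $(\Phi_n,h_n)\to(\mathrm{Id},1)$ uniformly, the integrand converges uniformly to $f(\varphi^s_X(y))$; dominated convergence then gives $\delta_{\gamma_n}(f)\to\delta_\gamma(f)$. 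The final assertion is obtained by taking $f=\pi_2^*h$, which is continuous on $\mathcal{M}$ whenever $h\in C^0(S^2T^*M)$.

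The only subtle point is the identification $\Phi_n(\gamma)=\gamma_n$: structural stability by itself only produces \emph{some} periodic orbit of $X_n$ close to $\gamma$, and one needs the free homotopy rigidity of geodesics in negative curvature to match that orbit with the specific closed geodesic $\gamma_n$ prescribed by the statement. Everything else is a direct change of variables plus uniform convergence, both of which are standard once the structure equation is in hand.
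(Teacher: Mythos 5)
Your proof is correct and follows essentially the same route as the paper: both rest on the De La Llave--Marco--Moriy\'on structural stability theorem, the identification of $\gamma_n$ with the image of $\gamma$ under the conjugating homeomorphism, and the uniform convergence $(\Phi_n,h_n)\to(\mathrm{Id},1)$. The only cosmetic difference is that you absorb the time reparametrization into the density $h_n$ and integrate over the fixed interval $[0,\ell_g(\gamma)]$ (so the convergence of periods falls out for free), whereas the paper keeps the integral over $[0,\ell_{g_n}(\gamma_n)]$ and separately cites the convergence of periods; you also make explicit the free-homotopy argument needed to identify $\Phi_n(\gamma)$ with the closed geodesic $\gamma_n$, a point the paper leaves implicit.
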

\begin{proof}
Let $f\in C^0(SM)$ be a continuous test function. Denote by $\varphi_n^t$ (resp $\varphi^t$) the flow
generated by $X_n$ (resp $X$) on $SM$.
By definition
$\delta_{\gamma_n}\left(f\right)=\int_0^{\ell_{g_n}(\gamma_n)} f\circ\varphi_n^t(x_n,v_n)dt$
for any $(x_n,v_n)\in\gamma_n$.
The existence of the sequence 
$\gamma_n\rightarrow \gamma$ is a simple
consequence of structural stability.
Let $\Phi_n\in C^0_X(M,M)$ denotes the sequence of
homeomorphisms
conjugating the two flows whose existence comes from Theorem~\ref{t:ssLlave}~: 
$$\varphi^t_n(x,v)= \Phi_n \circ \varphi^{\tau_n(t,x,v)}\circ \Phi_n^{-1} (x,v) $$
where $\tau_n(t,x,v) \rightarrow t$ uniformly on $[0,T]\times SM$ for all $T>0$ and 
$\Phi_n\rightarrow Id$ in $C^0(SM)$. Therefore for every $(x,v)$ on the periodic orbit
$\gamma$, the sequence $\left(x_n,v_n\right)=\Phi_n\left(x,v\right)$
lies in the periodic orbit $\gamma_n$ by structural stability and converges to $(x,v)$.
It follows that when $n\rightarrow +\infty$,
$$\delta_{\gamma_n}\left(f\right)=\int_0^{\ell_{g_n}(\gamma_n)} f\circ\varphi_n^t(x_n,v_n)dt=\int_0^{\ell_{g_n}(\gamma_n)} f\circ\Phi_n \circ \varphi^{\tau_n(t,x_n,v_n)}(x,v) dt\underset{n\rightarrow +\infty}{\rightarrow} \int_0^{\ell_g(\gamma)} f \circ \varphi^{t}(x,v) dt$$
by dominated convergence and
since the periods
$\ell_{g_n}(\gamma_n) \underset{n\rightarrow +\infty}{\rightarrow} \ell_g(\gamma)$
converge~\cite[Lemma 4.1 p.~11]{dang2018fried} and $\frac{1}{2}\ell_g(\gamma) \leqslant \ell_{g_n}(\gamma_n)\leqslant 2\ell_g(\gamma) $
for all $n\geqslant N_\gamma$~\cite[Remark 3]{dang2018fried}. 
It follows that the sequence
of Radon measures $\delta_{\gamma_n}$ will weak--$*$ converge to the limit Radon measures
$\delta_\gamma$.
\end{proof}

\appendix

\end{document}